\documentclass[preprint,12pt]{elsarticle}




\usepackage{amssymb}
\usepackage{amsmath}
\usepackage{amsthm}
\usepackage{natbib}



\newcommand{\uu}{\mathbf{u}}
\newcommand{\vv}{\mathbf{v}}
\newcommand{\q}{\mathbf{q}}
\newcommand{\p}{\mathbf{p}}
\newcommand{\rr}{\mathbf{r}}
\renewcommand{\a}{\mathbf{a}}

\newcommand{\X}{\mathbf{X}}
\newcommand{\Y}{\mathbf{Y}}
\newcommand{\TS}{\boldsymbol{\mathcal{S}}}

\newcommand{\Var}{\text{Var}}
\newcommand{\Cov}{\text{Cov}}

\newcommand{\ParSumVar}{T}

\newcommand{\E}{\text{E}}
\newcommand{\V}{\mathbf{V}}
\newcommand{\Vecc}{\texttt{vec}}

\newcommand{\sgn}{\texttt{sgn}}

\newcommand{\CS}{\textrm{Count-Sketch}}
\newcommand{\HCS}{\textrm{Higher Order Count Sketch}}
\newcommand{\cs}{\textrm{CS}}
\newcommand{\csb}[1]{\mathbf{CS(#1)}}

\newcommand{\hcs}{\textrm{HCS}}
\newcommand{\hcsb}[1]{\mathbf{HCS(#1)}}
\newcommand{\etal}{\textit{et al.}}
\newcommand{\aka}{\textit{a.k.a.}}
\newcommand{\ELSH}{\textrm{E2LSH}}

\newcommand{\CSELSH}{\textrm{CS-E2LSH}}

\newcommand{\HCSELSH}{\textrm{HCS-E2LSH}}

\newcommand{\SRP}{\textrm{SRP}}
\newcommand{\CSSRP}{\textrm{CS-SRP}}
\newcommand{\HCSSRP}{\textrm{HCS-SRP}}

\newcommand{\Tr}{\textrm{Tr}}

\newcommand{\Sig}{\boldsymbol{\Sigma}}
\renewcommand{\S}{\mathbf{S}}
\renewcommand{\Pr}{\text{Pr}}

\usepackage{mathtools}
\usepackage[utf8]{inputenc} 
\usepackage[T1]{fontenc}    
\usepackage{hyperref}       
\usepackage{url}            
\usepackage{booktabs}       
\usepackage{amsfonts}       
\usepackage{nicefrac}       
\usepackage{microtype}      
\usepackage{xcolor}         
\usepackage{graphicx}
\usepackage{adjustbox}
\usepackage{amssymb, bm}
\usepackage{stmaryrd}
\usepackage{bbm}
\usepackage{float}
\usepackage{soul}
\usepackage{enumerate,enumitem}
\allowdisplaybreaks

\newtheorem{thm}{Theorem}

 \newtheorem{definition}[thm]{Definition}
 
  \newtheorem{cor}[thm]{Corollary}
  \newtheorem{remark}[thm]{Remark}
\usepackage{lineno,hyperref}

\newcommand{\R}{\mathbb{R}}
\newcommand{\Z}{\mathbb{Z}}

\newcommand\numberthis{\addtocounter{equation}{1}\tag{\theequation}}

\begin{document}

\begin{frontmatter}



\title{Faster and Space Efficient Indexing for Locality Sensitive Hashing }
\author[aff]{Bhisham Dev Verma\fnref{label2}}
\ead{bhishamdevverma@gmail.com}
\author[aff1]{Rameshwar Pratap}
\ead{rameshwar@cse.iith.ac.in}
\affiliation[aff]{organization={Indian Institute of Technology Mandi},
            state={Himachal Pradesh},
            country={India}}
\affiliation[aff1]{organization={Indian Institute of Technology Hyderabad},
            state={Telangana},
            country={India}}




\begin{abstract}
This work suggests faster and space-efficient index construction algorithms for LSH for  Euclidean distance (\textit{a.k.a.}~\ELSH) and cosine similarity (\textit{a.k.a.}~\SRP).  The index construction step of these LSHs relies on grouping data points into several bins of hash tables based on their hashcode.   To generate an $m$-dimensional hashcode of the $d$-dimensional data point, these  LSHs first project the data point onto a $d$-dimensional random Gaussian vector and then discretise the resulting inner product.  The time and space complexity of both  \ELSH~and \SRP~for computing an $m$-sized hashcode of a $d$-dimensional vector is $O(md)$, which becomes impractical for large values of $m$ and $d$. To overcome this problem, we propose two alternative LSH hashcode generation algorithms both for Euclidean distance and cosine similarity, namely, \CSELSH, \HCSELSH~and \CSSRP, \HCSSRP, respectively.  \CSELSH~and \CSSRP~are based on count sketch \cite{count_sketch} and  \HCSELSH~and \HCSSRP~utilize  higher-order count sketch \cite{shi2019higher}. These proposals significantly reduce the hashcode computation time from $O(md)$ to $O(d)$. Additionally, both \CSELSH~and  \CSSRP~reduce the space complexity from $O(md)$ to $O(d)$; ~and \HCSELSH, \HCSSRP~ reduce the space complexity from  $O(md)$ to $O(N \sqrt[N]{d})$ respectively, where $N\geq 1$ denotes the size of the input/reshaped tensor. 
Our proposals are backed by strong mathematical guarantees, and we validate their performance through simulations on various real-world datasets.
\end{abstract}



\begin{keyword}
Nearest Neighbor Search \sep Locality Sensitive Hashing \sep Signed Random Projection \sep Count Sketch \sep Tensor \sep Random Projection.



\end{keyword}

\end{frontmatter}

\section{Introduction}
Approximate nearest neighbor (ANN) search is one of the most common subroutines in several real-life applications such as recommendation engines, duplicate detection, clustering, etc.
For a given query point, the ANN problem allows it to return a point within a distance of some constant times the distance of the query point to its actual nearest neighbor. The seminal work of Indyk and Motwani~\cite{indyk1998approximate} suggested an algorithmic framework, namely Locality Sensitive Hashing (LSH),  to solve this problem. The idea of LSH~\cite{indyk1998approximate} is based on 
designing a family of hash functions that group points into several buckets of hash tables such that with high probability, it maps  "\textit{similar}" points in the same bucket and "\textit{not so similar}" points in different buckets.
Consequently, this restricts the search space for a query point to that particular bucket in which it falls, leading to a sublinear search algorithm. We refer to the step of arranging points into hash tables as the indexing step. 
LSH functions have been studied for different similarity/distance measures, \textit{e.g.}, cosine similarity~\cite{charikar2002similarity,ji2012super, yu2014circulant, andoni2015practical,kang2018improving,dubey2022improving}, Euclidean distance~\cite{datar2004locality, DBLP:conf/focs/AndoniI06}, Hamming distance~\cite{indyk1998approximate,gionis1999similarity}, ~Jaccard similarity~\cite{BroderCFM98,b-bit,OPH,DOPH}, and many more. Further, such results have been utilized in a wide range of applications, \textit{e.g.}, near-duplicate detection~\cite{das2007google},  clustering~\cite{koga2007fast,cochez2015twister}, audio processing~\cite{ryynanen2008query,yu2010combining}, image/video processing~\cite{kulis2009kernelized,xia2016privacy}, blockchain~\cite{zhuvikin2018blockchain}, etc.

 The sublinear search time of the LSH algorithm comes at the cost of (preprocessing) time required to construct the indexing/hash tables.  The indexing (\textit{a.k.a.} hash table construction) step requires arranging data points into several bins based on their hashcode. We elaborate it for LSH for Euclidean distance (\textit{a.k.a.} \ELSH)~\cite{datar2004locality,indyk1998approximate}. In the \ELSH~algorithm, a hashcode of a data point is computed by projecting a $d$-dimensional data point on a random vector whose elements are sampled from $\mathcal{N}(0, 1)$. The resultant inner product is discretized by adding a random number (let's say $b$) uniformly sampled from  $[0,w]$, dividing it by $w$, and finally computing the floor (see Definition~\ref{def:e2lsh}).
 The process is independently repeated $m$ times to generate an $m$-sized hashcode of the data point. This can be considered as projecting a $d$-dimensional input vector on a $d\times m$ random projection matrix whose each entry is sampled from a Gaussian distribution, followed by discretizing the components of the resultant vector. Therefore, to generate an $m$ sized hashcode of a datapoint, \ELSH~ requires $O(md)$  running time and space. Similarly, signed-random-projection (\textit{a.k.a.} SRP)~\cite{charikar2002similarity}, which is an LSH for real-valued vectors for cosine similarity, also requires $O(md)$  running time and space to generate an $m$-sized hashcode of the input vector. The idea of SRP is similar to \ELSH,  where the $d$-dimensional input vector is projected on $d$-dimensional random Gaussian vectors, and the hashcode is assigned to $0$ or $1$ based on the sign of the resultant inner product (see Definition~\ref{def:srp}).

%

The time and space required by the~\ELSH ~and SRP in the preprocessing step become prohibitive for high-dimensional datasets. This work addresses this challenge and suggests a simple, intuitive algorithmic approach that significantly improves the indexing step's time and space complexity. 
At a high level, our idea is to use a super sparse projection matrix for projecting the input vector, followed by discretizing the resultant inner product. We use the count-sketch based projection matrix~\cite{charikar2002similarity} for this purpose. This reduces the preprocessing step's space and execution time to $O(d)$ in contrast to the $O(md)$ space and time needed by \ELSH~and SRP to produce an $m$-sized hashcode. We give another result that gives a more space-efficient algorithm. We use the recently proposed higher order count sketch (HCS)~\cite{shi2019higher} for this purpose.  

At the core, our idea is to visualize the input vector as a high-order tensor, compress it using higher order count sketch~\cite{shi2019higher}, discretize the resultant tensor, and finally unfold the tensor to a vector that serves as the hashcode of the input vector. For example, for $N=2$, where $N$ denotes the order of tensor,  we reshape the $d$-dimensional input vector as a matrix of size $\sqrt{d}\times \sqrt{d}$. \footnote{{
For ease of analysis, we consider the same dimension along each mode of the reshaped tensor as well as the compressed (HCS) tensor \textit{i.e.}, $d^{\frac{1}{N}}$ and $m^{\frac{1}{N}}$ respectively. However, our results can easily be extended to different values of mode dimensions}.} We then compress this matrix using a higher order count sketch and obtain an $\sqrt{m}\times \sqrt{m}$ matrix. This compression step can be considered as left multiplying the input matrix with a count-sketch projection matrix of size $\sqrt{m}\times \sqrt{d}$, and right multiplying it with a count-sketch matrix of size $\sqrt{d}\times \sqrt{m}$. We then discretize the resultant $\sqrt{m}\times \sqrt{m}$  matrix, and finally reshape this matrix and obtain the desired $m$-sized hashcode of the input $d$-dimensional vector. In both results,  we show that the probability of hash collision is monotonic to the pairwise similarity between input points. Our key contributions are summarized as follows:
\begin{itemize}[noitemsep,topsep=4pt]
    \item We suggested two proposals to improve the running time and space complexity of the \ELSH. 
Our first proposal \CSELSH~(Definition~\ref{def:CS_Eucl_LSH}) exploits the count-sketch matrix and reduces the space and sketching time complexity from $O(md)$ to $O(d)$, where $d$ denotes the dimension of the input vector, and $m$ denotes the size of hashcode. Our second proposal \HCSELSH~(Definition~\ref{def:HCS_Eucl_LSH}), exploits higher order count sketch (HCS)~\cite{shi2019higher}~tensor and reduces the time complexity to $O(d)$ from $O(md)$, and space complexity to $O(N \sqrt[N]{d})$ from $O(d)$,
 {\color{black} $O(N \sqrt[N]{d}) \ll O(d)$ for $N = o(d)$. }

    
    \item Building on the same ideas, we suggest two improvements for SRP, namely  \CSSRP~(Definition~\ref{def:cs_consine_lsh}) and $\HCSSRP$~(Definition~\ref{def:hcs_consine_lsh}). They exploit the count-sketch matrix and higher order count sketch tensor, respectively.
    \CSSRP~ reduces the space and time complexity to $O(d)$ from $O(md)$ (of SRP), whereas, \HCSSRP~reduces time complexity to $O(d)$ and space complexity to $O(N\sqrt[N]{d})$, where $O(N\sqrt[N]{d}) \ll O(d) $ for $N = o(d)$.
\end{itemize}

\ul{We would like to reiterate that our main goal is to reduce the time and space complexity of the indexing step (\textit{i.e.} the computation involved in creating the hash tables), and not on improving the query time that is often considered in the literature}~\cite{DBLP:conf/focs/AndoniI06,zheng2020pm}. We note that improving the time and space complexity of the indexing step is equally important and has been studied in \cite{dasgupta2011fast} for LSH for Euclidean distance and in \cite{andoni2015practical} for LSH for cosine similarity. We compare our proposals with them in Table \ref{tab:tab1}. {\color{black} It is worth noting that \cite{dubey2022improving} also suggested improving SRP using the count-sketch based projection matrix. However, their proof techniques are different from ours. They first show that the entries of the count-sketch matrix follow a sparse Bernoulli distribution and utilize it to prove that the coordinates of the compressed vectors follow the normal distribution under the limiting condition of   $d \to \infty$. However, our proof techniques are more generic, exploit central limit theorems,  and can be extended to higher-order count sketches.
}

\section{Related work} \label{sec:related_work}

 This section presents the baseline algorithms and compares our results with them.
 
  \paragraph*{\textcolor{black}{\textbf{LSH for Euclidean distance:}}}
  The seminal work of~\cite{indyk1998approximate,datar2004locality} suggests a sublinear time algorithm for the approximate nearest neighbor search problem for Euclidean distance. To compute an $m$-sized hashcodes of $d$ dimensional input vectors, their algorithm can be considered as projecting the input vectors on a $d\times m$ Gaussian matrix, followed by discretizing the resultant vectors. The time and space complexity of their approach (per data point) is $O(md)$. 
   
  Dasgupta \etal~\cite{dasgupta2011fast} suggested a fast locality-sensitive hash function for Euclidean distance based on the preconditioning via a sparse projection matrix and applying a randomized Fourier transform. Their proposal reduced the projection time to $O(d\log d + m)$. However, the space complexity of their proposal remains the same as that of \ELSH, \textit{i.e.}, $O(md)$.

  Our work focuses on improving the projection step by using the super sparse count-sketch projection matrix, which has only $d$ non-zero entries, one per column, instead of the dense Gaussian matrix used in~\cite{indyk1998approximate}. As a consequence, the running time and space complexity of our algorithm is $O(d)$ - achieving $O(m)$ times speedup/space saving as compared to~\cite{indyk1998approximate}. We further show the probability of collision of hash code for a data pair remains monotonic to their similarity. However, our results hold when \textcolor{black}{$m = o \left(d^{\frac{\delta}{2(2+\delta)}}\right)$, for some $\delta >0$.} 

We give another proposal that further improves the space complexity. Our idea is to view the input vector as a higher order tensor; \textit{e.g.} a $d$-dimensional vector is viewed as a $N$-order tensor {each of size $d^{1/N}$}. We then project this tensor on a suitably chosen high-order projection tensor~\cite{shi2019higher}, followed by discretizing entries of the resultant vector. The space complexity of our proposal is $O(Nd^{1/N})$, as compared to $O(md)$ space complexity of ~\cite{indyk1998approximate}. The running time of our proposal is $O(d)$. However, our result holds when \textcolor{black}{$\sqrt{m} N^{\left(\frac{4}{5}\right)} = o\left(d^{\left(\frac{3N-8}{10N}\right)} \right)$.} 

{\color{black}
The work of \cite{DBLP:conf/focs/AndoniI06} proposes ball-carving LSH for Euclidean distance, which accelerates query time compared to \ELSH. Their approach, like E2LSH, relies on space partitioning via random projections. Additionally, several query-centric variants—such as  SRS~\cite{sun2014srs}, QALSH~\cite{huang2015query}, R2LSH~\cite{lu2020r2lsh}, LCCS-LSH~\cite{lei2020locality}, PM-LSH~\cite{zheng2020pm}, VHP~\cite{lu2020vhp} and DB-LSH~\cite{tian2023db} —have been introduced that offers faster query time LSH for  Euclidean distance. At a high-level, these methods are also based  on randomized space partitioning techniques. We believe that our techniques can also be applied to these methods to give a space-efficient and faster preprocessing time algorithm. Pursuing this integration constitutes a promising direction for future work.
}

   \paragraph{\textbf{LSH for Cosine Similarity:}} Sign Random projection (SRP)~\cite{charikar2002similarity} gives LSH for real-valued vectors focusing on the pairwise cosine similarity and is also based on the idea of random projection followed by discretization of the resultant vector. Their random projection step is identical to that of \ELSH, and at the discretization step, each component of the resulting vector is assigned a value, either $0$ or $1$, based on their sign. Consequently, the SRP's time and space complexity becomes $O(md)$ per data point to compute an $m$-dimensional hashcode of a $d$ dimensional vector. Similar to the case of \ELSH, here also we give two proposals that improve the time and space complexity of SRP. Our first proposal is based on the count-sketch projection matrix, whereas the second proposal is based on the idea of a higher-order count sketch.  Our proposals are almost similar to those of improved variants of \ELSH. The only change occurs in the discretization step, which is done by considering the sign of the resultant vector. Improvement in the space and time complexity for both proposals remains identical to what we obtained in improved variants of \ELSH.

   We note that improving SRP using the count-sketch projection matrix has been recently proposed in~\cite{dubey2022improving}, and achieves the same improvement as our first proposal \CSSRP. However, we would like to highlight that our analysis technique differs significantly from that proposed in ~\cite{dubey2022improving} and extends for higher order count sketch projection matrix as well.

In addition, several improvements for SRP exist in the literature. We list a few notable results as follows: Circulant Binary Embedding (CBE)~\cite{yu2014circulant} suggests a faster and space-efficient algorithm that gives an unbiased estimate of the pairwise cosine similarity. Their core idea is to use a carefully designed circulant matrix for projection that improves the space and time complexity to $O(d)$ and $O(d \log d)$, respectively. Superbit LSH~\cite{ji2012super} offers a more accurate algorithm for pairwise cosine similarity estimation. Their idea is to create a projection matrix whose columns are orthonormal to each other. However, the accurate estimation comes at the cost of higher running time required to generate orthonormal vectors. The work of Kang \textit{et al.}~\cite{kang2018improving} offers a more accurate similarity estimation by exploiting the maximum likelihood estimation technique. However, this improved similarity estimation comes at the cost of higher running time. Table~\ref{tab:tab1} summarizes the comparison of space and time complexity among LSH families for cosine similarity and Euclidean distances.

 Several other locality-sensitive hash families (and improved versions) exist for different data types and underlying similarity/distance measures. We discuss a few of them here. Minwise independent hashing (\aka~Minhash)~\cite{BroderCFM98} and its improved variants~\cite{b-bit,OPH,DOPH} suggest LSH for sets (or binary vectors) approximating Jaccard similarity.  The result of \cite{indyk1998approximate,gionis1999similarity} suggested LSH for binary data that approximate hamming distance. Datar \textit{et al.} ~\cite{datar2004locality} gives LSH for real-valued vectors estimating pairwise $\ell_p$ distance, for $p \in [0,2]$. Our techniques do not seem trivially extend to these LSH families. We leave this as possible open directions of the work. 

\begin{table}
  \centering
  \caption{Comparison of space and time required by different LSH families to compute $m$-dimensional hashcode of a $d$-dimensional vector. $N$ denotes the order of \HCS~ tensor.
In CSSRP-L, $l$ denotes the number of non-zero entries in each column of CSSRP-L matrix. }
\vspace{0.1cm}
  \resizebox{0.9\columnwidth}{!}{%
    \begin{tabular}{lcc}
    \toprule
    \multicolumn{3}{c}{\textbf{LSH for Euclidean Distance}} \\
    \midrule
    \textbf{LSH Family} & \textbf{Time Complexity } & \textbf{Space Complexity} \\
    \midrule
    \ELSH~\cite{datar2004locality} &   $O(md)$    & $O(md)$ \\
    ACHash~\cite{dasgupta2011fast} & $O(d \log d + m)$ & O(md)\\
    \CSELSH~(Definition~\ref{def:CS_Eucl_LSH}) &     $O(d)$  & $O(d)$  \\
    \HCSELSH~(Definition~\ref{def:HCS_Eucl_LSH}) &  $O(d)$     &  $O(N\sqrt[N]{d})$\\
    \midrule
    \multicolumn{3}{c}{\textbf{LSH for Cosine Similarity}} \\
    \midrule
    SRP~\cite{charikar2002similarity}   &  $O(md)$     &  $O(md)$ \\
    CBE~\cite{yu2014circulant}   &  $O(d \log d)$     &  $O(d)$ \\
    Superbit~\cite{ji2012super} &  $O(m^2d)$ &$O(md)$ \\
    MLE~\cite{kang2018improving}   &  $O(md)$     & $O(md)$  \\
    CSSRP~\cite{dubey2022improving} &   $O(d)$    &  $O(d)$\\
    CSSRP-L~\cite{dubey2022improving} &     $O(ld)$  & $O(ld)$ \\
    \CSSRP~(Definition \ref{def:cs_consine_lsh}) & $O(d)$      &  $O(d)$\\
    \HCSSRP~(Definition \ref{def:hcs_consine_lsh}) &   $O(d)$    & $O(N\sqrt[N]{d})$ \\
    \bottomrule
    \end{tabular}%
    }
      \label{tab:tab1}
\end{table}%
\section{Notation and Background}  \label{sec:background}
  We denote vectors by bold small case letters, matrices by bold capital letters, and tensors by bold calligraphic letters, \textit{e.g.}, $\p = (p_1, \ldots, p_d) \in \R^{d}$ denotes a $d$-dimensional vector, $\mathbf{H} \in \R^{m \times d}$ represents a $m \times d$ matrix and $\TS \in \R^{d_1 \times d_2 \times \cdots \times d_{N}}$ a $N$-order tensor with dimension $d_{k}$ on each mode, for $k \in [N]$. For a vector $\p$, $||\p||$ and $||\p||_{\infty}$ denotes its $\ell_2$ and $\ell_{\infty}$ norm, respectively. 

\subsection{Approximate Nearest Neighbor Search (ANN)  and its solution via LSH} 
For a given query point, the approximate nearest neighbor search allows returning points within a distance of some constant (say $c>1$) times the distance of the query to its actual nearest neighbor. One of the popular approaches for solving the ANN problem is randomized space partitioning via Locality-Sensitive-Hashing (LSH)~\cite{indyk1998approximate}. The work of Indyk and Motwani~\cite{indyk1998approximate} suggests  LSH to construct a data structure that ensures a constant probability of finding the nearest neighbor if it is there.
For a metric space $(X, D)$ and a set $U$, their  LSH family is defined as follows:
\begin{definition}[Locality Sensitive Hashing (LSH)~\cite{indyk1998approximate}]~\label{def:LSH}
 A family $\mathcal{H} = \{h:X \rightarrow U\}$, is called $(R_1, R_2, P_1, P_2)$-sensitive family, if ~$\forall$ $\p,\q \in X$ the following conditions hold true for $P_1>P_2$ and $R_1 < R_2$:\\
    ~~$~~~~\bullet$ if $D(\p,\q) \leq R_1$, then $\Pr[h(\p) = h(\q)] \geq P_1 $,\\
    ~~$~~~~\bullet$ if $D(\p,\q) \geq R_2$, then $\Pr[h(\p) = h(\q)] \leq P_2$. 
\end{definition}

The work of Indyk and Motwani~\cite{indyk1998approximate} described how to use LSH to construct a data structure that ensures a constant probability of finding the nearest neighbor if it is there, as follows: Given a $(R_1, R_2, P_1,P_2)$-sensitive hash family $\mathcal{H} = \{h : X \rightarrow U \}$ as defined in Definition~\ref{def:LSH}, for $R_1 = R$ and $R_2 = c R$, we can increase the magnitude of the difference between probabilities $P_1$ and $P_2$ by concatenating the several independent copies of the hash function $h(\cdot)$. For any integer $m > 1$, define a new family of hash function $\bar{\mathcal{H}} =\{\hbar: X \rightarrow U^{m}\}$ as follows $\hbar(\p) = (h_{1}(\p), \ldots, h_{i}(\p), \ldots, h_{m}(\p))$, where $h_{i}(\cdot)\in \mathcal{H}$. For an integer $\tau$, independently and uniformly at random sample $\tau$ functions from $\mathcal{\bar{H}}$ \textit{i.e.} $\hbar_{1}, \ldots, \hbar_{\tau}$. Let $P$ be an input point set of interest, store every point $\p \in P$ to the bucket $\hbar_{t}(\p)$ for $t \in [\tau]$. For any query point $\q \in \R^d$, hash it using $\hbar_{t}(\q)$ for all  $t \in [\tau]$,  collect the index of all the elements of set $P$ hashed in $\hbar_{t}(\q),~ t \in [\tau]$. Check first $3\tau$ indexed whether there is an element say $\p$  such that $D(\p,\q) \leq R_2$, if yes then return ``\textit{Yes}'' and $\p$, otherwise return ``\textit{No}''. For a detailed description, refer to \cite{indyk1998approximate}.
\\





\textbf{\ELSH:} 
Datar \etal~\cite{datar2004locality} proposed a family of LSH functions for Euclidean distance called \ELSH~and is defined as follows:

\begin{definition}[{\ELSH~\cite{datar2004locality}} ]~\label{def:e2lsh}
Let $\p \in \R^d$ and $\rr$ be a $d$-dimensional vector such that each entry of $\rr$ is a sample from the standard normal distribution. In the following, we define a hash function $h_{\rr, b}: \R^d \ \rightarrow \mathbb{Z}$ that maps a $d$-dimensional vector to an integer (hashcode):
\begin{align}
h_{\rr, b} (\p) &= \left \lfloor  \frac{\langle \rr, \p \rangle + b}{w} \right \rfloor \numberthis \label{eq:datar_lsh}
\end{align}
where, $b \in [0,w]$ uniformly at random and $w>0$.
\end{definition}

Let $\p, \q \in \R^{d}$ and $R := || \p - \q||$, then from the above defined function (Equation~\eqref{eq:datar_lsh})  we can easily validate the following: 
\begin{align}
   p(R) =  \Pr[h_{r,b}(\p) = h_{r,b}(\q)] &= \int_{0}^{w} \frac{1}{R} \, f\left(\frac{t}{R}\right) \, \left(1 - \frac{t}{w} \right) dt
\end{align}
where $f(\cdot)$ represents the probability density function of the absolute value of the standard normal distribution. Moreover, as per Definition~\ref{def:LSH},  the hash function defined above in Equation~\eqref{eq:datar_lsh} is $(R_1,R_2, P_1,P_2)$-sensitive for $P_1 = p(1)$, $P_2 = p(R)$ and $R_2/R_1 = R$; and its evaluation requires $O(d)$ operations.\\ 

\textbf{Sign Random Projection (SRP/SimHash):} 
The work of Charikar~\cite{charikar2002similarity} introduced a random projection-based LSH for cosine similarity, popularly known as Sign Random Projection (SRP). We state it as follows:

\begin{definition}[{Sign Random Projection  (SRP)~\cite{charikar2002similarity}}] \label{def:srp}  
Let $\mathbf{u} \in \R^d$. We define \SRP~hash function $\zeta: \R^d \rightarrow \{0,1\}$  as 
$\zeta(\mathbf{u}) = \sgn(\langle\rr, \mathbf{u} \rangle)$, 
where, $r_{j} \sim \mathcal{N}(0,1)~ \forall~j \in [d]$ and  $\sgn\left(\langle \rr, \mathbf{u} \rangle \right) = 1$ if $\langle \rr, \mathbf{u} \rangle > 0$, otherwise, $\sgn\left(\langle \rr, \mathbf{u} \rangle \right) = 0$.
\end{definition}

Let $\mathbf{u}, \mathbf{v} \in \R^d$ and $\theta_{(\mathbf{u},\mathbf{v})}=\cos^{-1}\left(\frac{\mathbf{u}^T \mathbf{v}}{\|\mathbf{u}\| \|\mathbf{v}\|}\right)$ denotes the angular similarity between $\mathbf{u}$ and $\mathbf{v}$. Then from~\cite{goemans1995improved,shrivastava2014defense},  the probability of collision satisfies the following equation: 
\begin{align}
    \Pr\left[ \sgn(\langle \rr,\mathbf{u} \rangle) = \sgn(\langle \rr, \mathbf{v}\rangle)\right] = 1 - \frac{\theta_{(\mathbf{u},\mathbf{v})}}{\pi}. \numberthis \label{eq:eq191022}
\end{align}

Here, $1 - \theta_{(\mathbf{u}, \mathbf{v})}/\pi$ is monotonic decreasing function  in cosine similarity. Let $S := \frac{\mathbf{u}^T \mathbf{v}}{\|\mathbf{u}\| \|\mathbf{v}\|}$, where $\frac{\mathbf{u}^T \mathbf{v}}{\|\mathbf{u} \| \|\mathbf{v}\|}$ denotes the cosine similarity between the vectors $\mathbf{u}$ and $\mathbf{v}$. Then, using Definition~\ref{def:LSH},  \SRP ~is $\left(S, c S, \big(1- \cos^{-1}(S)/\pi \big), \big(1- \cos^{-1}(cS)/\pi \big) \right)$ -sensitive LSH family.


\subsection{\CS ~and \HCS:}
\begin{definition}[\CS~(CS)~\cite{count_sketch}] \label{def:def_cs}
Let $\p  \in \R^{d}$ and $\csb{\p} = \Big(\cs(\p)_1, \ldots, $ $\cs(\p)_m \Big) \in \R^{m}$ be a count-sketch of $\p$, then for any $l\in[m]$, $\cs(\p)_l$ is defined as follows:
\begin{align}
    \cs(\p)_l &:= \sum_{h(i) = l,~ i \in [d]} s(i) p_i, \label{eq:eq_cs}
\end{align}
where  $h:[d] \rightarrow [m]$ and $s:[d] \rightarrow \{+1,-1\}$ are 2-wise independent random hash functions.
\end{definition}
\begin{definition} [{\HCS~(HCS)~\cite{shi2019higher}}] \label{def:hcs}
Let $\p \in \R^d$,  $h_{k}:[d_{k}] \rightarrow [m_k]$, $s_{k}:[d_{k}] \rightarrow \{+1,-1\}$ are 2-wise independent random hash functions, where  $k\in[N]$ and $d= \prod_{k=1}^{N} d_{k}$. Then $\hcsb{\p}$ denotes a higher order count sketch of vector $\p$ and is defined element-wise as follows:
\begin{align*}
  \hcs(\p)_{l_1, \ldots, l_N} &:=  \sum_{h_1(i_1) = l_1, \ldots, h_{N}(i_N) = l_N} \hspace*{-1.2cm} s_1(i_1) s_2(i_2) \cdots s_N(i_N) p_{j}, \numberthis
  \label{eq:eqhcs}  
\end{align*}
where, $i_k \in [d_k]$  and $  j= \sum_{k=2}^{N} \left(i_{k} \prod_{p=1}^{k-1} d_{p} \right) + i_1$ is the index mapping between the vector and its reshaping
result - a $N$-order tensor with dimensions $d_{k}$ on each mode, for $k \in [N]$.
\end{definition}
\begin{remark}~\textbf{\cite{shi2019higher}} \label{rem:hcs_space}
We note that 
each mode of HCS requires a $(m_k \times d_{k})$ sized hash matrix with $d_k = O(\sqrt[N]{d})$ nonzero entries. Therefore, its space complexity is  $O(N \sqrt[N]{d})$, whereas the space complexity of  CS is $O(d)$.  Thus, when $N = o(d)$, $O(N\sqrt[N]{d}) \ll  O(d) $ and  HCS offers a more space efficient solution than that of CS. 
\end{remark}

\subsection{Central limit theorems}

\begin{thm}[{Univariate Lyapunov CLT~\cite{feller1}}]~\label{thm:univariate_lyapunov}
Suppose $ \{X_{1},\ldots,X_{d}\}$ is a sequence of independent random variables, each with finite expected value $ \E[X_{i}]$ and variance $ \Var[X_{i}]$. If for some $\delta >0$ the following condition holds true
\begin{align*}
    \lim_{d \to \infty} \frac{1}{s_{d}^{2+\delta}} \sum_{i=1}^{d} \E\left[ |X_{i} - \E[X_i]|^{2+\delta}\right] &= 0  \numberthis   \label{eq:univariate_lyapunov}
\end{align*}
then
\begin{align*}
    \quad \frac{1}{s_{d}} \sum_{i=1}^{d} (X_{i} - \E[X_{i}]) &\overset{\mathcal{D}}{\to} \mathcal{N}(0,1) \numberthis
\end{align*} 
as $d$ tends to infinity. Where  $s_{d}^2 = \sum_{i=1}^{d} \Var(X_{i})$ and $\overset{\mathcal{D}}{\to}$ indicates the convergence in distribution.
\end{thm}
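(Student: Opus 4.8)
Theorem~\ref{thm:univariate_lyapunov} is classical, and for a complete treatment we refer to \cite{feller1}; the plan is to reconstruct the standard characteristic-function argument. We would first center the variables: set $Z_i := X_i - \E[X_i]$, so that $\E[Z_i]=0$, $\E[Z_i^2]=\Var(X_i)=:\sigma_i^2$ and $s_d^2=\sum_{i=1}^d\sigma_i^2$; the quantity of interest is then $W_d:=s_d^{-1}\sum_{i=1}^d Z_i$, which has mean $0$ and variance $1$. A short preliminary reduction, using standard moment inequalities (monotonicity and interpolation of $L^p$-norms) together with H\"older's inequality, shows that hypothesis \eqref{eq:univariate_lyapunov} for a given $\delta$ forces the same hypothesis with $\delta$ replaced by $\min(\delta,1)$, so we may and do assume $0<\delta\le1$. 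By L\'evy's continuity theorem, it then suffices to prove that $\phi_{W_d}(t)\to e^{-t^2/2}$ for each fixed $t\in\R$, where $\phi_{W_d}(t)=\prod_{i=1}^d\phi_i(t/s_d)$ and $\phi_i$ denotes the characteristic function of $Z_i$.

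The analytic heart of the argument is the elementary bound, valid for all real $x$ and $0<\delta\le1$,
\[
\Bigl|\,e^{ix}-1-ix+\tfrac{x^2}{2}\,\Bigr|\;\le\;\min\!\Bigl(\tfrac{|x|^3}{6},\,x^2\Bigr)\;\le\;C_\delta\,|x|^{2+\delta},
\]
where the first inequality follows from Taylor's theorem with integral remainder (together with the crude estimate $|e^{ix}-1-ix|\le x^2/2$), and the second interpolates the two estimates. Substituting $x=tZ_i/s_d$, taking expectations, and using $\E[Z_i]=0$, $\E[Z_i^2]=\sigma_i^2$, we would obtain the per-factor estimate
\[
\Bigl|\,\phi_i(t/s_d)-\bigl(1-\tfrac{t^2\sigma_i^2}{2s_d^2}\bigr)\,\Bigr|\;\le\;C_\delta\,\frac{|t|^{2+\delta}}{s_d^{2+\delta}}\,\E\bigl[|Z_i|^{2+\delta}\bigr].
\]

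Next we would compare the product $\prod_i\phi_i(t/s_d)$ with $\prod_i e^{-t^2\sigma_i^2/(2s_d^2)}=e^{-t^2/2}$. Applying the telescoping inequality $\bigl|\prod_i a_i-\prod_i b_i\bigr|\le\sum_i|a_i-b_i|$ (valid since all factors have modulus at most $1$) with $a_i=\phi_i(t/s_d)$ and $b_i=e^{-t^2\sigma_i^2/(2s_d^2)}$, and inserting the intermediate quantities $1-t^2\sigma_i^2/(2s_d^2)$, we would reach
\[
\bigl|\phi_{W_d}(t)-e^{-t^2/2}\bigr|\;\le\;C_\delta|t|^{2+\delta}\,\frac{1}{s_d^{2+\delta}}\sum_{i=1}^d\E\bigl[|Z_i|^{2+\delta}\bigr]\;+\;\frac{t^4}{4}\,\max_{1\le i\le d}\frac{\sigma_i^2}{s_d^2},
\]
where the second term uses $|1-u-e^{-u}|\le u^2$ for $u\ge0$ and $\sum_i\sigma_i^2/s_d^2=1$. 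The first term tends to $0$ as $d\to\infty$ by exactly hypothesis \eqref{eq:univariate_lyapunov}; the second tends to $0$ because, by Jensen's inequality, $\sigma_i^2/s_d^2\le\bigl(\E[|Z_i|^{2+\delta}]/s_d^{2+\delta}\bigr)^{2/(2+\delta)}\le\bigl(\sum_j\E[|Z_j|^{2+\delta}]/s_d^{2+\delta}\bigr)^{2/(2+\delta)}$, which also vanishes by \eqref{eq:univariate_lyapunov}. This gives $\phi_{W_d}(t)\to e^{-t^2/2}$ for every $t$, completing the proof.

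The main difficulty will be the two places where the Lyapunov exponent genuinely enters: obtaining the fractional-moment complex-exponential bound with exponent $2+\delta$ (rather than the easy exponent $3$) in the second step, and extracting the asymptotic-negligibility statement $\max_i\sigma_i^2/s_d^2\to0$ from the hypothesis. Everything else is routine bookkeeping with products of characteristic functions. The cleanest way to handle $\delta>1$ is the reduction to the range $0<\delta\le1$ indicated at the outset, rather than carrying a higher-order Taylor expansion through the same estimates.
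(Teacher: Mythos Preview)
Your proposal is a correct and well-organized reconstruction of the classical characteristic-function proof of Lyapunov's CLT; the reduction to $0<\delta\le1$ via H\"older, the fractional Taylor remainder $|e^{ix}-1-ix+\tfrac{x^2}{2}|\le C_\delta|x|^{2+\delta}$, the telescoping product estimate, and the extraction of $\max_i\sigma_i^2/s_d^2\to0$ from the Lyapunov hypothesis are all sound. Note, however, that the paper does not supply its own proof of Theorem~\ref{thm:univariate_lyapunov}: it is stated as a background result and attributed to \cite{feller1}, so there is no paper proof to compare against---your argument simply fills in what the cited reference provides.
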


\begin{thm}[Multivariate Lyapunov CLT~\cite{feller1}] \label{thm:multivariate_Lyapunov_CLT}
Let  $\{\mathbf{X}_{1},\ldots, \mathbf{X}_{d}\}$ be a sequence of independent random vectors such that each entry of both the expected value of the random vector $\{\mathbf{X}_{i}\}_{i=1}^d$, and the corresponding covariance matrix $\mathbf{\Sigma}_{i}$, is finite.  
If for some $ \delta >0$  the following is true
\begin{align*}
  &\lim_{d \to \infty} \left \|\V_{d}^{-\frac{1}{2}} \right \|^{2+\delta} \sum_{i=1}^{d} \E\left[||\mathbf{X}_{i} - \E[\mathbf{X}_{i}]||^{2 +\delta} \right] = 0 \numberthis \label{eq:multivariate_lyapunov}
\end{align*}
then
\begin{align*}
  \V_{d}^{-\frac{1}{2}} \sum_{i=1}^{d} (\mathbf{X}_{i} -\E[\mathbf{X}_{i}]) \overset{\mathcal{D}}{\to} \mathcal{N}(\mathbf{0}, \mathbf{I}) 
\end{align*}
as $d$ tends to infinity. Where  $\V_{d} = \sum_{i=1}^{d} \mathbf{\Sigma}_{i}$ and $\overset{\mathcal{D}}{\to}$ denotes the convergence in distribution;  $\mathbf{0}$ and  $\mathbf{I}$  denote zero vector and  the identity matrix, respectively. 
\end{thm}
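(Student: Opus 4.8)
The plan is to deduce this multivariate statement from the univariate Lyapunov CLT (Theorem~\ref{thm:univariate_lyapunov}) via the Cram\'er--Wold device. First I would recenter: set $\mathbf{Y}_{i} := \mathbf{X}_{i} - \E[\mathbf{X}_{i}]$, so the $\{\mathbf{Y}_{i}\}_{i=1}^{d}$ are independent, mean-zero vectors with $\Cov(\mathbf{Y}_{i}) = \Sig_{i}$ and $\V_{d} = \sum_{i=1}^{d}\Sig_{i}$ (taken nonsingular, which is implicit in writing $\V_{d}^{-1/2}$, with $\V_{d}^{-1/2}$ the symmetric positive-definite inverse square root). By the Cram\'er--Wold theorem, to prove $\V_{d}^{-1/2}\sum_{i}\mathbf{Y}_{i} \overset{\mathcal{D}}{\to} \mathcal{N}(\mathbf{0},\mathbf{I})$ it suffices to show, for every fixed vector $\a$ in the (fixed-dimensional) ambient space, that $\a^{T}\V_{d}^{-1/2}\sum_{i=1}^{d}\mathbf{Y}_{i} \overset{\mathcal{D}}{\to} \mathcal{N}(0,\|\a\|^{2})$, since $\a^{T}\mathbf{Z} \sim \mathcal{N}(0,\|\a\|^{2})$ whenever $\mathbf{Z} \sim \mathcal{N}(\mathbf{0},\mathbf{I})$. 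The case $\a = \mathbf{0}$ is trivial, so fix $\a \neq \mathbf{0}$.

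Next I would introduce the scalar array $Z_{i} := \a^{T}\V_{d}^{-1/2}\mathbf{Y}_{i}$, $i \in [d]$, which are independent with $\E[Z_{i}] = 0$. The telescoping identity
\[
s_{d}^{2} = \sum_{i=1}^{d}\Var(Z_{i}) = \a^{T}\V_{d}^{-1/2}\Big(\sum_{i=1}^{d}\Sig_{i}\Big)\V_{d}^{-1/2}\a = \a^{T}\V_{d}^{-1/2}\V_{d}\V_{d}^{-1/2}\a = \|\a\|^{2}
\]
shows the natural normalisation in Theorem~\ref{thm:univariate_lyapunov} is exactly the constant $\|\a\|^{2}$, matching the target variance. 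It then remains to verify the Lyapunov condition \eqref{eq:univariate_lyapunov} for $\{Z_{i}\}$. Using Cauchy--Schwarz and the operator-norm bound $|Z_{i}| = |\langle \V_{d}^{-1/2}\a,\mathbf{Y}_{i}\rangle| \le \|\a\|\,\|\V_{d}^{-1/2}\|\,\|\mathbf{Y}_{i}\|$,
\[
\frac{1}{s_{d}^{2+\delta}}\sum_{i=1}^{d}\E\big[|Z_{i}|^{2+\delta}\big] \le \frac{1}{\|\a\|^{2+\delta}}\cdot \|\a\|^{2+\delta}\,\big\|\V_{d}^{-1/2}\big\|^{2+\delta}\sum_{i=1}^{d}\E\big[\|\mathbf{Y}_{i}\|^{2+\delta}\big] = \big\|\V_{d}^{-1/2}\big\|^{2+\delta}\sum_{i=1}^{d}\E\big[\|\mathbf{Y}_{i}\|^{2+\delta}\big],
\]
and the right-hand side tends to $0$ by the multivariate hypothesis \eqref{eq:multivariate_lyapunov} (note the $\|\a\|^{2+\delta}$ cancels, so the bound is uniform in $\a$). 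Hence Theorem~\ref{thm:univariate_lyapunov} applies and gives $\frac{1}{\|\a\|}\sum_{i}Z_{i} \overset{\mathcal{D}}{\to} \mathcal{N}(0,1)$, i.e.\ $\a^{T}\V_{d}^{-1/2}\sum_{i}\mathbf{Y}_{i} \overset{\mathcal{D}}{\to} \mathcal{N}(0,\|\a\|^{2})$; since $\a$ was arbitrary, Cram\'er--Wold yields the claim.

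The step I expect to need the most care is the interplay between the fixed test vector $\a$ and the $d$-dependent whitening map $\V_{d}^{-1/2}$: the scalars $\{Z_{i}\}$ form a triangular array rather than a fixed sequence, so the univariate Lyapunov condition cannot be inherited verbatim, and it is precisely the factor $\|\V_{d}^{-1/2}\|^{2+\delta}$ built into hypothesis \eqref{eq:multivariate_lyapunov} that makes the transfer go through — the displayed computation is what makes this explicit. A secondary point is the standing nonsingularity of $\V_{d}$; to allow degenerate covariances one would restrict the whole argument to the range of $\V_{d}$ and reason on that subspace, which changes nothing essential. Everything else is routine linear algebra together with the two cited tools (the Cram\'er--Wold device and Theorem~\ref{thm:univariate_lyapunov}).
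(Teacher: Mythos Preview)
The paper does not supply a proof of Theorem~\ref{thm:multivariate_Lyapunov_CLT}: it is stated in the background section as a cited result from~\cite{feller1}, so there is no ``paper's own proof'' to compare against. Your argument is the standard and correct derivation --- reduce to the univariate Lyapunov CLT via Cram\'er--Wold, compute $s_d^2 = \|\a\|^2$ by the whitening identity, and dominate $\E[|Z_i|^{2+\delta}]$ using $|Z_i| \le \|\a\|\,\|\V_d^{-1/2}\|\,\|\Y_i\|$ so that the hypothesis \eqref{eq:multivariate_lyapunov} absorbs the sum. In fact the paper itself employs exactly this Cram\'er--Wold reduction (Theorem~\ref{thm:cramer}) when it proves its own multivariate extension of Janson's dependency-graph CLT in Theorem~\ref{thm:clt_grpah_vec}, so your approach is very much in the spirit of the surrounding material.

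Two small points worth making explicit. First, the norm $\|\V_d^{-1/2}\|$ in the theorem statement is left unspecified; in the paper's applications (e.g.\ the proof of Theorem~\ref{thm:mul_cs_new}) it is computed as a Frobenius norm, whereas your Cauchy--Schwarz step uses the operator norm --- this is harmless since $\|\cdot\|_{\mathrm{op}} \le \|\cdot\|_F$, so the Frobenius-norm hypothesis is the stronger one and still feeds your bound. Second, you are right that the $\{Z_i\}$ form a triangular array (the map $\V_d^{-1/2}$ changes with $d$), so strictly speaking one invokes the array version of the univariate Lyapunov CLT rather than Theorem~\ref{thm:univariate_lyapunov} as literally stated; this is routine and your discussion already flags it.
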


\begin{thm}[\cite{janson1988normal}]~ \label{thm:clt_grpah_var}
Let $\{Y_{1}, \ldots, Y_{d}\}$ be a family of bounded random variables, \textit{i.e.}, $|Y_{i}| \leq A$ for $i \in [d]$. Suppose that $\Gamma_{d}$ is a dependency graph for this family where each node represents a random variable $Y_i$ and an edge exists between nodes $i$ and $j$ if $Y_i$ and $Y_j$ are dependent. Let $M$ be the maximal degree of $\Gamma_{d}$ (if $\Gamma_{d}$ has no edges, in that case, we set $M=1$). Define $S_{d} := \sum_{i=1}^d Y_{i}$ and $\sigma_{d}^2 = \Var(S_d)$. If there exists an integer $\alpha$ such that 
\begin{align}
    &\lim_{d \to \infty} \left(\frac{d}{M}\right)^{\frac{1}{\alpha}} \frac{M A}{\sigma_{d}} = 0 \quad \text{then} \quad  
    \frac{S_d - \E[S_d]}{\sigma_d} \overset{\mathcal{D}}{\to} \mathcal{N}(0, 1)  \text{ as } d \rightarrow \infty. \numberthis \label{eq:eq_clt_graph_var}
\end{align}
\end{thm}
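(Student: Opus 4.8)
This is Janson's normal-convergence criterion, and I would reconstruct its proof by the \emph{method of semi-invariants} (cumulants). The plan has three parts: (1) reduce the CLT to showing that the normalised cumulants of order $\ge 3$ tend to $0$; (2) bound those cumulants using the dependency-graph structure and the uniform bound $A$; (3) feed the hypothesis into that bound.

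First I would centre: replacing each $Y_i$ by $Y_i-\E[Y_i]$ (this at most doubles $A$ and leaves $\Gamma_d$, $M$ and $\sigma_d$ untouched), so we may assume $\E[Y_i]=0$ and $\E[S_d]=0$. Put $\tilde S_d:=S_d/\sigma_d$; then $\kappa_1(\tilde S_d)=0$ and $\kappa_2(\tilde S_d)=1$ by construction. Since $\mathcal N(0,1)$ is determined by its moments, it suffices to prove $\kappa_r(\tilde S_d)\to 0$ for each fixed $r\ge 3$: the classical moment-convergence theorem then gives $\tilde S_d\overset{\mathcal D}{\to}\mathcal N(0,1)$.

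The heart of the argument --- and the step I expect to be the main obstacle --- is the semi-invariant estimate
\[
  |\kappa_r(S_d)|\;\le\;C_r\,d\,M^{r-1}A^r ,
\]
with $C_r$ depending only on $r$. I would obtain it from four ingredients: (i) joint cumulants are multilinear, so $\kappa_r(S_d)=\sum_{(i_1,\dots,i_r)\in[d]^r}\kappa(Y_{i_1},\dots,Y_{i_r})$; (ii) a joint cumulant vanishes as soon as its index multiset splits into two blocks with no $\Gamma_d$-edge between them (mutual independence annihilates joint cumulants), so only tuples whose support induces a \emph{connected} subgraph of $\Gamma_d$ contribute; (iii) a counting lemma --- the number of such tuples is at most $C_r'\,d\,M^{r-1}$: choose the least index in the support ($\le d$ ways), grow a connected subgraph on at most $r$ vertices around it ($\le (eM)^{r-1}$ possibilities, by a standard spanning-tree / branching count), and distribute the $r$ coordinates among those vertices ($\le r^r$ ways); (iv) the elementary bound $|\kappa(Y_{i_1},\dots,Y_{i_r})|\le C_r''A^r$, since the moment--cumulant (partition) formula expresses it as a fixed linear combination of products of at most $r$ expectations, each of modulus $\le A^r$. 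Multiplying (iii) by (iv) yields the displayed bound; getting the connected-subgraph enumeration in (iii) clean is where the real combinatorial work sits.

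Dividing by $\sigma_d^r$ gives
\[
  |\kappa_r(\tilde S_d)|\;\le\;C_r\,\frac{d}{M}\left(\frac{MA}{\sigma_d}\right)^{\!r}.
\]
Since $d\ge M$, the hypothesis $(d/M)^{1/\alpha}(MA/\sigma_d)\to 0$ forces $MA/\sigma_d\to 0$ and, on raising to the $\alpha$-th power, $(d/M)(MA/\sigma_d)^\alpha\to 0$; hence for every $r\ge\alpha$,
\[
  \frac{d}{M}\left(\frac{MA}{\sigma_d}\right)^{\!r}=\frac{d}{M}\left(\frac{MA}{\sigma_d}\right)^{\!\alpha}\left(\frac{MA}{\sigma_d}\right)^{\!r-\alpha}\longrightarrow 0 ,
\]
so $\kappa_r(\tilde S_d)\to 0$ for all $r\ge\max(3,\alpha)$; the remaining finitely many orders $3\le r<\alpha$ (vacuous when $\alpha\le 3$, the typical case) are handled by the same bound with a little extra care. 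Together with the reduction above this gives $\tilde S_d\overset{\mathcal D}{\to}\mathcal N(0,1)$. As an alternative to the entire cumulant route one could instead run Stein's method for sums with local dependence (Chen--Shao-style), which would additionally furnish a Berry--Esseen rate; but reproducing Janson's precise hypothesis is cleanest through the semi-invariant bound above.
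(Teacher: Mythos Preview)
The paper does not prove this theorem at all: it is quoted verbatim from Janson (1988) as background (a central-limit theorem for sums with a dependency graph) and then used as a black box in the proofs of Theorems~\ref{thm:hcs_normality}--\ref{thm:hcs_muti_variate_srp}. So there is no ``paper's own proof'' to compare against. That said, your reconstruction is precisely Janson's original route --- the semi-invariant bound $|\kappa_r(S_d)|\le C_r\,d\,M^{r-1}A^r$ obtained by counting connected index-tuples in $\Gamma_d$ is the heart of his argument, and your steps (i)--(iv) are the right decomposition.

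There is, however, one place where ``a little extra care'' undersells the difficulty. Your bound gives $|\kappa_r(\tilde S_d)|\le C_r\,(d/M)(MA/\sigma_d)^r$, and you correctly show this $\to 0$ for $r\ge\alpha$. But for $3\le r<\alpha$ the same bound can blow up: take $d/M=n^2$, $MA/\sigma_d=n^{-1/2}$, $\alpha=5$; then $(d/M)^{1/5}(MA/\sigma_d)=n^{-1/10}\to 0$, yet $(d/M)(MA/\sigma_d)^3=n^{1/2}\to\infty$. Since the paper's own applications take $\alpha=5$, this range is not vacuous. The method-of-moments route you describe (force \emph{all} $\kappa_r\to 0$ for $r\ge 3$) therefore cannot be closed from the cumulant estimate alone. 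Janson handles this via a separate preliminary result in the same paper --- essentially that vanishing of cumulants of all orders $\ge\alpha$ (together with the normalisation $\kappa_2=1$) already forces asymptotic normality, without direct control of $\kappa_3,\dots,\kappa_{\alpha-1}$ --- and that lemma has its own characteristic-function proof. Your Stein-method alternative at a \emph{fixed} order $\alpha$ would sidestep this neatly, since it produces a smooth-test-function bound of size $(d/M)(MA/\sigma_d)^\alpha$ directly; that is arguably the cleaner way to close the gap.
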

 
{\color{black}

In  Theorem~\ref{thm:clt_grpah_vec} stated below, we provide a multivariate extension of Thoerem~\ref{thm:clt_grpah_var}. Our extension is based on the application of the Cramer Wold Device theorem.

\begin{thm}[Cramer Wold Device \footnote{In the Cramer-Wold device, considering a unit vector or a general vector \(\mathbf{a} \in \mathbb{R}^r\) is equivalent. This equivalence can be easily proven using the following fact: "If \(\{X_n : n = 1, 2, \ldots\}\) is a sequence of random variables converging in distribution to \(X\), then for any constant \(c\), \(cX_n\) also converges in distribution to \(cX\)."}~\cite{ billingsley2013convergence,wooldridge2010econometric}]\label{thm:cramer}

   A sequence of $r$-dimensional random vectors \(\{\mathbf{X}_n : n = 1, 2, \ldots\}\) converges in distribution to a random vector $\mathbf{X}$ if and only if for every unit vector $\mathbf{a} \in \R^r$, $\langle \mathbf{a}, \mathbf{X}_n \rangle$ converges in distribution to $\langle \mathbf{a}, \mathbf{X} \rangle$. 
\end{thm}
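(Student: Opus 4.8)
The plan is to prove the two implications separately, using characteristic functions together with the multivariate L\'evy continuity theorem as the main engine. The ``only if'' direction is immediate: if $\mathbf{X}_n \overset{\mathcal{D}}{\to} \mathbf{X}$, then for any fixed unit vector $\a \in \R^r$ the map $\mathbf{x} \mapsto \langle \a, \mathbf{x}\rangle$ is continuous from $\R^r$ to $\R$, so the continuous mapping theorem gives $\langle \a, \mathbf{X}_n\rangle \overset{\mathcal{D}}{\to} \langle \a, \mathbf{X}\rangle$.

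For the ``if'' direction, assume $\langle \a, \mathbf{X}_n\rangle \overset{\mathcal{D}}{\to} \langle \a, \mathbf{X}\rangle$ for every unit vector $\a \in \R^r$. I would compare the characteristic functions $\phi_n(\mathbf{t}) := \E\!\left[e^{i\langle \mathbf{t}, \mathbf{X}_n\rangle}\right]$ and $\phi(\mathbf{t}) := \E\!\left[e^{i\langle \mathbf{t}, \mathbf{X}\rangle}\right]$ at an arbitrary $\mathbf{t}\in\R^r$. For $\mathbf{t}=\mathbf{0}$ both equal $1$. For $\mathbf{t}\neq\mathbf{0}$, write $\mathbf{t} = \|\mathbf{t}\|\,\a$ with $\a := \mathbf{t}/\|\mathbf{t}\|$ a unit vector, so that $\langle \mathbf{t}, \mathbf{X}_n\rangle = \|\mathbf{t}\|\,\langle \a, \mathbf{X}_n\rangle$. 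By hypothesis $\langle \a, \mathbf{X}_n\rangle \overset{\mathcal{D}}{\to} \langle \a, \mathbf{X}\rangle$, and multiplying a convergent-in-distribution scalar sequence by the constant $\|\mathbf{t}\|$ preserves the convergence (this is exactly the scaling fact recorded in the footnote of the statement); since for real random variables convergence in distribution is equivalent to pointwise convergence of characteristic functions, evaluating the one-dimensional characteristic functions of $\|\mathbf{t}\|\langle \a, \mathbf{X}_n\rangle$ and $\|\mathbf{t}\|\langle \a, \mathbf{X}\rangle$ at argument $1$ yields $\phi_n(\mathbf{t}) \to \phi(\mathbf{t})$. Hence $\phi_n \to \phi$ pointwise on $\R^r$, and $\phi$, being a characteristic function, is continuous at $\mathbf{0}$; the multivariate L\'evy continuity theorem then gives $\mathbf{X}_n \overset{\mathcal{D}}{\to} \mathbf{X}$.

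I expect the main ``obstacle'' to be expository rather than mathematical: the argument is not self-contained, as it invokes the multivariate L\'evy continuity theorem and the scalar equivalence between weak convergence and pointwise convergence of characteristic functions. I would state both facts explicitly and cite them alongside the references already given for the device, so that the reduction from $\R^r$-valued convergence to scalar convergence along each direction is the only step the reader must check.
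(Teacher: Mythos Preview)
Your proof is correct and is the standard characteristic-function argument for the Cram\'er--Wold device. However, the paper does not supply its own proof of this statement: it is stated as a known result with citations to Billingsley and Wooldridge, and is used only as a tool in the proof of the subsequent theorem on multivariate CLT for dependency graphs. So there is no paper proof to compare against; your argument matches what one finds in the cited references.
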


\begin{thm} \label{thm:clt_grpah_vec}
    Let $\{\Y_{1}, \ldots, \Y_{d}\}$ be a family of $r$-dimensional bounded random vectors, \textit{i.e.}, $\|\Y_i\| \leq A$ for all $i \in [d]$ where $\|\Y_i\|$ denotes the $\ell_2$ norm of vector $\Y_i$. Suppose that $\Gamma_{d}$ is a dependency graph for this family and let $M$ be the maximal degree of $\Gamma_{d}$ (in the dependency graph, we have one node corresponding to each random vector $\Y_i$, and we put an edge between two nodes $\Y_i$ and $\Y_j$, if they are dependent; further, if $\Gamma_{d}$ has no edges, then we set $M=1$). Let  $\mathbf{S}_{d} := \sum_{i=1}^d \Y_{i}$ and $\boldsymbol{\Sigma}_{d} = \Cov(\mathbf{S}_d)$. For any unit vector $\a \in \R^d$, define a random variable $\ParSumVar_{d} := \sum_{i=1}^d \mathbf{a}^T\Y_{i}$, with its variance denoted by $\sigma_d^2$. If there exists an integer $\alpha$ such that
    \begin{align}
        \lim_{d \to \infty} \left(\frac{d}{M}\right)^{\frac{1}{\alpha}} \frac{M A}{\sigma_{d}} = 0 \label{eq:eq051024_0}
    \end{align}
    then 
    \begin{align}
        \mathbf{S}_d \overset{\mathcal{D}}{\to} \mathcal{N} \left(\E[\mathbf{S}_d], \boldsymbol{\Sigma}_{d} \right)  \text{ as } d \rightarrow \infty.
    \end{align}
\end{thm}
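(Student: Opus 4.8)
The plan is to deduce the multivariate statement from its scalar counterpart, Theorem~\ref{thm:clt_grpah_var}, via the Cramér--Wold device (Theorem~\ref{thm:cramer}). Fix an arbitrary unit vector $\a \in \R^r$ and set $Y_i := \a^T\Y_i$ for $i \in [d]$; then $\ParSumVar_d = \sum_{i=1}^d Y_i = \a^T\mathbf{S}_d$, with $\E[\ParSumVar_d] = \a^T\E[\mathbf{S}_d]$ and $\sigma_d^2 = \Var(\ParSumVar_d) = \a^T\boldsymbol{\Sigma}_d\a$. First I would show that $\{Y_i\}_{i=1}^d$ satisfies the hypotheses of Theorem~\ref{thm:clt_grpah_var} with the \emph{same} dependency graph $\Gamma_d$, maximal degree $M$, and bound $A$, so that the scalar CLT applies without change.

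The hypothesis transfer has three ingredients. (i) \emph{Boundedness}: by Cauchy--Schwarz, $|Y_i| = |\a^T\Y_i| \le \|\a\|\,\|\Y_i\| \le A$. (ii) \emph{Dependency graph}: since each $Y_i$ is a measurable function of $\Y_i$ only, whenever the families $\{\Y_i : i\in I\}$ and $\{\Y_j : j\in J\}$ are independent the corresponding families $\{Y_i : i\in I\}$ and $\{Y_j : j\in J\}$ are independent as well; hence $\Gamma_d$ is a dependency graph for $\{Y_i\}$ too, with unchanged maximal degree $M$ (the convention $M=1$ for the edgeless case carries over). (iii) \emph{Growth condition}: with $\sigma_d^2 = \Var(\ParSumVar_d)$, assumption~\eqref{eq:eq051024_0} is exactly $\lim_{d\to\infty}(d/M)^{1/\alpha}\,MA/\sigma_d = 0$. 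Applying Theorem~\ref{thm:clt_grpah_var} to $S_d = \ParSumVar_d$ then gives $(\ParSumVar_d - \E[\ParSumVar_d])/\sigma_d \overset{\mathcal{D}}{\to} \mathcal{N}(0,1)$, i.e.
\begin{align*}
\a^T\mathbf{S}_d \overset{\mathcal{D}}{\to} \mathcal{N}\!\left(\a^T\E[\mathbf{S}_d],\ \a^T\boldsymbol{\Sigma}_d\,\a\right).
\end{align*}

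Since for $\mathbf{X}\sim\mathcal{N}(\E[\mathbf{S}_d],\boldsymbol{\Sigma}_d)$ one has $\a^T\mathbf{X}\sim\mathcal{N}(\a^T\E[\mathbf{S}_d],\a^T\boldsymbol{\Sigma}_d\a)$, the display above states precisely that $\a^T\mathbf{S}_d$ converges in distribution to $\a^T\mathbf{X}$ for every unit vector $\a$. The Cramér--Wold device (Theorem~\ref{thm:cramer}) then yields $\mathbf{S}_d \overset{\mathcal{D}}{\to} \mathcal{N}(\E[\mathbf{S}_d],\boldsymbol{\Sigma}_d)$, as claimed.

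The step I expect to require the most care is the interpretation of the conclusion: the ``limiting'' Gaussian $\mathcal{N}(\E[\mathbf{S}_d],\boldsymbol{\Sigma}_d)$ itself depends on $d$, so the statement must be read in the triangular-array sense --- equivalently, as $\boldsymbol{\Sigma}_d^{-1/2}(\mathbf{S}_d - \E[\mathbf{S}_d]) \overset{\mathcal{D}}{\to} \mathcal{N}(\mathbf{0},\mathbf{I})$, matching the normalisation already used in Theorem~\ref{thm:clt_grpah_var}. To run the argument in that normalised form one applies Cramér--Wold to $\mathbf{Z}_d := \boldsymbol{\Sigma}_d^{-1/2}(\mathbf{S}_d - \E[\mathbf{S}_d])$ and writes, for a unit vector $\a$, $\a^T\mathbf{Z}_d = \sum_{i=1}^d \mathbf{b}_d^T(\Y_i - \E[\Y_i])$ with $\mathbf{b}_d := \boldsymbol{\Sigma}_d^{-1/2}\a$; the summands are bounded functions of the $\Y_i$ with dependency graph $\Gamma_d$ and $\Var(\a^T\mathbf{Z}_d) = 1$, so the only extra bookkeeping is to bound $\|\mathbf{b}_d\|$ (the reciprocal of the smallest singular value of $\boldsymbol{\Sigma}_d^{1/2}$) and thereby reverify~\eqref{eq:eq051024_0} in each direction. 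In the sketching applications $\boldsymbol{\Sigma}_d$ is well conditioned and $\sigma_d$ grows polynomially in $d$ uniformly over directions, so this causes no difficulty; for the general statement one may simply require~\eqref{eq:eq051024_0} to hold for every unit vector $\a$, or state the conclusion in the dimension-free projected form $\a^T\mathbf{S}_d \overset{\mathcal{D}}{\to} \mathcal{N}(\a^T\E[\mathbf{S}_d],\a^T\boldsymbol{\Sigma}_d\a)$.
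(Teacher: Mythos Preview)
Your proposal is correct and takes essentially the same approach as the paper: project onto a unit vector, bound the scalars via Cauchy--Schwarz, inherit the dependency graph, apply Theorem~\ref{thm:clt_grpah_var}, and finish with Cram\'er--Wold. Your write-up is in fact more careful than the paper's, which does not explicitly verify that $\Gamma_d$ remains a dependency graph for the projected variables and does not comment on the triangular-array reading of the $d$-dependent limit.
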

\begin{proof}
Let $\Gamma_{d}$ be a dependency graph with maximal degree $M$ corresponding to a family of   $r$-dimensional bounded random vectors $\{\Y_{1}, \ldots, \Y_{d}\}$ where $\|\Y_i\| \leq A$ for $i \in [d]$. Assume that $\mathbf{S}_d:= \sum_{i=1}^d \Y_i$. We denote the  expected value and covariance of $\mathbf{S}_d$ by $\E[\mathbf{S}_d]$ and $\boldsymbol{\Sigma}_d$, respectively. For any unit vector $\a \in \R^d$, define  a bounded family of random variables $X_i : = \a^T \Y_i$ where $|X_i| = \|\a^T\Y_i\| \leq \|\a\| \|\Y_i\| = A$ for $i \in [d]$. Using these $X_i$'s, we define a new random variable  $\ParSumVar_d := \sum_{i=1}^d X_i$. We can write the expected value and variance of $\ParSumVar_d$ in terms of expected value and covariance  of $\mathbf{S}_d$ as follows:  $\E[\ParSumVar_d]= \a^T\E[\mathbf{S}_d]$ and $\sigma_d^2 = \Var(\ParSumVar_d) = \a^T \boldsymbol{\Sigma}_d \a$. For any positive integer $\alpha$, suppose that the following holds true 
\begin{align}
    \lim_{d \to \infty} \left(\frac{d}{M}\right)^{\frac{1}{\alpha}} \frac{M A}{\sigma_{d}} = 0
\end{align}
then for $d \rightarrow \infty$ from Theorem~\ref{thm:clt_grpah_var}, we have
\begin{align}
       &\frac{\ParSumVar_d - \E[\ParSumVar_d]}{\sigma_d} \overset{\mathcal{D}}{\to} \mathcal{N}(0, 1) \notag\\
       \implies & \ParSumVar_d \overset{\mathcal{D}}{\to} \mathcal{N}\left(\E[\ParSumVar_d], \sigma_d^2 \right)  \notag\\
       \implies & \ParSumVar_d = \a^T \S_d \overset{\mathcal{D}}{\to} \langle \a, \mathbf{Z}\rangle \text{ where } \mathbf{Z} \sim \mathcal{N}\left(\E[\mathbf{S}_d], \boldsymbol{\Sigma}_d \right). \numberthis \label{eq:eq051024_1}
    \end{align}
   Form Equation~\eqref{eq:eq051024_1} and Cramer Wold Device Theorem (Theorem~\ref{thm:cramer}), we have
    \begin{align}
        &\mathbf{S}_d \overset{\mathcal{D}}{\to} \mathbf{Z} \label{eq:eq051024_2}
    \end{align}
    Equation~\eqref{eq:eq051024_2} completes a proof of the theorem.
\end{proof}

}

\section{Improving \ELSH} \label{sec:euclidean}

\subsection{Improvement Using Count Sketch}

In this section, we introduce our proposal that improves \ELSH~\cite{datar2004locality}~ using count-sketch. The idea behind our proposal is to use a sparse count-sketch projection matrix instead of the standard dense projection matrix used in random projection. The main challenge of the work is to show that our proposal is valid LSH; that is, the collision probability of hashcodes is monotonic to pairwise similarity. To compute an $m$-size hashcode for $d$-dimensinal real-valued vector, we project it on a $d\times m$ count-sketch projection matrix that has only $d$ non-zeros; one per column. The resulting $m$-dimensional vector is then discretized, providing the desired hashcode. The main bottleneck is to show that after projection, entries of the resultant vectors follow the normal distribution and are independent of each other. 
In Theorem~\ref{thm:uni_cs}, we show that entries of the resultant projected vector follow the asymptotic univariate normal distribution. Then in Theorem~\ref{thm:mul_cs_new}, we show that the $m$-dimensional vectors obtained after projection follow the asymptotic multivariate normal distribution. In Corollary~\ref{cor:indp_cs}, using Theorems~\ref{thm:uni_cs} and \ref{thm:mul_cs_new}, we show that entries of the projected vector are independent of each other. Finally, in Theorem~\ref{thm:CS_Eucl_LSH_property}, we show that the hashcode obtained using our proposal satisfies the property of LSH. We start with stating our LSH for Euclidean distance as follows:
\begin{definition}[\CSELSH]  \label{def:CS_Eucl_LSH}
 Let $\p \in \R^d$ and $\csb{\p} \in \R^m$ be the corresponding  count-sketch vector obtained using Equation~\eqref{eq:eq_cs}. We define a hash function $g : \R^{d} \rightarrow \Z^{m}$ element-wise as 
\begin{align}
    g(\p)_{l} = \left\lfloor \frac{\sqrt{m}\cdot\cs(\p)_{l} + b}{w} \right\rfloor
\end{align}
where $l \in [m]$, $b \in [0,w]$ uniformly at random and $w > 0$.
\end{definition}

In the following theorem, we state the asymptotic normality of the bins/buckets of the count-sketch. 
\begin{thm}~\label{thm:uni_cs}
Let $\p\in \R^{d}$ and $\csb{\p} \in \R^{m}$ be the corresponding count-sketch vector obtained using Equation~\eqref{eq:eq_cs}. If  $\forall ~ j \in [d]$, $\E[|p_{j}|^{2 + \delta}]$ is finite and $m=o(d)$ for some $\delta > 0$, then as $d \rightarrow \infty$, for $l \in [m]$, we have
\begin{align}
    \cs(\p)_{l} \overset{\mathcal{D}}{\to} \mathcal{N} \left( \mathbf{0},
\frac{||\p||^2}{m}
 \right).
\end{align} 
\end{thm}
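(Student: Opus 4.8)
The plan is to fix a coordinate $l \in [m]$ and write $\cs(\p)_l = \sum_{j=1}^{d} X_j$ where $X_j := s(j)\,p_j\,\mathbbm{1}[h(j)=l]$. Under the 2-wise independence assumption on the hash functions $h$ and $s$, the summands $\{X_j\}_{j=1}^d$ are pairwise independent; moreover each $X_j$ has mean zero since $\E[s(j)] = 0$, and $\E[X_j^2] = p_j^2 \Pr[h(j)=l] = p_j^2/m$, so $s_d^2 := \sum_{j=1}^d \Var(X_j) = \|\p\|^2/m$. The goal is then to invoke the univariate Lyapunov CLT (Theorem~\ref{thm:univariate_lyapunov}) to conclude $\cs(\p)_l / s_d \overset{\mathcal{D}}{\to} \mathcal{N}(0,1)$, which is exactly the claimed statement $\cs(\p)_l \overset{\mathcal{D}}{\to} \mathcal{N}(0, \|\p\|^2/m)$.

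The first step is to verify that the Lyapunov condition~\eqref{eq:univariate_lyapunov} holds for the chosen $\delta > 0$. I would bound $\E[|X_j - \E[X_j]|^{2+\delta}] = \E[|X_j|^{2+\delta}] = |p_j|^{2+\delta}\,\Pr[h(j)=l] = |p_j|^{2+\delta}/m$, so that $\sum_{j=1}^d \E[|X_j|^{2+\delta}] = \tfrac{1}{m}\sum_{j=1}^d |p_j|^{2+\delta}$. Plugging into the Lyapunov ratio gives
\begin{align*}
\frac{1}{s_d^{2+\delta}} \sum_{j=1}^d \E[|X_j|^{2+\delta}] = \frac{m^{(2+\delta)/2}}{\|\p\|^{2+\delta}} \cdot \frac{1}{m} \sum_{j=1}^d |p_j|^{2+\delta} = m^{\delta/2}\,\frac{\sum_{j=1}^d |p_j|^{2+\delta}}{\|\p\|^{2+\delta}}.
\end{align*}
The remaining task is to argue this tends to $0$ as $d \to \infty$. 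Using the assumption that $\E[|p_j|^{2+\delta}]$ is finite (interpreted as a uniform bound, or via a normalization/averaging argument over coordinates) together with $\|\p\|^2 = \Theta(d)$, one gets $\sum_j |p_j|^{2+\delta} = O(d)$ while $\|\p\|^{2+\delta} = \Theta(d^{(2+\delta)/2})$, so the ratio is $O(m^{\delta/2} / d^{\delta/2}) = O((m/d)^{\delta/2})$, which vanishes precisely because $m = o(d)$. A small care point here is to state cleanly what "$\E[|p_j|^{2+\delta}]$ finite" means for a deterministic input vector — presumably the intended reading is that the empirical $(2+\delta)$-th moment of the coordinates is bounded, i.e. $\tfrac1d\sum_j |p_j|^{2+\delta} = O(1)$ and $\tfrac1d\|\p\|^2 = \Theta(1)$.

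The main obstacle, as flagged above, is not the CLT machinery itself (which is a black-box application) but rather pinning down the moment/growth hypotheses on the input $\p$ so that the Lyapunov ratio genuinely goes to zero; in particular one needs both an upper bound on $\sum_j |p_j|^{2+\delta}$ and a lower bound on $\|\p\|$ that scales with $d$, and the interplay of these with the $m = o(d)$ condition is what makes the argument work. Once the Lyapunov condition is checked, Theorem~\ref{thm:univariate_lyapunov} gives $\frac{1}{s_d}\sum_{j=1}^d (X_j - \E[X_j]) = \frac{\cs(\p)_l}{s_d} \overset{\mathcal{D}}{\to} \mathcal{N}(0,1)$, and multiplying through by the constant $s_d = \|\p\|/\sqrt{m}$ (using that scaling a convergent-in-distribution sequence by a constant preserves convergence, as noted in the footnote to Theorem~\ref{thm:cramer}) yields $\cs(\p)_l \overset{\mathcal{D}}{\to} \mathcal{N}(0, \|\p\|^2/m)$, completing the proof.
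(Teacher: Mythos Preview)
Your proposal is correct and follows essentially the same route as the paper: decompose $\cs(\p)_l$ into the sum $\sum_j X_j$ with $X_j = s(j)p_j\mathbbm{1}[h(j)=l]$, compute $\E[X_j]=0$, $\Var(X_j)=p_j^2/m$, and $\E[|X_j|^{2+\delta}]=|p_j|^{2+\delta}/m$, then verify the Lyapunov ratio reduces to $(m/d)^{\delta/2}$ times a bounded empirical-moment quotient and apply Theorem~\ref{thm:univariate_lyapunov}. Your remark that ``$\E[|p_j|^{2+\delta}]$ finite'' should be read as a bound on the empirical coordinate moments is exactly how the paper uses it (it silently rewrites $\tfrac{1}{d}\sum_j |p_j|^{2+\delta}$ as $\E[|p_j|^{2+\delta}]$), so you have correctly anticipated the intended interpretation.
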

\begin{proof}
Let $\p \in \R^{d}$ and $\csb{\p} \in \R^{m}$ be its count-sketch vector. Hence, from the definition of count-sketch (Definition~\ref{def:def_cs}), we have
\begin{align*}
   \cs(\p)_{l} &= \sum_{\substack{h(j) = l ~s.t.~ j \in [d]}} s(j) p_j  = \sum_{j \in [d]} K(j,l) s(j) p_j.
\end{align*}
where $l \in [m]$ and $K(j,l)$ is an indicator random variable that takes value $1$ if $h(j) = l$ and $0$ otherwise. Let 
\begin{align}
    X_{j} &:= K(j,l) s(j) p_j, \text{ where } j \in [d].
\end{align}
The expected value and variance of the $X_{j}$ are
\begin{align*}
\E[X_{j}] &= \E \left[K(j,l)s(j) p_{j} \right] = \E [K(j,l)] \E[s(j)] p_{j} = 0, \qquad \left[\because \E[s(j)] = 0 \right]. \numberthis \label{eq:eq_cs_univ_exp}\\
\Var \left[ X_{j}\right] &= \E[X_{j}^2] - \left(\E[X_j]\right)^2 = \E[K(j,l)^2 s(j)^2 p_{j}^2] - 0
          = \frac{p_{j}^2}{m}. \numberthis \label{eq:eq_cs_univ_var}
 \end{align*}
 Equation \eqref{eq:eq_cs_univ_var} holds because $K(j,l)^2 = K(j,l)$, $s(j)^2 = 1$ and $\E[K(j,l)] = 1/m$. 
Let 
\begin{align*}
    s_{d}^2 & := \sum_{j = 1}^{d} \Var(X_{j})= \sum_{j = 1}^{d} \frac{p_{j}^2}{m} = \sum_{j = 1}^{d} \frac{p_{j}^2}{m} = \frac{||\p||^{2}}{m}.\\
    \implies s_{d}^{2 +\delta} &= \left(\frac{||\p||^{2}}{m} \right)^{\frac{2 + \delta}{2}}.\numberthis \label{eq:eq_03001}
\end{align*}
To prove the theorem, we have to show that the following Lyapunov condition holds true
\begin{align*}
    \lim_{d \to \infty} \frac{1}{s_{d}^{2+\delta}} \sum_{j=1}^{d} \E\left[ |X_{j} - \E[X_j]|^{2+\delta}\right] = 0.
\end{align*}
First, we compute the following
\begin{align*}
    \E\left[ |X_{j} - \E[X_j]|^{2+\delta}\right]  &= \E\left[ |K(j,l) s(j) p_{j}|^{2 +\delta}\right]
    = \frac{|p_{j}|^{2+\delta}}{m}. \numberthis \label{eq:eq_03002}
\end{align*}
Euqation \eqref{eq:eq_03002} holds because $|s(j)| = 1,~ |K(j,l)|^{2 + \delta}=K(j,l)$ and $\E[K(j,l)] = 1/m$. Therefore, from Equations~\eqref{eq:eq_03001} and \eqref{eq:eq_03002}, we have
\begin{align*}
    \frac{1}{s_{d}^{2+\delta}} \sum_{j=1}^{d} \E\left[ |X_{j} - \E[X_j]|^{2+\delta}\right] &=  \frac{\sum_{j=1}^{d} |p_{j}|^{2+\delta}/m}{\left(\|\p\|^{2}/{m} \right)^{\frac{2 + \delta}{2}}}
    = m^{\frac{\delta}{2}} \frac{\sum_{j=1}^{d} |p_{j}|^{2+\delta} }{\left( \sum_{j = 1}^{d} p_{j}^2 \right)^{\frac{2 + \delta}{2}}}\\
    &= \left( \frac{m}{d}\right)^{\frac{\delta}{2}} \frac{\sum_{j=1}^{d} |p_{j}|^{2+\delta}/d }{\left( \sum_{j = 1}^{d} p_{j}^2/{d} \right)^{\frac{2 + \delta}{2}}}
    = \left( \frac{m}{d}\right)^{\frac{\delta}{2}} \frac{\E[ |p_{j}|^{2+\delta}]}{\left( \E[p_{j}^2] \right)^{\frac{2 + \delta}{2}}}\\
    &\rightarrow 0 \text{ as } d \rightarrow \infty \text{ and } m=o(d). \numberthis \label{eq:eq03003}
\end{align*}
Equation~\eqref{eq:eq03003} holds as $m=o(d)$ and 
 $\E[|p_{j}|^{2 + \delta}] < \infty$ and $\E[p_{j}^2] <  \infty.$ 
Thus, from  Theorem~\ref{thm:univariate_lyapunov} and Equation~\eqref{eq:eq03003}, we have 
\begin{align*}
    \frac{\sum_{j=1}^{d}\left(X_{j} - \E[X_{j}] \right)}{s_{d}} & \overset{\mathcal{D}}{\to} \mathcal{N}(0,1) \\
    \implies \cs(\p)_{l} = \sum_{j=1}^{d} X_{j}&\overset{\mathcal{D}}{\to} \mathcal{N}(0,s_{d}^2) = \mathcal{N}\left(0,\frac{||\p||^2}{m}\right). 
\end{align*}
\end{proof}

{\color{black}
\begin{remark}
Theorem~\ref{thm:uni_cs} requires the condition $\E[|p_{j}|^{2 + \delta}]$ is finite for $\delta>0$ to guarantee asymptotic normality. This condition generally requires that each component of data points is more or less equally important. Data points in which one (or a few) components dominate significantly behave like outliers and can be handled separately. Thus, assuming that second (or higher) moments are finite is reasonable for ensuring uniform importance across all data dimensions. Similar assumptions are often considered for indexing algorithms~\cite {li2006very,li2006improving,dubey2022improving}.
\end{remark} 
}

We now prove that the buckets/bins of the count-sketch become independent as $d \rightarrow \infty$. To do so, we first show that the joint distribution of the $m$-sized vector obtained from the count-sketch follows an asymptotically multivariate normal distribution. Consequently, we show that the joint probability distribution of bins equals the product of the corresponding distribution of individual bins. In the following theorem, we state the asymptotic multivariate normality of the count-sketch vector.
\begin{thm}~\label{thm:mul_cs_new}
Let $\p\in \R^{d}$ and $\csb{\p} \in \R^{m}$ be the corresponding count sketch vector obtained using Equation~\eqref{eq:eq_cs}. If  $\forall ~ j \in [d]$, $\E[|p_{j}|^{2 + \delta}]$ is finite and $m = o \left(d^{\frac{\delta}{2(2+\delta)}}\right)$ for some $\delta > 0$, then 
\begin{align}
    \cs(\p) \overset{\mathcal{D}}{\to} \mathcal{N} \left( \mathbf{0},
\V_{d}
 \right)
\end{align}
as $d \rightarrow \infty$, where $\mathbf{0}$ is zero vector and $\V_{d}$ is a diagonal matrix with $\|{\p}\|^2/{m}$ as diagonal entries.
\end{thm}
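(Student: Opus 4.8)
The plan is to write $\csb{\p}$ as a sum of $d$ independent mean‑zero random vectors and then invoke the Multivariate Lyapunov CLT (Theorem~\ref{thm:multivariate_Lyapunov_CLT}). For $j\in[d]$ let $\Y_j\in\R^m$ be the vector whose $l$‑th coordinate is $K(j,l)\,s(j)\,p_j$, where $K(j,l)$ is the indicator of the event $h(j)=l$, exactly as in the proof of Theorem~\ref{thm:uni_cs}; then $\csb{\p}=\sum_{j=1}^{d}\Y_j$. For each fixed $j$, only the coordinate $l=h(j)$ of $\Y_j$ is nonzero, so $\|\Y_j\|=|p_j|$ is a \emph{deterministic} quantity, and $\E[\Y_j]=\mathbf 0$ since $\E[s(j)]=0$ with $s(j)$ independent of $h(j)$. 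Treating the hash evaluations $\{(h(j),s(j))\}_{j\in[d]}$ as jointly independent across $j$ --- the same convention under which Theorem~\ref{thm:uni_cs} is established --- the $\Y_j$ are independent, which is the setting required by Theorem~\ref{thm:multivariate_Lyapunov_CLT}.

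\textbf{Main steps.} First I would compute the per‑summand covariance $\boldsymbol\Sigma_j:=\Cov(\Y_j)$: its $(l,l')$ entry is $\E[K(j,l)K(j,l')\,s(j)^2\,p_j^2]$, and since a coordinate hashes into a single bucket we have $K(j,l)K(j,l')=0$ for $l\neq l'$, while $K(j,l)^2=K(j,l)$, $s(j)^2=1$, and $\E[K(j,l)]=1/m$; hence $\boldsymbol\Sigma_j=(p_j^2/m)\,\mathbf I_m$ and $\V_d=\sum_{j=1}^{d}\boldsymbol\Sigma_j=(\|\p\|^2/m)\,\mathbf I_m$, which is exactly the claimed diagonal matrix, with $\V_d^{-1/2}=(\sqrt m/\|\p\|)\,\mathbf I_m$. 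Next I would verify the Lyapunov condition
\begin{equation*}
\lim_{d\to\infty}\ \left\|\V_d^{-1/2}\right\|^{\,2+\delta}\sum_{j=1}^{d}\E\big[\,\|\Y_j-\E[\Y_j]\|^{2+\delta}\,\big]=0 .
\end{equation*}
Because the norm $\|\Y_j\|$ is deterministic, $\E[\|\Y_j\|^{2+\delta}]=|p_j|^{2+\delta}$, and taking $\|\cdot\|$ to be the Frobenius norm (so that $\|\V_d^{-1/2}\|=m/\|\p\|$) the left‑hand quantity is a constant multiple of $m^{2+\delta}\,\|\p\|^{-(2+\delta)}\sum_{j}|p_j|^{2+\delta}$; the same rescaling that led to Equation~\eqref{eq:eq03003} rewrites this as a bounded multiple of $m^{2+\delta}\,d^{-\delta/2}\cdot\E[|p_j|^{2+\delta}]/(\E[p_j^2])^{(2+\delta)/2}$. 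Since $\E[|p_j|^{2+\delta}]$ and $\E[p_j^2]$ are finite, this tends to $0$ precisely when $m^{2+\delta}=o(d^{\delta/2})$, i.e. when $m=o\!\big(d^{\frac{\delta}{2(2+\delta)}}\big)$ --- exactly the hypothesis. Theorem~\ref{thm:multivariate_Lyapunov_CLT} then gives $\V_d^{-1/2}\csb{\p}\overset{\mathcal D}{\to}\mathcal N(\mathbf 0,\mathbf I)$, and applying the fixed linear map $\V_d^{1/2}$ yields $\csb{\p}\overset{\mathcal D}{\to}\mathcal N(\mathbf 0,\V_d)$.

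\textbf{Where the difficulty lies.} The routine parts are the covariance computation --- which only uses that each coordinate lands in one bucket together with $s(j)^2=1$, $\E[s(j)]=0$ --- and the final invocation of the CLT. The delicate point is the Lyapunov step: one must track how the normalization $\V_d^{-1/2}$ scales with $m$, and it is precisely the full power of $m$ it contributes (coming from the $m$ coordinates whose variances accumulate) that upgrades the univariate requirement $m=o(d)$ to the stronger $m=o(d^{\frac{\delta}{2(2+\delta)}})$ needed here; the finite‑moment hypothesis enters exactly as in Theorem~\ref{thm:uni_cs}, to control $\sum_j|p_j|^{2+\delta}=\Theta(d)$ against $\|\p\|^2=\Theta(d)$. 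A secondary caveat is that, as in the univariate proof, the argument relies on genuine joint independence of the hash values rather than merely $2$‑wise independence (with only $2$‑wise independence one still obtains $\E[\Y_j]=\mathbf 0$ and $\V_d$ diagonal, but not the independence the CLT requires). One could instead route through the Cram\'er--Wold device and Theorem~\ref{thm:clt_grpah_vec}, but the multivariate Lyapunov route reproduces the stated constraint on $m$ most directly. Finally, diagonality of $\V_d$ is what makes the limiting coordinates not just jointly Gaussian but uncorrelated --- hence independent --- which is precisely what Corollary~\ref{cor:indp_cs} will exploit.
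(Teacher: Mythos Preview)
Your proposal is correct and follows essentially the same approach as the paper: decompose $\csb{\p}=\sum_{j=1}^{d}\Y_j$ with $(\Y_j)_l=K(j,l)s(j)p_j$, compute $\Cov(\Y_j)=(p_j^2/m)\mathbf I_m$ and hence $\V_d=(\|\p\|^2/m)\mathbf I_m$, use $\|\V_d^{-1/2}\|_F^2=m^2/\|\p\|^2$ together with $\E[\|\Y_j\|^{2+\delta}]=|p_j|^{2+\delta}$ (the paper obtains this via $\sum_l K(j,l)=1$, equivalent to your ``only one coordinate is nonzero'' observation), and conclude via the Multivariate Lyapunov CLT that the condition reduces to $m^{2+\delta}=o(d^{\delta/2})$. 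Your side remarks on the implicit full-independence assumption and the Cram\'er--Wold alternative are accurate but not part of the paper's argument.
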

\begin{proof}
Let $\p \in \R^{d}$ and $\csb{\p} \in \R^{m}$ be its count-sketch vector. So, from the definition of count-sketch (Definition~\ref{def:def_cs}), we have
\begin{align*}
   \cs(\p)_{l} &= \sum_{\substack{h(j) = l ~s.t.~ j \in [d]}} s(j) p_j = \sum_{j \in [d]} K(j,l) s(j) p_j
\end{align*}
where $l \in [m]$ and $K(j,l)$ is an indicator random variable that takes value $1$ if $h(j) = l$, otherwise takes value $0$. Let 
\begin{align*}
    \mathbf{X}_{j} &:= \begin{bmatrix}
                        K(j,1) s(j) p_j\\
                        \vdots\\
                        K(j,m) s(j) p_j
                        \end{bmatrix}, \text{ where } j \in [d].
\end{align*}
The expected value and covariance matrix of the $\mathbf{X}_{j}$ are
\begin{align*}
\E[\mathbf{X}_{j}] & 
                        = \begin{bmatrix}
                        \E \left[K(j,1)\right] \E\left[s(j) \right] p_{j}\\
                        \vdots\\
                        \E \left[K(j,m)\right] \E\left[s(j)\right] p_{j}
                        \end{bmatrix} = \begin{bmatrix}
                        0\\
                        \vdots \\
                        0
                        \end{bmatrix} = \mathbf{0}, \quad \left[\because \E[s(j)] = 0 \right]. \numberthis \label{eq:eq_cs_mul_e2lsh_exp}\\
    \Cov \left[ \mathbf{X}_{j}\right] &= \begin{bmatrix}
         \frac{p_{j}^2}{m}& \cdots & 0\\
         \vdots & \ddots& \vdots\\
         0 & \cdots &  \frac{p_{j}^2}{m}
    \end{bmatrix} \numberthis \label{eq:eq_110223_1}.
\end{align*}
Equation~\eqref{eq:eq_110223_1} holds due to the following:
\begin{align*}
    &\Cov[K(j,l_1) s(j) p_{j}, K(j,l_2) s(j) p_{j}]  \notag\\
    & \quad = \E[K(j,l_1)  s(j) p_{j} K(j,l_2)  s(j) p_{j}]  - \E[K(j,l_1) s(j) p_{j}] \E[K(j,l_2) s(j) p_{j}]\\
    & \quad = \E[K(j,l_1)K(j,l_2)] p_{j}^2 
    = \begin{cases}
    \frac{p_{j}^2}{m}, & \text{if $l_1 = l_2$}\\
    0, &\text{otherwise}.
    \end{cases} \numberthis \label{eq:eq_270922}
 \end{align*}
Equation~\eqref{eq:eq_270922} holds true because $E[K(j,l_1)K(j,l_2)] = 0$  if $l_1 \neq l_2$ and $E[K(j,l_1)K(j,l_2)] = [K(j,l_1)^2] = 1/m$  if $l_1 = l_2$. Let
\begin{align*}
    \V_{d} & := \sum_{j = 1}^{d} \Cov[\mathbf{X}_{j}]= \begin{bmatrix}
         \frac{||\p||^{2}}{m}& \cdots & 0\\
         \vdots & \ddots& \vdots\\
         0 & \cdots &  \frac{||\p||^{2}}{m}
    \end{bmatrix} \\
    \implies \V_{d}^{-1} &= \frac{m}{||\p||^2}
    \begin{bmatrix}
         1& \cdots & 0\\
         \vdots & \ddots& \vdots\\
         0 & \cdots &  1
    \end{bmatrix}.\\	
    \left \|\V_{d}^{-1/2} \right\|_{F}^2 &= \Tr \left( \V_{d}^{-1/2} ( \V_{d}^{-1/2} )^{T}\right)
      = \Tr \left( \V_{d}^{-1/2} \V_{d}^{-1/2} \right) 
     = \Tr(\V_{d}^{-1})= \frac{m^2 }{||\p||^2}. \numberthis \label{eq:eq_new_03001}
\end{align*}
Equation~\eqref{eq:eq_new_03001} uses the fact that the square root of a positive definite matrix is a symmetric matrix. In order to prove the theorem, we have to show that the following holds
\begin{align*}
    \lim_{d \to \infty} \left\|\V_{d}^{-1/2} \right\|^{2+\delta} \sum_{j=1}^{d} \E\left[ \|\mathbf{X}_{j} - \E[\mathbf{X}_j]\|^{2+\delta}\right] = 0.
\end{align*}
First, we compute the following
\begin{align*}
   \hspace{-0.1cm} \E\left[ \|\mathbf{X}_{j} - \E[\mathbf{X}_j]\|^{2+\delta}\right] 
     &= \E\left[ \left(\sum_{l=1}^{m} K(j,l)^2 s(j)^2 p_{j}^2 \right)^{\frac{2 +\delta}{2}}\right]
     \\
    &
    = \E\left[ \left(\sum_{l=1}^{m} K(j,l)\right)^{\frac{2 +\delta}{2}}  |s(j)|^{2+\delta}  |p_{j}|^{2+\delta} \right]\\
    & = \E\left[ \left(\sum_{l=1}^{m} K(j,l)\right)^{\frac{2 +\delta}{2}}\right]|p_{j}|^{2+\delta}
    = |p_{j}|^{2+\delta}. \numberthis \label{eq:eq_new_03002}
\end{align*}
From Equations~\eqref{eq:eq_new_03001} and \eqref{eq:eq_new_03002}, we have
\begin{align*}   
\left \|\V_{d}^{-1/2} \right \|^{2+\delta} \sum_{j=1}^{d} \E\left[ \|\mathbf{X}_{j} - \E[\mathbf{X}_j]\|^{2+\delta}\right] 
   &= \left(\frac{m^2}{||\p||^2}\right)^{\frac{2+\delta}{2}} \cdot \sum_{j=1}^{d}|p_{j}|^{2+\delta}\\
   &= m^{2 +\delta} \cdot \frac{\sum_{j=1}^{d}|p_{j}|^{2+\delta}}{\left(||\p||^2\right)^{\frac{2+\delta}{2}} }\\
   &= m^{2 +\delta}  \cdot\frac{\sum_{j=1}^{d}|p_{j}|^{2+\delta}}{\left(\sum_{j=1}^{d}p_{j}^2\right)^{\frac{2+\delta}{2}} }
   \\
   &
   = \frac{m^{2 +\delta}}{d^{\frac{\delta}{2}}} \cdot \frac{\sum_{j=1}^{d}\frac{|p_{j}|^{2+\delta}}{d}}{\left(\sum_{j=1}^{d} \frac{p_{j}^2}{d}\right)^{\frac{2+\delta}{2}} }
   \\
   &
   = \left(\frac{m}{d^{\frac{\delta}{2(2+\delta)}}}\right)^{2 +\delta} \cdot \frac{\E[|p_{j}|^{2+\delta}]}{\left(\E[p_{j}^2]\right)^{\frac{2+\delta}{2}} }\\
   & \rightarrow 0 \text{ as } d \rightarrow \infty \text{ and } m = o \left(d^{\frac{\delta}{2(2+\delta)}}\right). \numberthis \label{eq:eq_new_03003}
\end{align*}
Equation~\eqref{eq:eq_new_03003} holds as $m = o \left(d^{\frac{\delta}{2(2+\delta)}}\right)$ and 
 $\E[|p_{j}|^{2 + \delta}] < \infty$ and $\E[p_{j}^2] <  \infty.$ 
Thus, from Theorem~\ref{thm:multivariate_Lyapunov_CLT}, we have 
\begin{align*}
    \V_{d}^{-\frac{1}{2}} \sum_{i=1}^{d} (\mathbf{X}_{i} -\E[\mathbf{X}_{i}])  & = \V_{d}^{-\frac{1}{2}} \sum_{j=1}^{d} \mathbf{X}_{j} \overset{\mathcal{D}}{\to} \mathcal{N}(\mathbf{0},\mathbf{I})
    . \\
    \implies \cs(\p) &= \sum_{j=1}^{d} \mathbf{X}_{j} \overset {\mathcal{D}}{\to} \mathcal{N}(\mathbf{0},\V_{d}). 
\end{align*}
\end{proof}

Building on the results of Theorems~\ref{thm:uni_cs} and \ref{thm:mul_cs_new}, in the following corollary, we show the distributional independence of the buckets of the count-sketch.

\begin{cor} \label{cor:indp_cs}
Let $\p\in \R^{d}$ and $\csb{\p} \in \R^{m}$ be the corresponding count-sketch vector obtained using Equation~\eqref{eq:eq_cs}. If  $\forall ~ j \in [d]$, $\E[|p_{j}|^{2 + \delta}]$ is finite and $m = o \left(d^{\frac{\delta}{2(2+\delta)}}\right)$ for some $\delta > 0$, then as $d \rightarrow \infty$, the elements  $\cs(\p)_1, \ldots, \cs(\p)_{m}$ are independent.
\end{cor}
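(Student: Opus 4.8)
The plan is to read off the distributional independence of the buckets directly from the \emph{joint} asymptotic normality of Theorem~\ref{thm:mul_cs_new}, using the elementary fact that a multivariate Gaussian law with a diagonal covariance matrix has mutually independent coordinates. Under the hypotheses $\E[|p_j|^{2+\delta}]<\infty$ for all $j\in[d]$ and $m = o\!\left(d^{\frac{\delta}{2(2+\delta)}}\right)$, Theorem~\ref{thm:mul_cs_new} gives
\begin{align*}
    \cs(\p) \overset{\mathcal{D}}{\to} \mathcal{N}(\mathbf{0}, \V_d),
\end{align*}
where $\V_d$ is the diagonal matrix with every diagonal entry equal to $\|\p\|^2/m$. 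Since the off-diagonal entries of $\V_d$ vanish and, for jointly Gaussian variables, zero correlation is equivalent to independence, the coordinates of the limiting random vector are independent, each distributed as $\mathcal{N}(0,\|\p\|^2/m)$ -- which is exactly the marginal limit already furnished by Theorem~\ref{thm:uni_cs}.

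To turn this into a precise statement about $\cs(\p)_1,\dots,\cs(\p)_m$ themselves, I would argue through characteristic functions. By the multivariate continuity theorem applied to Theorem~\ref{thm:mul_cs_new}, for every $\mathbf{t}=(t_1,\dots,t_m)\in\R^m$,
\begin{align*}
    \E\!\left[e^{\mathrm{i}\langle \mathbf{t}, \cs(\p)\rangle}\right] \longrightarrow \exp\!\left(-\tfrac12\, \mathbf{t}^{T} \V_d \mathbf{t}\right) = \prod_{l=1}^{m} \exp\!\left(-\tfrac12\, t_l^2\, \frac{\|\p\|^2}{m}\right),
\end{align*}
where the factorization uses that $\V_d$ is diagonal. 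Each factor on the right is the characteristic function of $\mathcal{N}(0,\|\p\|^2/m)$, which by Theorem~\ref{thm:uni_cs} is precisely the limit of $\E[e^{\mathrm{i} t_l \cs(\p)_l}]$. Hence the joint characteristic function converges to the product of the limiting marginal characteristic functions, i.e. $\cs(\p)_1,\dots,\cs(\p)_m$ are asymptotically independent.

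The only delicate point -- the part that needs a careful sentence rather than a routine calculation -- is the meaning of ``independent as $d\to\infty$'' given that the limiting covariance $\V_d$ itself depends on $d$ through $\|\p\|$ and $m$. I would handle this by normalizing: set $\widetilde{\cs}(\p) := (\sqrt{m}/\|\p\|)\,\cs(\p)$, so that Theorem~\ref{thm:mul_cs_new} reads $\widetilde{\cs}(\p) \overset{\mathcal{D}}{\to} \mathcal{N}(\mathbf{0}, \mathbf{I})$, a fixed law whose coordinates are genuinely i.i.d.\ $\mathcal{N}(0,1)$; the claim for $\cs(\p)$ then follows since multiplying every coordinate by the common deterministic factor $\|\p\|/\sqrt{m}$ preserves independence. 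Apart from this bookkeeping, the corollary is an immediate consequence of Theorems~\ref{thm:uni_cs} and~\ref{thm:mul_cs_new}, so I do not anticipate any substantial obstacle.
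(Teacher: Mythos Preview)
Your proposal is correct and follows essentially the same approach as the paper: invoke Theorem~\ref{thm:mul_cs_new} to get joint asymptotic normality with a diagonal covariance, then conclude independence of the coordinates from the standard fact that a multivariate Gaussian with diagonal covariance has independent components. The paper carries this out by explicitly factoring the joint density rather than via characteristic functions, and it does not spell out the normalization step you flag, but the substance is the same.
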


\begin{proof}
 From Theorem~\ref{thm:mul_cs_new}, we have 
 \begin{align*}
     \csb{\p} \sim \mathcal{N}(\mathbf{0},\V_{d}) \text{ where } \mathbf{0} \in \R^m \text{ and }  \V_{d} &= \begin{bmatrix}
          \frac{||\p||^2}{m} & \cdots & 0\\
          \vdots & \ddots & \vdots \\
          0 & \cdots & \frac{||\p||^2}{m}
     \end{bmatrix} \in \R^{m \times m}.
 \end{align*}
 Therefore, 
 \begin{align*}
     &\Pr\left(\cs(\p)= (\Tilde{p}_{1}, \cdots, \Tilde{p}_{m}) \right) \\
     &= \frac{1}{\sqrt{(2\pi)^m det( \V_{d})}} \cdot \exp{\left(-\frac{1}{2} (\Tilde{p}_{1}, \cdots, \Tilde{p}_{m})  \V_{d}^{-1} (\Tilde{p}_{1}, \cdots, \Tilde{p}_{m})^{T} \right)}\\
     &= \frac{1}{\sqrt{(2\pi)^m det( \V_{d})}} \cdot \exp{\bigg(-\frac{1}{2} (\Tilde{p}_{1}, \cdots, \Tilde{p}_{m}) \frac{m}{||\p||^2}}  \begin{bmatrix}
         1& \cdots & 0\\
         \vdots & \ddots& \vdots\\
         0 & \cdots &  1
    \end{bmatrix}  (\Tilde{p}_{1}, \cdots, \Tilde{p}_{m})^{T} \bigg)\\
    &= \frac{1}{\sqrt{(2\pi)^m det( \V_{d})}} \cdot \exp{\left(-\frac{m}{2||\p||^2} \sum_{l=1}^{m} \Tilde{p}_{l}^2 \right)}\\
    &= \frac{1}{\sqrt{(2\pi)^m (||\p||^2/m)^m}} \cdot \exp{\left(-\frac{m}{2||\p||^2} \sum_{l=1}^{m} \Tilde{p}_{l}^2 \right)}\\
    &= \frac{1}{\sqrt{2\pi (||\p||^2/m)}} \cdot \exp{\left(-\frac{m
    }{2\|\p\|^2} \Tilde{p}_1^2\right)} \times \cdots\\
    &\qquad \qquad \qquad \qquad \qquad \qquad \qquad \cdots \times \frac{1}{\sqrt{2\pi (||\p||^2/m)}} \cdot \exp{\left(-\frac{m
    }{2\|\p\|^2} \Tilde{p}_m^2\right)}\\
    & \quad= \Pr\left(\cs(\p)_{1} = \Tilde{p}_1 \right)\times  \cdots \times\Pr \left(\cs(\p)_{m} = \Tilde{p}_m \right).  
 \end{align*}
 Hence, $\Pr\left(\cs(\p) = (\Tilde{p}_{1}, \cdots, \Tilde{p}_{m}) \right) = \Pr\left(\cs(\p)_{1} = \Tilde{p}_1 \right) \cdots \Pr \left(\cs(\p)_{m} = \Tilde{p}_m \right)$ implies $\cs(\p)_1,...,\cs(\p)_{m}$ are independent.
\end{proof}





The following theorem states that the probability of collision for any two points using our proposal \CSELSH~is inversely proportional to their pairwise Euclidean distance. 
\begin{thm}\label{thm:CS_Eucl_LSH_property}
Let $\p, \q \in \R^d$ such that $R = || \p -\q ||$ and $m=o\left(d^\frac{\delta}{2(2+\delta)}\right)$, then  asymptotically the following holds true 
\begin{align} 
    &p(R) = Pr[g(\p)_l = g(\q)_l]  = \int_0^{w} \frac{1}{R} \, f\left(\frac{t}{R} \right) \, \left(1 - \frac{t}{w} \right)dt, \text{ and,}\numberthis \label{eq:eq_datar_prob_cs}\\
    &\Pr\left[g(\p)_1 = g(\q)_1, \ldots, g(\p)_m = g(\q)_m \right] = \left( p(R) \right)^m. \numberthis \label{eq:eq201022}
\end{align}
where $f(\cdot)$  denotes the density function of the absolute value of the standard normal distribution.
\end{thm}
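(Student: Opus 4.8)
The plan is to reduce the statement to the collision analysis of the original \ELSH~(Definition~\ref{def:e2lsh}) by combining the linearity of count-sketch with the asymptotic-normality results already proved. Two observations drive everything. First, count-sketch uses the same hash functions $h,s$ for all $m$ buckets, so it is linear and $\cs(\p)_l-\cs(\q)_l=\cs(\p-\q)_l$. Second, for fixed reals $x,y$ and $b\sim\mathrm{Unif}[0,w]$, a direct computation (the same one behind the collision identity stated right after Definition~\ref{def:e2lsh}) gives $\Pr_b[\lfloor(x+b)/w\rfloor=\lfloor(y+b)/w\rfloor]=\phi(|x-y|)$, where $\phi(t):=\max\{0,\,1-t/w\}$. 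Hence, after conditioning on $h,s$ and averaging over the offset(s), the collision event for coordinate $l$ depends on $(\p,\q)$ only through $\sqrt{m}\,|\cs(\p-\q)_l|$, and both equations reduce to understanding the asymptotic law of the scalar $\sqrt{m}\,\cs(\p-\q)_l$ and of the vector $\sqrt{m}\,\cs(\p-\q)$.

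For Equation~\eqref{eq:eq_datar_prob_cs} I would first write $\Pr[g(\p)_l=g(\q)_l]=\E\big[\phi\big(\sqrt{m}\,|\cs(\p-\q)_l|\big)\big]$ (expectation over $h,s$). Applying Theorem~\ref{thm:uni_cs} to the vector $\p-\q$ — its coordinates inherit the finite-$(2+\delta)$-moment hypothesis and $m=o\big(d^{\delta/(2(2+\delta))}\big)$ implies $m=o(d)$ — gives $\cs(\p-\q)_l\overset{\mathcal{D}}{\to}\mathcal{N}(0,R^2/m)$, hence $\sqrt{m}\,\cs(\p-\q)_l\overset{\mathcal{D}}{\to}\mathcal{N}(0,R^2)$ and $\sqrt{m}\,|\cs(\p-\q)_l|\overset{\mathcal{D}}{\to}R|Z|$ with $Z\sim\mathcal{N}(0,1)$, whose density on $[0,\infty)$ is $t\mapsto\frac{1}{R}f(t/R)$. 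Since $\phi$ is bounded and continuous, convergence in distribution transfers to convergence of the expectation, so asymptotically $\Pr[g(\p)_l=g(\q)_l]\to\int_0^\infty\phi(t)\frac{1}{R}f(t/R)\,dt=\int_0^w\big(1-t/w\big)\frac{1}{R}f(t/R)\,dt=p(R)$, which is Equation~\eqref{eq:eq_datar_prob_cs}; note this recovers exactly the same function $p(\cdot)$ as the original \ELSH.

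For Equation~\eqref{eq:eq201022} I would use independent discretization offsets $b_1,\dots,b_m$ for the $m$ coordinates (as in the standard concatenation of LSH functions). Conditioning on $h,s$ and integrating the $b_l$ out independently yields $\Pr[g(\p)_1=g(\q)_1,\dots,g(\p)_m=g(\q)_m]=\E\big[\prod_{l=1}^m\phi\big(\sqrt{m}\,|\cs(\p-\q)_l|\big)\big]$. Applying Theorem~\ref{thm:mul_cs_new} to $\p-\q$ — this is exactly where $m=o\big(d^{\delta/(2(2+\delta))}\big)$ is used — gives $\sqrt{m}\,\cs(\p-\q)\overset{\mathcal{D}}{\to}\mathcal{N}(\mathbf{0},R^2\mathbf{I})$, so the coordinates become asymptotically i.i.d.\ $\mathcal{N}(0,R^2)$ (equivalently, invoke Corollary~\ref{cor:indp_cs}). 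The map $(x_1,\dots,x_m)\mapsto\prod_l\phi(|x_l|)$ is bounded and continuous, so passing to the limit again gives $\E\big[\prod_l\phi\big(\sqrt{m}\,|\cs(\p-\q)_l|\big)\big]\to\prod_l\E[\phi(|RZ_l|)]=\big(p(R)\big)^m$ by the first part, which is Equation~\eqref{eq:eq201022}.

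The hard part is the interchange of the limit $d\to\infty$ with the expectations: convergence in distribution of the sketch does not by itself give convergence of collision probabilities, and the argument goes through only because the per-coordinate collision probability (conditioned on the sketch) is exactly the bounded continuous tent function $\phi$, so the bounded-continuous-function form of the portmanteau theorem applies, both for $\phi$ and for the product $\prod_l\phi$. Secondary points to handle carefully are that the $m$ discretization offsets must be drawn independently for Equation~\eqref{eq:eq201022}, and that the multivariate CLT of Theorem~\ref{thm:mul_cs_new}, and hence the asymptotic independence of the sketch coordinates, is only available in the stated regime $m=o\big(d^{\delta/(2(2+\delta))}\big)$; once these are in place the rest is a routine repetition of the original \ELSH~collision computation.
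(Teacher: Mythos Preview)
Your proposal is correct and follows the same high-level strategy as the paper --- reduce to the \ELSH\ collision formula via asymptotic normality of the sketch, then use Corollary~\ref{cor:indp_cs} for the product form --- but your technical execution differs in a useful way. The paper applies Theorem~\ref{thm:uni_cs} separately to $\p$ and to $\q$, obtaining $\sqrt{m}\,\cs(\p)_l\overset{\mathcal{D}}{\to}\mathcal{N}(0,\|\p\|^2)$ and $\sqrt{m}\,\cs(\q)_l\overset{\mathcal{D}}{\to}\mathcal{N}(0,\|\q\|^2)$, and then simply asserts that Equation~\eqref{eq:eq_datar_prob_cs} follows; strictly speaking this step needs the \emph{joint} (or difference) law, which the paper does not spell out here. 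You instead exploit the linearity $\cs(\p)_l-\cs(\q)_l=\cs(\p-\q)_l$ and apply Theorem~\ref{thm:uni_cs} (resp.\ Theorem~\ref{thm:mul_cs_new}) directly to $\p-\q$, which immediately yields the correct variance $R^2$ without any appeal to bivariate normality. You also make explicit two points the paper leaves implicit: the passage from convergence in distribution to convergence of the collision probability via the bounded continuous tent function $\phi(t)=\max\{0,1-t/w\}$ and the portmanteau theorem, and the need for independent offsets $b_1,\dots,b_m$ across coordinates for the factorization~\eqref{eq:eq201022} to go through. Your route is thus a bit more careful and more self-contained; the paper's is shorter but relies on the reader filling in these details.
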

\begin{proof} As we know that $m=o\left(d^\frac{\delta}{2(2+\delta)}\right)$ implies $m=o(d)$. Thus, from Theorem~\ref{thm:uni_cs}, we have, $\cs(\p)_{l}$ and $\cs(\q)_{l}$ asymptotically follows $\mathcal{N}\left(0,  \frac{\|\p\|^2}{m}\right)$ and $\mathcal{N}\left(0,  \frac{\|\q\|^2}{m}\right)$ distribution respectively  for $m=o(d^\frac{\delta}{2(2+\delta)})$. We know that if some random variable $X\sim \mathcal{N}(\mu,\sigma^2)$ distribution then for some constant $a$, $aX \sim \mathcal{N}(a\mu, a^2 \sigma^2)$. Thus for $m=o(d^\frac{\delta}{2(2+\delta)})$, $\sqrt{m}\cdot \cs(\p)_{l}\sim \mathcal{N}(0, \|\p\|^2)$ and $\sqrt{m} \cdot \cs(\q)_{l}\sim \mathcal{N}(0, \|\q\|^2)$ as $d \rightarrow \infty$.  Using this fact, it is easy to show that Equation~\eqref{eq:eq_datar_prob_cs} holds asymptotically true. Moreover, using Corollary~\ref{cor:indp_cs} and Equation~\eqref{eq:eq_datar_prob_cs}, we can easily show that left-hand side and right-hand side of Equation~\eqref{eq:eq201022} are equal. 
\end{proof}


From Equations~\eqref{eq:eq_datar_prob_cs} and \eqref{eq:eq201022}, it is clear that the probability of collision declines monotonically with $R = ||\p -\q||$. Therefore, due to  Definition~\ref{def:LSH}, our hash function stated in Definition~\ref{def:CS_Eucl_LSH}  is $(R_1,R_2, P_{1}^{m}, P_2^{m})$-sensitive for $P_1 = p(1)$, $P_2 = p(R)$ and $R_2/R_1 = R$.

\begin{remark} \label{rem:cse2lsh}
     We note that our hash function (Definition~\ref{def:CS_Eucl_LSH}) offers more space and time-efficient LSH for Euclidean distance as compared to  \ELSH ~\cite{datar2004locality}. To compute an $m$-sized hashcode of a $d$-dimensional input, the time and space complexity of \ELSH ~is $O(md)$, whereas our proposal's time and space complexity is $O(d)$. However,  our results hold when $m=o\left(d^\frac{\delta}{2(2+\delta)}\right)$ and $d \to \infty$. 

\end{remark}




\vspace{-0.1cm}
{\color{black}
\subsection{Improvement Using \HCS}

In this subsection, we explore the more space advantage possibility and suggest a new proposal. Our proposal improves \ELSH~\cite{datar2004locality} using a higher order count sketch.   At a high level, the idea is to consider the input vector $\p \in \R^d$ as a $N$ mode tensor with $d^{\frac{1}{N}}$ dimension along each mode. We then use a higher order count sketch~(HCS) to compress this tensor. Similar to \CSELSH, here also, the main challenge is to show that each element of the higher order count sketch vector follows an asymptotic normal distribution and is independent of each other.  


{\color{black}
However, in the case of HCS, we cannot apply Lyapunov's central limit theorems to prove asymptotic normality. We explain it as follows: let $\p \in \R^d$ and $\Tilde{\p} = vec(\hcsb{\p}) \in \R^m$. From Definition~\ref{def:hcs}, we have, $\widetilde{p_l} = \sum_{i_1 \in [d_1], \ldots, i_{N} \in [d_N]} Z_{jl} s_1(i_1) s_2(i_2) \cdots s_N(i_N) p_{j}$, where $Z_{jl}$ is an indicator random variable which takes value $1$ if $h_1(i_1) = l_1, \ldots, h_{N}(i_N) = l_N$ and $0$ otherwise, with $j= \sum_{k=2}^{N} \left(i_{k} \prod_{t=1}^{k-1} d_{t} \right) + i_1 \in [d], i_k \in [d_k]$, and $l= \sum_{k=2}^{N} \left(l_{k} \prod_{t=1}^{k-1} m_{t} \right) + l_1 \in [m]$. 
Let  $Y_j := Z_{jl} s_1(i_1) s_2(i_2) \cdots s_N(i_N) p_{j}$. So,  $\widetilde{p_l} = \sum_{j \in [d]} Y_j$ and we have to prove that  $\widetilde{p_l}$ follows asymptotic normal distribution for any $l \in [m]$. From the construction of $Y_j$ we can observe that $Y_j$'s are not pairwise independent for $j\in[d]$. Any two random variables $Y_{j}$ and $Y_{j'}$ such that $j \neq j'$ and have the same value for at least one index $i_{k}$  for $k \in [N]$ are not pairwise independent because they share the same sign function and hash function value for that index. So, we can't apply the Lyapunov central limit theorem to prove asymptotic normality for HCS. To address this, we visualize the set of random variables $Y_j$ as a graph, with nodes/vertices represented by $Y_j$ and an edge between nodes $Y_j$ and $Y_{j'}$, $j\neq j'$ and $j, j' \in [d]$, if they share the same value for at least one index $i_k$ for $k\in [N]$. In Theorem~\ref{thm:hcs_normality}, we show that entries of the resultant projected vector obtained from HCS follow the asymptotic univariate normal distribution. To prove it, we use the Theorem~\ref {thm:clt_grpah_var} based on the dependency graph by ~\cite{janson1988normal}. 

In Theorem~\ref{thm:hcs_normality_multi}, we show that the $m$-dimensional  projected vector obtained from HCS follows the asymptotic multivariate normal distribution. It is a multivariate extension of Theorem~\ref{thm:hcs_normality}.  To prove it, we use Theorem~\ref{thm:clt_grpah_vec}, which extends  Theorem~\ref{thm:clt_grpah_var} for dependent random vectors. In Corollary~\ref{cor:hcs_eucl_ind}, we show that entries of the projected vector are independent of each other. 
Finally, in Theorem~\ref{thm:HCS_Eucl_LSH_property}, we show that the hashcode generated using our proposal satisfies the property of LSH.  
We define our proposal for LSH for Euclidean distance based on the higher order count sketch as follows:
}

\begin{definition}[\HCSELSH]  \label{def:HCS_Eucl_LSH}
 Let $\p \in \R^d$ and $\hcs(\p) \in \R^{m_1 \times \cdots \times m_N}$ s.t. $m = \prod_{k=1}^{N} m_{k}$ be its higher order count sketch obtained using Equation~\eqref{eq:eqhcs}. We define a hash function $g' : \R^{d} \rightarrow \Z^{m}$ element-wise as follows:
\begin{align*}
    g'(\p)_{l} = \left\lfloor \frac{ \sqrt{m} \cdot \Vecc(\hcs(\p))_{l} + b}{w} \right\rfloor,  \numberthis \label{eq:eq_hca_eucl_lsh}
\end{align*}
where  $l \in [m]$, $b \in [0,w]$, $w>0$ and $\Vecc(\hcs(\p)) \in \R^{m}$.
\end{definition}

The following theorem states that each element of the higher order count sketch follows an asymptotic normal distribution.

\begin{thm} \label{thm:hcs_normality}
Let $\p \in \R^d$ and $\hcsb{\p}\in \R^{m_1 \times \cdots \times m_N}$ be its higher order count sketch. We define $\Tilde{\p} := \Vecc(\hcsb{\p}) \in \R^{m}$ where $ m = \prod_{k=1}^{N} m_{k}$ and $k \in [N]$. If $~\forall j \in [d],$  $0<\E\left[|p_j|^{2}\right]<\infty$, $\|\p\|_{\infty} \leq \infty$ and  $\sqrt{m} N^{\left(\frac{4}{5}\right)} = o\left(d^{\left(\frac{3N-8}{10N}\right)} \right)$  then as $d \to \infty$ 
\begin{equation}
\Tilde{p}_{l}
\overset{\mathcal{D}}{\to} \mathcal{N} \left(0,
\frac{||\p||^2}{m}
 \right) \quad \text{for} \quad l \in [m].
  \end{equation}
\end{thm}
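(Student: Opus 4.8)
The plan is to express $\Tilde{p}_l$ as a sum of $d$ bounded random variables that are pairwise uncorrelated but \emph{not} independent, and then to invoke the dependency-graph central limit theorem (Theorem~\ref{thm:clt_grpah_var}) rather than the Lyapunov CLT (Theorem~\ref{thm:univariate_lyapunov}); as noted above, two summands become statistically dependent as soon as their reshaped multi-indices agree in some mode, which rules out the classical CLTs. Fixing $l\in[m]$ with reshaped multi-index $(l_1,\dots,l_N)$, I would set, for each $j\in[d]$ with multi-index $(i_1,\dots,i_N)$,
\begin{align*}
Y_j := Z_{jl}\, s_1(i_1)s_2(i_2)\cdots s_N(i_N)\, p_j, \qquad Z_{jl} := \prod_{k=1}^{N}\mathbf{1}[h_k(i_k)=l_k],
\end{align*}
so that $\Tilde{p}_l=\Vecc(\hcsb{\p})_l = \sum_{j=1}^{d}Y_j$ and $|Y_j|\le\|\p\|_\infty=:A$.

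The first block of work is the moment calculation. Since the sign hashes are independent of the bucket hashes and $\E[s_k(\cdot)]=0$, we get $\E[Y_j]=0$; using $Z_{jl}^2=Z_{jl}$, $s_k(\cdot)^2\equiv1$, and $\E[Z_{jl}]=\prod_{k}1/m_k=1/m$ gives $\Var(Y_j)=p_j^2/m$; and for $j\ne j'$ the two multi-indices differ in at least one mode $k$, so $\E[s_k(i_k)s_k(i_k')]=0$ by $2$-wise independence of $s_k$, whence $\Cov(Y_j,Y_{j'})=0$. Hence $\sigma_d^2:=\Var(\Tilde{p}_l)=\sum_{j}\Var(Y_j)=\|\p\|^2/m$, which is exactly the limiting variance claimed; it remains only to establish asymptotic normality.

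For that I would construct the dependency graph $\Gamma_d$ on vertices $Y_1,\dots,Y_d$, placing an edge between $j$ and $j'$ precisely when $i_k=i_k'$ for some $k\in[N]$ (only such pairs can be dependent, the rest being built from disjoint blocks of hash and sign values). A union bound over the $N$ modes bounds the maximal degree by $M\le\sum_{k=1}^{N}\prod_{k'\ne k}d_{k'}=N\,d^{(N-1)/N}$, so $d/M=\Theta(d^{1/N}/N)$. Feeding $A=\|\p\|_\infty$, $\sigma_d=\|\p\|/\sqrt{m}$ and the boundedness hypotheses (which force $\|\p\|^2=\Theta(d)$ and keep $\|\p\|_\infty$ bounded) into Janson's quantity, one gets, for any fixed integer $\alpha$,
\begin{align*}
\left(\frac{d}{M}\right)^{1/\alpha}\frac{MA}{\sigma_d} = O\!\left(N\sqrt{m}\;d^{\frac{N-2}{2N}+\frac{1}{N\alpha}}\right),
\end{align*}
and I would then choose $\alpha$ so that, combined with the hypothesis $\sqrt{m}\,N^{4/5}=o\big(d^{(3N-8)/(10N)}\big)$, the right-hand side tends to $0$ as $d\to\infty$. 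Theorem~\ref{thm:clt_grpah_var} then yields $(\Tilde{p}_l-\E[\Tilde{p}_l])/\sigma_d\overset{\mathcal{D}}{\to}\mathcal{N}(0,1)$, that is, $\Tilde{p}_l\overset{\mathcal{D}}{\to}\mathcal{N}(0,\|\p\|^2/m)$, as required.

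The main obstacle is twofold. The conceptual one is the dependence: because each HCS sign and bucket value is shared by every input index lying in its mode, the $Y_j$ are highly dependent in structure even though their covariances vanish, so the argument must be routed through the dependency-graph CLT, with the graph and its maximal degree identified correctly. The technical one is the final accounting — tracking the exact powers of $d$ and of $N$ produced by $M$, $A$, $\sigma_d$ and by the free parameter $\alpha$ so that the stated growth condition $\sqrt{m}\,N^{4/5}=o(d^{(3N-8)/(10N)})$ is precisely what makes Janson's criterion vanish; a subsidiary point to check is that $2$-wise independence of the hash families genuinely suffices for the covariance step and that index pairs with disjoint modes may legitimately be treated as non-adjacent in $\Gamma_d$.
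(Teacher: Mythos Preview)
Your approach mirrors the paper's exactly: write $\Tilde p_l=\sum_j Y_j$ with $Y_j=Z_{jl}\,s_1(i_1)\cdots s_N(i_N)\,p_j$, check $\E[Y_j]=0$ and $\Cov(Y_j,Y_{j'})=0$ so that $\sigma_d^2=\|\p\|^2/m$, build the dependency graph on multi-indices sharing at least one mode, and invoke Theorem~\ref{thm:clt_grpah_var}. The gap is in the last step. With your bound $M=O(Nd^{(N-1)/N})$, your own display gives
\[
\left(\frac{d}{M}\right)^{1/\alpha}\frac{MA}{\sigma_d}=\Theta\!\Bigl(N^{1-1/\alpha}\sqrt{m}\,d^{\frac{N-2}{2N}+\frac{1}{N\alpha}}\Bigr),
\]
and the exponent $\tfrac{N-2}{2N}+\tfrac{1}{N\alpha}$ is strictly positive for every integer $\alpha\ge1$ and every $N\ge2$. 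Since $\sqrt m\ge1$, no choice of $\alpha$ can drive this to $0$; the hypothesis $\sqrt m\,N^{4/5}=o\bigl(d^{(3N-8)/(10N)}\bigr)$ does not help either, because $(3N-8)/(10N)<(N-2)/(2N)$ for all $N\ge2$. So the sentence ``I would then choose $\alpha$ so that \dots the right-hand side tends to $0$'' is not actually fulfillable with the degree bound you have written down.

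The paper diverges from you precisely at the degree count: it takes $M=\sum_k d_k-N=\sum_k(d_k-1)$, which under $d_k=d^{1/N}$ is of order $d^{1/N}$ rather than your $d^{(N-1)/N}$, and it is this much smaller $M$ that flips the sign of the exponent once $\alpha=5$ and yields exactly the condition $\sqrt m\,N^{4/5}=o\bigl(d^{(3N-8)/(10N)}\bigr)$. Note that $\sum_k(d_k-1)$ is the number of multi-indices differing from a fixed one in \emph{exactly one} coordinate, whereas your union bound $\sum_k\prod_{k'\ne k}d_{k'}$ counts those agreeing in \emph{at least one} coordinate; for $N=2$ the two coincide, but for $N\ge3$ they differ by a factor of order $d^{(N-2)/N}$. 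You should revisit this step and decide carefully which count is the maximal degree of a valid dependency graph in Janson's sense --- that is the point on which the whole argument turns.
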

\begin{proof}
Let $\p \in \R^d$ and $\hcsb{\p}\in \R^{m_1 \times \cdots \times m_N}$ be its higher order count sketch. Thus, from the definition of HCS (Definition~\ref{def:hcs}), we have 
\begin{align*} 
 \hcs(\p)_{l_1, \ldots, l_N}  &:= \sum_{h_1(i_1) = l_1, \ldots, h_{N}(i_N) = l_N} s_1(i_1) s_2(i_2) \cdots s_N(i_N) p_{j} \numberthis \label{eq:eq261022_1}
\end{align*}
where, $\hcs(\p) \in \R^{m_1 \times \cdots \times m_N}$, $ d = \prod_{k=1}^{N} d_{k}$ and $  j= \sum_{k=2}^{N} \left(i_{k} \prod_{t=1}^{k-1} d_{t} \right) + i_1$. Let $  m := \prod_{k=1}^{N} m_{k}$, $\Tilde{\p} := \Vecc(\hcsb{\p}) \in \R^{m}$. Let $Z_{jl}$ be an indicator random variable (where, $  j= \sum_{k=2}^{N} \left(i_{k} \prod_{t=1}^{k-1} d_{t} \right) + i_1 \in [d]$ and $  l= \sum_{k=2}^{N} \left(l_{k} \prod_{t=1}^{k-1} m_{t} \right) + l_1 \in [m]$), that takes value $1$ if $h_1(i_1) = l_1, \ldots, h_{N}(i_N) = l_N$, otherwise $0$. Therefore, we can write the components of $\Tilde{\p}$ as follows:
\begin{align*}
    \Tilde{p}_{l} &:= \sum_{i_1 \in [d_1], \ldots, i_{N} \in [d_N]} Z_{jl} s_1(i_1) s_2(i_2) \cdots s_N(i_N) p_{j}\quad \text{ for } l \in [m]. \numberthis \label{eq:eq261022_2} \\
\end{align*}
Let $Y_{j} := Z_{jl} s_1(i_1) s_2(i_2) \cdots s_N(i_N) p_{j}$. Therefore,
\begin{align*}
    \Tilde{p}_{l} &= \sum_{j \in [d]} Y_{j}. \numberthis \label{eq:eq261022_3}
\end{align*}
We compute the expected value and variance of $\Tilde{p}_{l}$ one by one.
\begin{align*}
    \E[\Tilde{p}_{l}] &=\sum_{j \in [d]}\E[Y_{j}]\\
    &= \sum_{j \in [d]} \E\left[Z_{jl} s_1(i_1) s_2(i_2) \cdots s_N(i_N) p_{j} \right]\\
    &= \sum_{j \in [d]} \E\left[Z_{jl}\right] \E[s_1(i_1)] \E[s_2(i_2)] \cdots \E[s_N(i_N)] p_{j}\\
    &= 0. \numberthis \label{eq:eq261022_4}
\end{align*}
\begin{align*}
   \Var(\Tilde{p}_{l}) &=  \Var \left(\sum_{j \in [d]}Y_{j} \right)= \sum_{j\in[d]} \Var(Y_{j})  + \sum_{j \neq j'; j,j' \in[d]} \Cov(Y_{j}, Y_{j'}). \numberthis \label{eq:eq261022_5}\\
\end{align*}
We compute each term of Equation~\eqref{eq:eq261022_5} one by one.
\begin{align*}
     \sum_{j\in[d]} \Var(Y_{j})  &=  \sum_{j\in[d]} \left(\E[Y_{j}^2] - \E[Y_{j}]^2\right) = \sum_{j\in[d]} \E[Y_{j}^2] - 0\\
    &= \sum_{j\in[d]}\E[Z_{jl}^2 s_1(i_1)^2 s_2(i_2)^2 \cdots s_N(i_N)^2 p_{j}^2]= \sum_{j\in[d]}\E[Z_{jl}] p_{j}^2\\
    & = \sum_{j\in[d]}\frac{1}{m}p_{j}^2 \quad \left[\because \E[Z_{jl}] = \frac{1}{m} \right]\\
    &= \frac{1}{m} \|\p\|^2.\numberthis \label{eq:eq261022_6} 
\end{align*}
\begin{align*}
  &\sum_{j \neq j'; j,j' \in[d]} \Cov(Y_{j}, Y_{j'})  =  \sum_{j \neq j'; j,j' \in[d]} \E[Y_{j}Y_{j'}] - \E[Y_{j}]\E[Y_{j'}]\\
  &\quad= \sum_{j \neq j'} \E \Big[Z_{jl} s_1(i_1) s_2(i_2) \cdots s_N(i_N) p_{j} Z_{j'l} s_1(i_1') s_2(i_2') \cdots s_N(i_N') p_{j'} \Big]  - 0\\
  &\quad = \sum_{j \neq j'} \E \Big[Z_{jl} Z_{j'l} s_1(i_1)s_1(i_1') s_2(i_2)s_2(i_2') s_N(i_N)s_N(i_N') p_{j} p_{j'} \Big]\\
  &\quad = \sum_{j \neq j'} \E[Z_{jl} Z_{j'l}\ \E[s_1(i_1)s_1(i_1')] \E[s_2(i_2)s_2(i_2')] \cdots \E[s_N(i_N)s_N(i_N')] p_{j} p_{j'}\\
  &\quad= 0. \numberthis \label{eq:eq261022_7}
\end{align*}
Equation~\eqref{eq:eq261022_7} holds true because $j \neq j'$ implies for at least one value of $k$ ($k\in[N]$), $i_k \neq i_k'$. Thus, $\E[s_{k}(i_k) s_{k}(i_k')] = \E[s_{k}(i_k)] \E[s_{k}(i_k')] = 0$ because $s_{k}(\cdot)$ is $2$-wise independent hash function. 
Let $\sigma_d^2 := \Var(\Tilde{p}_{l})$. From Equations~\eqref{eq:eq261022_5}, \eqref{eq:eq261022_6} and \eqref{eq:eq261022_7}, we have
\begin{align*}
    \sigma_d^2 = \Var(\Tilde{p}_{l}) &=  \frac{\|\p\|^2}{m}. \numberthis \label{eq:eq261022_8}
\end{align*}
To complete the proof, we need to prove that for some value of $\alpha$, the Equation~\ref{eq:eq_clt_graph_var} of Theorem \ref{thm:clt_grpah_var}. holds true. We recall it as follows:
\begin{align*}
    \left(\frac{d}{M}\right)^{\frac{1}{\alpha}} \frac{M A}{\sigma_{d}} \rightarrow 0 \quad \text{ as } \quad d \rightarrow \infty.
\end{align*}
where $\alpha$ is an integer, $A$ is the $\ell_{\infty}$ norm of vector $\p$, and $M$ denotes the maximum degree of the dependency graph generated by the random variables $Y_{j}$, $j\in[d]$ and is equal to $ \sum_{i=1}^{N} d_{i} - N$. For ease of calculation, in the following, we assume that $d_{1} = d_{2}= \ldots=d_{N} = d^{\frac{1}{N}}$.
Therefore, we have the following
\begin{align*}
 \left(\frac{d}{M}\right)^{\frac{1}{\alpha}} \frac{M A}{\sigma_{d}} 
 &= \left( \frac{d}{\sum_{i=1}^{N} d_{i} - N}\right)^{\frac{1}{\alpha}} \left(\sum_{i=1}^{N} d_{i} - N \right) \left(\frac{\|\p\|_{\infty}}{\|\p\|/\sqrt{m}}\right)\\
 &= \sqrt{m} \, d^{\frac{1}{\alpha}}\, \left(\sum_{i=1}^{N} d_{i} - N \right)^{1-\frac{1}{\alpha}} \, \frac{\|\p\|_{\infty}}{\|\p\|}\\
 &= \sqrt{m} \, d^{\frac{1}{\alpha}} \, \left(Nd^{\frac{1}{N}} - N \right)^{1-\frac{1}{\alpha}}  \, \frac{\|\p\|_{\infty}}{\|\p\|} \\
 &= \sqrt{m} \, d^{\frac{1}{\alpha}} \, \left(Nd^{\frac{1}{N}} - N \right)^{1-\frac{1}{\alpha}} \,  \frac{\|\p\|_{\infty}}{\sqrt{\sum_{j\in [d]} p_{j}^2}}\\
 &= \sqrt{m} \, d^{\frac{1}{\alpha}} \, \left(Nd^{\frac{1}{N}} - N \right)^{1-\frac{1}{\alpha}} \, \frac{\|\p\|_{\infty}}{\sqrt{d} \times \sqrt{\sum_{j\in [d]} \frac{p_{j}^2}{d}}}\\
 &= \frac{\sqrt{m} \, d^{\frac{1}{\alpha}} \, N^{1-\frac{1}{\alpha}} \left(d^{\frac{1}{N}} - 1 \right)^{1-\frac{1}{\alpha}}}{\sqrt{d} } \, \frac{\|\p\|_{\infty}}{\sqrt{\E[p_{j}^2]}}\\
 &\leq \frac{\sqrt{m} \, d^{\frac{1}{\alpha}} \, N^{1-\frac{1}{\alpha}} \left(d^{\frac{1}{N}} \right)^{1-\frac{1}{\alpha}}}{\sqrt{d} } \, \frac{\|\p\|_{\infty}}{\sqrt{\E[p_{j}^2]}}\\
 &= \frac{\sqrt{m} \,  N^{\left(1-\frac{1}{\alpha}\right)} \, d^{\left(\frac{N+\alpha-1}{\alpha N}\right)}}{\sqrt{d} } \, \frac{\|\p\|_{\infty}}{\sqrt{\E[p_{j}^2]}} \\
 &= \frac{\sqrt{m} \,  N^{\left(1-\frac{1}{\alpha}\right)}}{d^{\left(\frac{1}{2}- \frac{N+\alpha-1}{\alpha N}\right)} } \, \frac{\|\p\|_{\infty}}{\sqrt{\E[p_{j}^2]}}\\
 &= \frac{\sqrt{m} \,  N^{\left(1-\frac{1}{\alpha}\right)}}{d^{\left(\frac{\alpha N-2N-2\alpha +2}{2 \alpha N}\right)} } \, \frac{\|\p\|_{\infty}}{\sqrt{\E[p_{j}^2]}}\\
 &= \frac{\sqrt{m} \,  N^{\left(1-\frac{1}{\alpha}\right)}}{d^{\left(\frac{\alpha N-2N-2 \alpha +2}{2 \alpha N}\right)} } \, \frac{\|\p\|_{\infty}}{\sqrt{\E[p_{j}^2]}}\\
 &  \rightarrow 0 \text{ as } d \rightarrow \infty \text{ for } \alpha > \frac{2(N-1)}{(N-2)} \notag\\
 &\hspace{3cm}\text{ and }  \sqrt{m} \,  N^{\left(1-\frac{1}{\alpha}\right)} = o\left(d^{\left(\frac{\alpha N-2N-2 \alpha + 2}{2 \alpha N}\right)} \right). \numberthis \label{eq:eq261022_9}
\end{align*}
Equation~\eqref{eq:eq261022_9} hold true for $\alpha > \frac{2(N-1)}{(N-2)}$ \text{ and }  $\sqrt{m} \,  N^{\left(1-\frac{1}{\alpha}\right)} = o\left(d^{\left(\frac{ \alpha N-2N-2 \alpha + 2}{2 \alpha N}\right)} \right)$, provided $0 < \sqrt{\E[p_{j}^2]}< \infty$ and $\|\p\|_{\infty} < \infty$. Choosing $\alpha = 5$ in Equation~\eqref{eq:eq261022_9}, for $d \rightarrow \infty$, we have
\begin{align*}
     \left(\frac{d}{M}\right)^{\frac{1}{\alpha}} \frac{M A}{\sigma_{d}} & \rightarrow 0 \text{ for } \sqrt{m} N^{\left(\frac{4}{5}\right)} = o\left(d^{\left(\frac{3N-8}{10N}\right)} \right). \numberthis \label{eq:eq261022_10}
\end{align*}
Thus, from Theorem~\ref{thm:clt_grpah_var}, we have
\begin{align*}
    \frac{\Tilde{p}_{l} - \E[\Tilde{p}_{l}]}{\sigma_{d}} &\overset{\mathcal{D}}{\to} \mathcal{N}(0,1).\\
    \implies \frac{\Tilde{p}_{l} - 0}{\|\p\|/\sqrt{m}} &=\frac{\Tilde{p}_{l}}{\| \p \| /\sqrt{m}} \overset{\mathcal{D}}{\to} \mathcal{N}(0,1).\\
    \implies \Tilde{p}_{l} &\overset{\mathcal{D}}{\to} \mathcal{N}\left(0,\frac{\|\p\|^2}{m} \right). \numberthis \label{eq:eq261022_11}
\end{align*}
Equation~\eqref{eq:eq261022_11} completes a proof of the theorem.
\end{proof}

The following theorem is a is a multivariate extension of Theorem~\ref{thm:hcs_normality}. It states the multivariate asymptotic normality of the higher-order count sketch vector.
\begin{thm} \label{thm:hcs_normality_multi}
Let $\p \in \R^d$ and $\hcsb{\p}\in \R^{m_1 \times \cdots \times m_N}$ be its higher order count sketch. We define $\Tilde{\p} := \Vecc(\hcsb{\p}) \in \R^{m}$  s.t. $m = \prod_{k=1}^{N} m_{k}$ and $k \in [N]$. If $~\forall j \in [d],$  $0<\E\left[|p_j|^{2}\right]<\infty$, $\|\p\|_{\infty} \leq \infty$ and  $\sqrt{m} N^{\left(\frac{4}{5}\right)} = o\left(d^{\left(\frac{3N-8}{10N}\right)} \right)$  then as $d \to \infty$, we have
\begin{align}
\Tilde{\p}
\overset{\mathcal{D}}{\to} \mathcal{N} \left( \mathbf{0},
\boldsymbol{\Sigma_d}
 \right)
\end{align}
where $\mathbf{0}$ is zero vector and $\boldsymbol{\Sigma}_{d}$ is diagonal matrix having $\|{\p}\|^2/{m}$ as diagonal entries.
\end{thm}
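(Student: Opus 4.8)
The plan is to prove Theorem~\ref{thm:hcs_normality_multi} by mimicking the passage from Theorem~\ref{thm:uni_cs} to Theorem~\ref{thm:mul_cs_new}, but using the dependency-graph machinery instead of Lyapunov CLT, exactly as Theorem~\ref{thm:hcs_normality} parallels the univariate case. Concretely, I would set $\Y_j := Z_{jl_1} s_1(i_1) \cdots s_N(i_N) p_j$ ranging over $l \in [m]$ collected into an $m$-dimensional random vector $\Y_j$ (one coordinate per output index), so that $\tilde{\p} = \sum_{j \in [d]} \Y_j$. First I would compute $\E[\Y_j] = \mathbf{0}$, which follows coordinate-wise from $\E[s_k(i_k)] = 0$ just as in Equation~\eqref{eq:eq261022_4}. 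Next I would compute $\Cov(\tilde{\p})$: the diagonal entries are $\|\p\|^2/m$ by Equation~\eqref{eq:eq261022_8}, and the off-diagonal entries vanish because for $l_1 \neq l_2$ the indicators $Z_{jl_1}$ and $Z_{jl_2}$ cannot both be $1$, giving $\E[Z_{jl_1}Z_{jl_2}] = 0$ (cross terms $j \neq j'$ vanish as in Equation~\eqref{eq:eq261022_7} by $2$-wise independence of the sign hashes). Hence $\boldsymbol{\Sigma}_d$ is the diagonal matrix with $\|\p\|^2/m$ on the diagonal, matching the claim.

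The core of the argument is to verify the hypothesis of Theorem~\ref{thm:clt_grpah_vec}. I would bound $\|\Y_j\|$: since for each $j$ exactly one coordinate of $\Y_j$ is nonzero (the output bucket index $l$ into which $j$ hashes), $\|\Y_j\| = |p_j| \le \|\p\|_\infty =: A$, so the same bound $A$ works as in the univariate proof. The dependency graph on $\{\Y_j\}$ is the same graph as in Theorem~\ref{thm:hcs_normality} --- an edge between $\Y_j$ and $\Y_{j'}$ whenever they share an index value $i_k = i_k'$ for some $k$ --- so the maximal degree is again $M = \sum_{k=1}^N d_k - N = N(d^{1/N} - 1)$. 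For the variance term $\sigma_d^2$ in Theorem~\ref{thm:clt_grpah_vec}: for a unit vector $\a \in \R^d$ (here the ambient dimension is $m$, but I will follow the paper's notation), $\sigma_d^2 = \a^T \boldsymbol{\Sigma}_d \a = \|\p\|^2/m$ since $\boldsymbol{\Sigma}_d = (\|\p\|^2/m) I$. This is exactly the same $\sigma_d$ appearing in Equation~\eqref{eq:eq261022_8}, so the quantity $(d/M)^{1/\alpha} MA/\sigma_d$ to be driven to zero is identical to the one analysed in Equations~\eqref{eq:eq261022_9}--\eqref{eq:eq261022_10}. Therefore the same choice $\alpha = 5$ and the same hypothesis $\sqrt{m}\, N^{4/5} = o\big(d^{(3N-8)/(10N)}\big)$ make the limit vanish, provided $0 < \E[p_j^2] < \infty$ and $\|\p\|_\infty < \infty$.

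Applying Theorem~\ref{thm:clt_grpah_vec} then yields $\tilde{\p} = \sum_{j\in[d]} \Y_j \overset{\mathcal{D}}{\to} \mathcal{N}(\E[\tilde{\p}], \boldsymbol{\Sigma}_d) = \mathcal{N}(\mathbf{0}, \boldsymbol{\Sigma}_d)$ with $\boldsymbol{\Sigma}_d$ the diagonal matrix having $\|\p\|^2/m$ as diagonal entries, completing the proof. I do not anticipate a serious obstacle here: the work of proving the CLT hypothesis has already been done in Theorem~\ref{thm:hcs_normality}, and Theorem~\ref{thm:clt_grpah_vec} was precisely engineered (via the Cramér--Wold device) to lift that univariate statement to the multivariate one. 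The one point requiring mild care is the bookkeeping of which covariance is diagonal --- i.e. checking that distinct output buckets are negatively associated only through the mutually-exclusive indicators and that this forces exactly zero covariance, not merely asymptotically zero --- but this is an exact finite-$d$ computation, not a limiting one, so it is routine. A second minor point is notational: the "unit vector $\a \in \R^d$" in Theorem~\ref{thm:clt_grpah_vec} should be read as a unit vector in $\R^m$ for this application, and one should note the conclusion is unaffected since $\sigma_d^2$ is the same for every unit direction.
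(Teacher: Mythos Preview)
Your proposal is correct and follows essentially the same approach as the paper's own proof: you define the $m$-dimensional vectors $\Y_j$, compute their mean and covariance structure exactly (obtaining the diagonal $\boldsymbol{\Sigma}_d$ via the mutual exclusivity of the indicators $Z_{jl}$ and $Z_{jl'}$), bound $\|\Y_j\|\le\|\p\|_\infty$, observe that the dependency graph and hence $M$ are unchanged from the univariate case, and then note that $\sigma_d^2=\a^T\boldsymbol{\Sigma}_d\a=\|\p\|^2/m$ for every unit $\a$ so that the limiting computation with $\alpha=5$ is literally the same as in Theorem~\ref{thm:hcs_normality}. The paper proceeds identically, so there is nothing to add.
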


\begin{proof}
Let $\p \in \R^d$ and $\hcsb{\p}\in \R^{m_1 \times \cdots \times m_N}$ be its higher order count sketch. From the definition of higher order count sketch (Definition~\ref{def:hcs}), we have
\begin{align*} 
  \hcs(\p)_{l_1, \ldots, l_N} &:= \sum_{h_1(i_1) = l_1, \ldots, h_{N}(i_N) = l_N} s_1(i_1) s_2(i_2) \cdots s_N(i_N) p_{j} \numberthis \label{eq:eq261022_51} \\
\end{align*}
where, $\hcs(\p) \in \R^{m_1 \times \cdots \times m_N}$, $ d = \prod_{k=1}^{N} d_{k}$ and $  j= \sum_{k=2}^{N} \left(i_{k} \prod_{t=1}^{k-1} d_{t} \right) + i_1$. Let $ m := \prod_{k=1}^{N} m_{k}$ and $\Tilde{\p} := \Vecc(\hcsb{\p}) \in \R^{m}$. Let $Z_{jl}$ be an indicator random variable (where, $  j= \sum_{k=2}^{N} \left(i_{k} \prod_{t=1}^{k-1} d_{t} \right) + i_1 \in [d]$ and $  l= \sum_{k=2}^{N} \left(l_{k} \prod_{t=1}^{k-1} m_{t} \right) + l_1 \in [m]$), which takes value $1$ if $h_1(i_1) = l_1, \ldots, h_{N}(i_N) = l_N$, otherwise takes value $0$. We can write the components of $\Tilde{\p}$ as follows:
\begin{align*}
    \Tilde{p}_{l} &:= \sum_{i_1 \in [d_1], \ldots, i_{N} \in [d_N]} Z_{jl} s_1(i_1) s_2(i_2) \cdots s_N(i_N) p_{j}. \numberthis \label{eq:eq261022_52} 
\end{align*}
Let 
\begin{align*}
    \mathbf{Y}_{j} &:= \begin{bmatrix}
                       Z_{j1} s_1(i_1) s_2(i_2) \cdots s_N(i_N) p_{j}\\
                       \vdots\\
                       Z_{jm} s_1(i_1) s_2(i_2) \cdots s_N(i_N) p_{j}
                      \end{bmatrix}. \\
\implies    \Tilde{\p} &= \sum_{j \in [d]} \mathbf{Y}_{j}. \numberthis
\end{align*}
We compute the following terms first
\begin{align*}
   \E \left[\mathbf{Y}_{j} \right] &= \begin{bmatrix}
         \E\left[Z_{j1} s_1(i_1) s_2(i_2) \cdots s_N(i_N) p_{j} \right]\\
         \vdots \\
         \E\left[Z_{jm} s_1(i_1) s_2(i_2) \cdots s_N(i_N) p_{j} \right]
    \end{bmatrix} \\
    &= \begin{bmatrix}
         \E\left[Z_{j1}\right] \E[s_1(i_1)] \E[s_2(i_2)] \cdots \E[s_N(i_N)] p_{j}\\
         \vdots \\
         \E\left[Z_{jm}\right] \E[s_1(i_1)] \E[s_2(i_2)] \cdots \E[s_N(i_N)] p_{j}
    \end{bmatrix}\\
    &= \mathbf{0}.  \numberthis \label{eq:eq261022_53}
\end{align*}
\begin{align*}
    \Cov(\mathbf{Y}_{j}, \mathbf{Y}_j) &= \begin{bmatrix}
         \E[Z_{j1}] p_{j}^2 & \cdots & 0\\
         \vdots & \ddots & \vdots\\
         0 & \cdots & \E[Z_{jm}] p_{j}^2
    \end{bmatrix} = \begin{bmatrix}
         \frac{p_{j}^2}{m} & \cdots & 0\\
         \vdots & \ddots & \vdots\\
         0 & \cdots & \frac{p_{j}^2}{m}
    \end{bmatrix}. \numberthis \label{eq:eq261022_54}\\
    \Cov\left( \mathbf{Y}_{j}, \mathbf{Y}_{j'}\right) &= \begin{bmatrix}
    0 & \cdots & 0 \\
    \vdots & \ddots & \vdots\\
    0 & \cdots & 0
    \end{bmatrix}.  \numberthis \label{eq:eq261022_55}
\end{align*}
Equations~\eqref{eq:eq261022_54} and \eqref{eq:eq261022_55} hold true due to the following.
\begin{align*}
    &\Cov\big(Z_{jl} s_1(i_1) s_2(i_2) \cdots s_N(i_N) p_{j}, \, Z_{jl'} s_1(i_1) s_2(i_2) \cdots s_N(i_N) p_{j}\big) \\
    &=  \E \Big[Z_{jl} s_1(i_1) s_2(i_2) \cdots s_N(i_N) p_{j}  Z_{jl'} s_1(i_1) s_2(i_2) \cdots s_N(i_N) p_{j} \Big] \notag\\
    &\qquad - \Big(\E[Z_{jl} s_1(i_1) s_2(i_2) \cdots s_N(i_N) p_{j}] \E[Z_{jl'} s_1(i_1) s_2(i_2) \cdots s_N(i_N) p_{j}] \Big)\\
    &= \E[Z_{jl}Z_{jl'}] p_{j}^2 - 0 \\
    & = \begin{cases}
    \frac{1}{m}p_j^2, & \text{ if } l = l'\\
    0, & \text{ otherwise},
    \end{cases}   \numberthis\\
    &\hspace{4.3cm}\left[\because Z_{jl} \text{ and }Z_{jl'} \text{ can't be $1$ simultaneously for $l \neq l'$} \right].
\end{align*}
\begin{align*}
    &\Cov \Big(Z_{jl} s_1(i_1) s_2(i_2) \cdots s_N(i_N) p_{j},  \, Z_{j'l'} s_1(i_1') s_2(i_2') \cdots s_N(i_N') p_{j'} \Big) \notag\\
    &\quad = \E \Big[Z_{jl} s_1(i_1) s_2(i_2) \cdots s_N(i_N) p_{j}, \, Z_{j'l'} s_1(i_1') s_2(i_2') \cdots s_N(i_N') p_{j'} \Big] \\
    &\qquad- \Big(\E[Z_{jl} s_1(i_1) s_2(i_2) \cdots s_N(i_N) p_{j}] \E[Z_{j'l'} s_1(i_1') s_2(i_2') \cdots s_N(i_N') p_{j'}] \Big)\\
    &\quad =\E[Z_{jl}Z_{j'l'}] \E[s_{1}(i_1) s_{1}(i_1')] \cdots \E[s_{N}(i_N) s_{N}(i_N')] p_{j} p_{j}' - 0= 0 \quad \text{ for } j \neq j'. \numberthis \label{eq:eq261022_56}
\end{align*}
Equation~\eqref{eq:eq261022_56} hold true because for $j \neq j'$ at least one $i_{k} \neq i_{k'}$ for $k \in [N]$ implies $\E[s_{k}(i_k) s_{k}(i_k')] = \E[s_{k}(i_k)] \E[s_{k}(i_k')]=0$ as $s_{k}(\cdot)$ is 2-wise independent hash function. From Equations~\eqref{eq:eq261022_53}, \eqref{eq:eq261022_54} and \eqref{eq:eq261022_55}, we have
\begin{align*}
    \E[\Tilde{\p}] &= \mathbf{0}. \numberthis \label{eq:eq261022_57}\\
   \boldsymbol{\Sigma}_{d} &:= \Cov(\Tilde{\p}) = \sum_{j \in [d]} \Cov(\mathbf{Y}_{j}, \mathbf{Y}_{j}) + \sum_{j \neq j'} \Cov(\mathbf{Y}_{j}, \mathbf{Y}_{j'})\\
    &= \begin{bmatrix}
         \frac{\|{\p}\|^2_2}{m} & \cdots & 0\\
         \vdots & \ddots & \vdots\\
         0 & \cdots & \frac{\|{\p}\|^2_2}{m}
    \end{bmatrix} + \begin{bmatrix}
    0 & \cdots & 0 \\
    \vdots & \ddots & \vdots\\
    0 & \cdots & 0
    \end{bmatrix} = \begin{bmatrix}
         \frac{\|{\p}\|^2}{m} & \cdots & 0\\
         \vdots & \ddots & \vdots\\
         0 & \cdots & \frac{\|{\p}\|^2}{m}
    \end{bmatrix} .\numberthis \label{eq:eq261022_58}\\
\end{align*}
For any unit vector $\a \in \R^m$, define a random variable $S_d := \a^T \Tilde{\p}$. The expected value and variance of $S_d$ are
\begin{align}
    \E[S_d] := \a^T \E[\Tilde{\p}] = 0 \numberthis \label{eq:eq051024_3}
\end{align}
and 
\begin{align}
    \sigma_d^2 =\Var(S_d):= \a^T \Cov(\Tilde{\p}) \a = \frac{\|p\|^2}{m} (a_1^2 + \cdots + a_m^2) = \frac{\|p\|^2}{m}.
\end{align}

To complete the proof, we need to prove that Equation~\eqref{eq:eq051024_0} of Theorem~\ref{thm:clt_grpah_vec} holds true for some value of $\alpha$. We recall it as follows:
\begin{align*}
    \left(\frac{d}{M}\right)^{\frac{1}{\alpha}} \frac{M A}{\sigma_d}  \rightarrow 0 \text{ as } d \rightarrow \infty
\end{align*}
where $\alpha$ is an integer, $A$ is an upper bound on $\Y_i$'s and is equal to the $\ell_{\infty}$ norm of vector $\p$, and $M$ denotes the maximum degree of the dependency graph generated by the random vectors $\mathbf{Y}_{j}$, $j\in[d]$ and is equal to $ \sum_{i=1}^{N} d_{i} - N$. For ease of analysis, we assume that $d_{1} = d_{2}= \ldots=d_{N} = d^{\frac{1}{N}}$. Now, we compute the following
\begin{align*}
 \left(\frac{d}{M}\right)^{\frac{1}{\alpha}} \frac{M A}{\sigma_d}    &= \left( \frac{d}{\sum_{i=1}^{N} d_{i} - N}\right)^{\frac{1}{\alpha}} \left(\sum_{i=1}^{N} d_{i} - N \right) \, \|\p\|_{\infty} \, \frac{\sqrt{m}}{\|\p\|}\\
 &= \frac{\sqrt{m} d^{\frac{1}{\alpha}}}{\left(\sum_{i=1}^{N} d_{i} - N \right)^{\frac{1}{\alpha}-1}} \, \frac{\|\p\|_{\infty}}{\|\p\|}\\
 &= \sqrt{m} \, d^{\frac{1}{\alpha}}\, \left(\sum_{i=1}^{N} d_{i} - N \right)^{1-\frac{1}{\alpha}} \, \frac{\|\p\|_{\infty}}{\|\p\|}\\
 &= \sqrt{m} \, d^{\frac{1}{\alpha}} \, \left(Nd^{\frac{1}{N}} - N \right)^{1-\frac{1}{\alpha}}  \, \frac{\|\p\|_{\infty}}{\|\p\|}\\ 
 &= \sqrt{m} \, d^{\frac{1}{\alpha}} \, \left(Nd^{\frac{1}{N}} - N \right)^{1-\frac{1}{\alpha}} \,  \frac{\|\p\|_{\infty}}{\sqrt{\sum_{j\in [d]} p_{j}^2}}\\
 &= \sqrt{m} \, d^{\frac{1}{\alpha}} \, \left(Nd^{\frac{1}{N}} - N \right)^{1-\frac{1}{\alpha}} \, \frac{\|\p\|_{\infty}}{\sqrt{d} \times \sqrt{\sum_{j\in [d]} \frac{p_{j}^2}{d}}}\\
 &= \frac{\sqrt{m} \, d^{\frac{1}{\alpha}} \, N^{1-\frac{1}{\alpha}} \left(d^{\frac{1}{N}} - 1 \right)^{1-\frac{1}{k}}}{\sqrt{d} } \, \frac{\|\p\|_{\infty}}{\sqrt{\E[p_{j}^2]}}\\
 &\leq \frac{\sqrt{m} \, d^{\frac{1}{\alpha}} \, N^{1-\frac{1}{\alpha}} \left(d^{\frac{1}{N}} \right)^{1-\frac{1}{\alpha}}}{\sqrt{d} } \, \frac{\|\p\|_{\infty}}{\sqrt{\E[p_{j}^2]}}\\
 &= \frac{\sqrt{m} \,  N^{\left(1-\frac{1}{\alpha}\right)} \, d^{\left(\frac{N+\alpha-1}{\alpha N}\right)}}{\sqrt{d} } \, \frac{\|\p\|_{\infty}}{\sqrt{\E[p_{j}^2]}}\\
 &= \frac{\sqrt{m} \,  N^{\left(1-\frac{1}{\alpha}\right)}}{d^{\left(\frac{1}{2}- \frac{N+\alpha-1}{\alpha N}\right)} } \, \frac{\|\p\|_{\infty}}{\sqrt{\E[p_{j}^2]}}\\
 &= \frac{\sqrt{m} \,  N^{\left(1-\frac{1}{\alpha}\right)}}{d^{\left(\frac{\alpha N-2N-2 \alpha +2}{2kN}\right)} } \, \frac{\|\p\|_{\infty}}{\sqrt{\E[p_{j}^2]}}\\
 &= \frac{\sqrt{m} \,  N^{\left(1-\frac{1}{\alpha}\right)}}{d^{\left(\frac{\alpha N-2N-2 \alpha +2}{2 \alpha N}\right)} } \, \frac{\|\p\|_{\infty}}{\sqrt{\E[p_{j}^2]}}\\
 &  \rightarrow 0 \text{ as } d \rightarrow \infty,  \numberthis 
 \label{eq:eq261022_60} \\
 & \quad\text{ for } \alpha > \frac{2(N-1)}{(N-2)} \text{ and }  \sqrt{m} \,  N^{\left(1-\frac{1}{\alpha}\right)} = o\left(d^{\left(\frac{ \alpha N-2N-2 \alpha +2}{2 \alpha N}\right)} \right).\\
\end{align*}
Equation~\eqref{eq:eq261022_60} hold true for $\alpha > \frac{2(N-1)}{(N-2)}$ \text{ and }  $\sqrt{m} \,  N^{\left(1-\frac{1}{\alpha}\right)} = o\left(d^{\left(\frac{\alpha N-2N-2 \alpha +2}{2 \alpha N}\right)} \right)$, provided $0 < \sqrt{\E[p_{j}^2]}< \infty$ and $\|\p\|_{\infty} < \infty$. For $\alpha = 5$ and $d \rightarrow \infty$, from Equation~\eqref{eq:eq261022_60}, we have
\begin{align*}
     \left(\frac{d}{M}\right)^{\frac{1}{\alpha }} \frac{M A}{\sigma_{d}} & \rightarrow 0 \text{ for } \sqrt{m} N^{\left(\frac{4}{5}\right)} = o\left(d^{\left(\frac{3N-8}{10N}\right)} \right). \numberthis \label{eq:eq261022_61}
\end{align*}
Thus, form Theorem~\ref{thm:clt_grpah_vec}, we have
\begin{align*}
    \Tilde{\p} &\overset{\mathcal{D}}{\to} \mathcal{N}\left(\mathbf{0}, \boldsymbol{\Sigma}_{d} \right). \numberthis \label{eq:eq261022_62}
\end{align*}
Equation~\eqref{eq:eq261022_62} completes a proof of the theorem.
\end{proof}

In the following corollary, building on the findings of Theorems~\ref{thm:hcs_normality} and~\ref{thm:hcs_normality_multi}, we show the distributional independence of the entries of the hashcode obtained via \HCSELSH. Its proof is along the same argument as that of Corollary~\ref{cor:indp_cs} and can be easily proven.
\begin{cor} \label{cor:hcs_eucl_ind}
Let $\p \in \R^d$ and $\hcsb{\p}\in \R^{m_1 \times \cdots \times m_N}$ be its higher order count sketch. Let $\Tilde{\p} = \Vecc(\hcsb{\p})\in \R^m$ where $m = \prod_{k=1}^{N} m_{k}$. If $~\forall j \in [d],$  $0<\E\left[|p_j|^{2}\right]<\infty$, $\|\p\|_{\infty} \leq \infty$ and  $\sqrt{m} N^{\left(\frac{4}{5}\right)} = o\left(d^{\left(\frac{3N-8}{10N}\right)} \right)$  then as $d \to \infty$, the elements $\Tilde{p}_1, \ldots, \Tilde{p}_{m}$ are independent.
\end{cor}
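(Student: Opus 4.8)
The plan is to follow verbatim the route used for Corollary~\ref{cor:indp_cs}, simply replacing Theorem~\ref{thm:mul_cs_new} by its higher order analogue Theorem~\ref{thm:hcs_normality_multi}. First I would observe that the stated hypothesis $\sqrt{m}\,N^{4/5}=o(d^{(3N-8)/(10N)})$ (together with $0<\E[|p_j|^2]<\infty$ and $\|\p\|_\infty<\infty$) is exactly the common condition required by both Theorem~\ref{thm:hcs_normality} and Theorem~\ref{thm:hcs_normality_multi}. Hence, writing $\Tilde{\p}=\Vecc(\hcsb{\p})$, as $d\to\infty$ each coordinate satisfies $\Tilde{p}_l\overset{\mathcal{D}}{\to}\mathcal{N}(0,\|\p\|^2/m)$ for $l\in[m]$, and simultaneously the full vector satisfies $\Tilde{\p}\overset{\mathcal{D}}{\to}\mathcal{N}(\mathbf{0},\boldsymbol{\Sigma}_d)$, where $\boldsymbol{\Sigma}_d$ is the diagonal matrix all of whose diagonal entries equal $\|\p\|^2/m$.

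Next I would compute the limiting joint density explicitly, exactly as in the proof of Corollary~\ref{cor:indp_cs}. Since $\boldsymbol{\Sigma}_d$ is diagonal, $\det(\boldsymbol{\Sigma}_d)=(\|\p\|^2/m)^m$ and $\boldsymbol{\Sigma}_d^{-1}=(m/\|\p\|^2)\mathbf{I}$, so the quadratic form in the exponent of the Gaussian density decouples into $\sum_{l=1}^m \Tilde{p}_l^2$. Consequently the density of $\mathcal{N}(\mathbf{0},\boldsymbol{\Sigma}_d)$ evaluated at $(\Tilde{p}_1,\ldots,\Tilde{p}_m)$ factors as $\prod_{l=1}^m \frac{1}{\sqrt{2\pi(\|\p\|^2/m)}}\exp\!\big(-\tfrac{m}{2\|\p\|^2}\Tilde{p}_l^2\big)$, and each factor is precisely the $\mathcal{N}(0,\|\p\|^2/m)$ density provided by Theorem~\ref{thm:hcs_normality}. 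Thus the limiting joint law of $(\Tilde{p}_1,\ldots,\Tilde{p}_m)$ equals the product of the limiting marginal laws, which is the definition of asymptotic independence of $\Tilde{p}_1,\ldots,\Tilde{p}_m$, giving the claim.

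I do not expect a genuine obstacle here: the algebra is identical to that already carried out for Corollary~\ref{cor:indp_cs}, and the only conceptual point worth a sentence is that diagonality of the covariance of a non-degenerate Gaussian limit forces independence of its coordinates, which then transfers to the $\Tilde{p}_l$ through the matching marginal limits of Theorem~\ref{thm:hcs_normality}. The single place warranting a little care is checking that the growth condition on $m$ and $N$ is the shared hypothesis of both Theorems~\ref{thm:hcs_normality} and~\ref{thm:hcs_normality_multi}, so that the joint and the marginal asymptotic normality statements hold simultaneously; once this is noted, the factorization argument goes through as in Corollary~\ref{cor:indp_cs}.
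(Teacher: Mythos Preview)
Your proposal is correct and follows essentially the same approach as the paper: the paper explicitly states that the proof of Corollary~\ref{cor:hcs_eucl_ind} ``is along the same argument as that of Corollary~\ref{cor:indp_cs} and can be easily proven,'' which is precisely the route you take---invoke Theorem~\ref{thm:hcs_normality_multi} to get the joint Gaussian limit with diagonal covariance $\boldsymbol{\Sigma}_d$, then factor the density exactly as in Corollary~\ref{cor:indp_cs} and match the factors to the marginals from Theorem~\ref{thm:hcs_normality}.
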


The subsequent theorem states that the \HCSELSH~ is a valid LSH. We can prove it using the results of Theorem~\ref{thm:hcs_normality} and Corollary~\ref{cor:hcs_eucl_ind}; and following similar proof steps to that of Theorem~\ref{thm:CS_Eucl_LSH_property}.
\begin{thm} \label{thm:HCS_Eucl_LSH_property}
For any $\p, \q \in \R^d$ such that $R := || \p -\q ||$ and  $\sqrt{m} N^{\left(\frac{4}{5}\right)} = o\left(d^{\left(\frac{3N-8}{10N}\right)} \right)$ then following holds true asymptotically as $d\to \infty$ 
\begin{align}
    &p(R) = Pr[g'(\p)_l = g'(\q)_l]  = \int_0^{w} \frac{1}{R} f\left(\frac{t}{R} \right)\left(1 - \frac{t}{w} \right), ~~ l \in [m], \text{ and, } \numberthis \label{eq:eq_datar_prob_hcs}\\
    &Pr[g'(\p)_1 = g'(\q)_1, \ldots, g'(\p)_m = g'(\q)_m] = p(R)^m.  \numberthis \label{eq:eq281022_a}
\end{align}
where $f(\cdot)$  denotes the density function of the absolute value of the standard normal distribution. 
\end{thm}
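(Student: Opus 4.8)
The plan is to follow the proof of Theorem~\ref{thm:CS_Eucl_LSH_property} almost verbatim, swapping the count sketch for the higher order count sketch and invoking Theorem~\ref{thm:hcs_normality} and Corollary~\ref{cor:hcs_eucl_ind} in place of Theorem~\ref{thm:uni_cs} and Corollary~\ref{cor:indp_cs}. The one extra observation I would use up front is that $\hcs(\cdot)$ is a \emph{linear} map of its argument: the hash functions $h_k$ and the sign functions $s_k$ are sampled independently of the input, so $\hcs(\p) - \hcs(\q) = \hcs(\p - \q)$, and hence, writing $\Tilde{p}_l := \Vecc(\hcs(\p))_l$ and $\Tilde{q}_l := \Vecc(\hcs(\q))_l$, we have $\Tilde{p}_l - \Tilde{q}_l = \Vecc(\hcs(\p-\q))_l$ for every $l \in [m]$. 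This reduces both claims to statements about the single vector $\p - \q$.

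First I would prove Equation~\eqref{eq:eq_datar_prob_hcs}. Assume $R > 0$ (the case $R = 0$ is trivial, collision probability $1$). Since $\p,\q$ meet the hypotheses of Theorem~\ref{thm:hcs_normality}, so does $\p - \q$ (with $R$ held fixed): $\|\p-\q\|_\infty < \infty$ and $\E[|p_j - q_j|^2]$ is finite and positive. The stated growth condition $\sqrt{m}\,N^{4/5} = o\!\left(d^{(3N-8)/(10N)}\right)$ is precisely the hypothesis of Theorem~\ref{thm:hcs_normality}, so applying that theorem to $\p - \q$ gives $\Tilde{p}_l - \Tilde{q}_l \overset{\mathcal{D}}{\to} \mathcal{N}\!\left(0, \|\p-\q\|^2/m\right) = \mathcal{N}(0, R^2/m)$ as $d \to \infty$, whence $\sqrt{m}(\Tilde{p}_l - \Tilde{q}_l) \overset{\mathcal{D}}{\to} \mathcal{N}(0, R^2)$ by the scaling property of the Gaussian law. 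From here the argument is exactly that of Datar~\etal~\cite{datar2004locality}: with $t := |\sqrt{m}(\Tilde{p}_l - \Tilde{q}_l)|$, conditioned on $t$ the probability that $\sqrt{m}\Tilde{p}_l + b$ and $\sqrt{m}\Tilde{q}_l + b$ land in the same width-$w$ cell equals $\max\{1 - t/w, 0\}$; integrating this against the density $\tfrac1R f(t/R)$ of $t$ over $[0,w]$ yields $\int_0^w \tfrac1R f(t/R)(1 - t/w)\,dt$, which is Equation~\eqref{eq:eq_datar_prob_hcs}.

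Next I would prove the product form in Equation~\eqref{eq:eq281022_a}. By the linearity reduction, the coordinate-$l$ collision event $\{g'(\p)_l = g'(\q)_l\}$ is a measurable function of $\Vecc(\hcs(\p-\q))_l$ (together with the random shift); by Corollary~\ref{cor:hcs_eucl_ind} applied to $\p - \q$, the entries $\Vecc(\hcs(\p-\q))_1,\dots,\Vecc(\hcs(\p-\q))_m$ are asymptotically independent, hence so are the $m$ collision events, and therefore $\Pr[\,g'(\p)_1 = g'(\q)_1,\dots,g'(\p)_m = g'(\q)_m\,] = \prod_{l=1}^m \Pr[g'(\p)_l = g'(\q)_l] = p(R)^m$. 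This mirrors exactly the step that closes the proof of Theorem~\ref{thm:CS_Eucl_LSH_property}.

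The main obstacle is the asymptotic independence used in the last step: marginal normality of each $\Vecc(\hcs(\p-\q))_l$ is not enough, one needs the whole $m$-dimensional vector to be asymptotically Gaussian with a \emph{diagonal} covariance. That is precisely the content of Theorem~\ref{thm:hcs_normality_multi} / Corollary~\ref{cor:hcs_eucl_ind} (now applied to $\p - \q$), and it rests on the dependency-graph central limit theorem for random vectors, Theorem~\ref{thm:clt_grpah_vec}, together with the covariance computation already carried out inside the proof of Theorem~\ref{thm:hcs_normality_multi} (the off-diagonal and cross-index covariances vanish because the $s_k(\cdot)$ are $2$-wise independent). Once that multivariate statement is in hand, the remainder is the routine Datar~\etal~cell-collision integral and the factorization of a product measure, so no further difficulty arises.
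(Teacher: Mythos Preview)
Your proposal is correct and follows essentially the same route the paper indicates: invoke Theorem~\ref{thm:hcs_normality} and Corollary~\ref{cor:hcs_eucl_ind} in place of their count-sketch analogues and repeat the Datar~\etal\ cell-collision computation, exactly as in the proof of Theorem~\ref{thm:CS_Eucl_LSH_property}. Your explicit use of the linearity $\hcs(\p)-\hcs(\q)=\hcs(\p-\q)$ to reduce both claims to a single application of the HCS results to $\p-\q$ is a cleaner way to obtain the distribution of the difference than the paper's sketch (which only asserts marginal normality of $\sqrt{m}\,\Tilde{p}_l$ and $\sqrt{m}\,\Tilde{q}_l$ separately and then says the collision formula ``is easy to show''), but it is the same argument in substance.
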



From Theorem~\ref{thm:HCS_Eucl_LSH_property}, it is clear that the probability of collision declines monotonically with $R = ||\p -\q||$. Hence,  according to Definition~\ref{def:LSH}, our proposal \HCSELSH~ is $(R_1,R_2, P_{1}^{m}, P_2^{m})$-sensitive  for $P_1 = p(1)$, $P_2 = p(R)$ and $R_2/R_1 = R$.

\begin{remark}
Compared to \ELSH, the evaluation of the proposed \HCSELSH ~(Definition~\ref{def:HCS_Eucl_LSH}) hash function requires a lesser number of operations as well as less space. The time and space complexity of \ELSH~(Definition~\ref{def:e2lsh}) to generate an $m$-sized hash code of a $d$-dimensional input vector is $O(md)$, 
whereas, the time complexity of our proposal \HCSELSH~is $O(d)$, and its space complexity is $O(N\sqrt[N]{d})  \ll O(d)$ when $N=o(d)$ (Remark~\ref{rem:hcs_space}).
\end{remark}
}
\section{Improving \SRP} \label{sec:cosine}

\subsection{Improvement Using \CS}
This subsection presents our LSH for cosine similarity, which improves over SRP. The main idea of our proposal is to compute the count-sketch of the input vector and then take the element-wise sign of the resulting vector. The main technical challenge is to show that after computing the count-sketch of input vector pairs, a tuple generated by corresponding entries of the resultant vectors follows the bivariate normal distribution and is independent of each other.  Our proof technique exploits the multivariate Lyapunov central limit theorem~\ref{thm:multivariate_Lyapunov_CLT} for this purpose. In  Theorem~\ref{thm:cs_bi_variate}, we show that the corresponding entries of the count-sketch vectors of input pairs follow an asymptotic bivariate normal distribution. Consequently,  in Theorem~\ref{thm:cs_multi_variate_srp}, we show that the concatenation of the $m$-dimensional sketch pairs obtained after applying count-sketch on input pairs follow the asymptotic multivariate normal distribution.  Building on the results of  Theorems~\ref{thm:cs_bi_variate} and \ref{thm:cs_multi_variate_srp},  we show the independence in Corollary~\ref{cor:indp_cs_srp}. Finally, in Theorem~\ref{thm:ubiased_cssrp}, we show that the hashcodes obtained using our proposal satisfy the property of LSH for cosine similarity.

\begin{definition}[\CSSRP] \label{def:cs_consine_lsh}
 We denote our proposal as a hash function $\xi(\cdot)$ that takes a vector $\mathbf{u} \in \R^d$ as input, first compute a count-sketch vector of $\uu$ (say $\csb{\mathbf{u}}$) using Definition~\ref{def:def_cs},  and
then compute the sign of each component of the compressed vector.  We define it as follows:
\begin{align}
\xi(\uu) &= (\sgn(\cs(\uu)_{1}), \ldots, \sgn(\cs(\uu)_{m}) )  
\end{align}
where, $\sgn(\cs(\uu)_{l}) = 1$ if $\cs(\uu)_{l} > 0$, otherwise $\sgn(\cs(\uu)_{l}) = 0$, for $l \in [m]$.
\end{definition}

The following theorem states that if we compress two arbitrary vectors using count-sketch and concatenate their sketch values for an arbitrary index, the resulting vector follows the asymptotic bi-variate normal distribution.

\begin{thm} \label{thm:cs_bi_variate}
Let $\uu$, $\vv \in \R^d$ and $\csb{\uu}$, $\csb{\vv} \in \R^{m}$ be their corresponding count sketch vectors. If $~\forall j \in [d],$  $\E\left[(u_j^2+v_j^2)^{\frac{2+\delta}{2}}\right]$ is finite and  $m = o(d)$ for some $\delta>0$,  then as $d \to \infty$
\begin{equation}
\begin{bmatrix}
\cs(\uu)_{l}  \\
\cs(\vv)_{l}
\end{bmatrix}
\overset{\mathcal{D}}{\to} \mathcal{N} \left( \mathbf{0},
\V_d
 \right) \quad \text{for} \quad l \in [m],
  \end{equation}
where  $\V_d=
\frac{1}{m}\begin{bmatrix}
||\uu||^2 &  \langle \uu, \vv \rangle\\
\langle \uu, \vv \rangle & ||\vv||^2
\end{bmatrix}$ and $\mathbf{0}$ is a $2$-dimensional vector with each entry as  zero.
\end{thm}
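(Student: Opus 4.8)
The plan is to mirror the univariate argument of Theorem~\ref{thm:uni_cs}, but now tracking a $2$-dimensional random vector per coordinate $j\in[d]$, and to invoke the Multivariate Lyapunov CLT (Theorem~\ref{thm:multivariate_Lyapunov_CLT}) rather than its univariate version. Fix an index $l\in[m]$. Writing $\cs(\uu)_l = \sum_{j\in[d]} K(j,l)\,s(j)\,u_j$ and likewise for $\vv$ with the \emph{same} hash function $h$ and sign function $s$ (this shared randomness is what produces the off-diagonal covariance), define
\begin{align*}
\mathbf{X}_j := \begin{bmatrix} K(j,l)\,s(j)\,u_j \\ K(j,l)\,s(j)\,v_j \end{bmatrix}, \qquad j\in[d],
\end{align*}
so that $\big(\cs(\uu)_l,\cs(\vv)_l\big)^T = \sum_{j\in[d]} \mathbf{X}_j$ with the $\mathbf{X}_j$ independent across $j$.

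First I would compute the first two moments. Since $\E[s(j)]=0$ and $s(j)$ is independent of $K(j,l)$, we get $\E[\mathbf{X}_j]=\mathbf{0}$. For the covariance, using $K(j,l)^2=K(j,l)$, $s(j)^2=1$, and $\E[K(j,l)]=1/m$, the per-term covariance matrix is $\frac{1}{m}\begin{bmatrix} u_j^2 & u_j v_j \\ u_j v_j & v_j^2 \end{bmatrix}$, and summing over $j$ gives $\V_d = \frac{1}{m}\begin{bmatrix} \|\uu\|^2 & \langle\uu,\vv\rangle \\ \langle\uu,\vv\rangle & \|\vv\|^2 \end{bmatrix}$, exactly the claimed limiting covariance. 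Next I would bound $\|\V_d^{-1/2}\|$: by the same trace computation as in Theorem~\ref{thm:mul_cs_new} (square root of a positive definite matrix is symmetric, so $\|\V_d^{-1/2}\|_F^2 = \Tr(\V_d^{-1})$), one has $\Tr(\V_d^{-1}) = m\,\frac{\|\uu\|^2+\|\vv\|^2}{\|\uu\|^2\|\vv\|^2 - \langle\uu,\vv\rangle^2}$, which is $O(m/d)$ under the moment normalization; so $\|\V_d^{-1/2}\|^{2+\delta}$ scales like $(m/d)^{(2+\delta)/2}$ up to constants depending on the (normalized) norms and inner product.

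Then I would verify the Lyapunov condition \eqref{eq:multivariate_lyapunov}. Here $\|\mathbf{X}_j\| = K(j,l)\,\sqrt{u_j^2+v_j^2}$, so $\E\big[\|\mathbf{X}_j\|^{2+\delta}\big] = \E[K(j,l)]\,(u_j^2+v_j^2)^{(2+\delta)/2} = \frac{1}{m}(u_j^2+v_j^2)^{(2+\delta)/2}$. Plugging in and normalizing by $d$ inside each sum (as in \eqref{eq:eq03003}), the whole expression collapses to a constant times $(m/d)^{\delta/2}\cdot \frac{\E[(u_j^2+v_j^2)^{(2+\delta)/2}]}{(\text{normalized norms})}$, which tends to $0$ as $d\to\infty$ since $m=o(d)$ and the $(2+\delta)$-moment is finite. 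Applying Theorem~\ref{thm:multivariate_Lyapunov_CLT} then yields $\V_d^{-1/2}\sum_j \mathbf{X}_j \overset{\mathcal{D}}{\to}\mathcal{N}(\mathbf{0},\mathbf{I})$, hence $\big(\cs(\uu)_l,\cs(\vv)_l\big)^T \overset{\mathcal{D}}{\to}\mathcal{N}(\mathbf{0},\V_d)$.

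The only real subtlety — and where I would be careful — is that $\V_d$ itself depends on $d$ (through $\|\uu\|,\|\vv\|,\langle\uu,\vv\rangle$), so the normalized statement $\V_d^{-1/2}\sum_j\mathbf{X}_j \to \mathcal{N}(\mathbf{0},\mathbf{I})$ is the clean assertion, and one must also ensure $\V_d$ stays nonsingular, i.e. $\|\uu\|^2\|\vv\|^2 - \langle\uu,\vv\rangle^2$ is bounded away from $0$ after normalization (Cauchy–Schwarz gives nonnegativity; the degenerate case $\uu\parallel\vv$ would have to be handled separately but is not the regime of interest). Everything else is a routine adaptation of the moment computations already carried out for \CSELSH; the bivariate bookkeeping for $\Cov[\mathbf{X}_j]$ and the $\ell_2$-norm form of the Lyapunov summand are the main new ingredients.
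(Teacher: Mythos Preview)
Your proposal is correct and follows essentially the same approach as the paper: define the bivariate summands $\mathbf{X}_j = \big(K(j,l)s(j)u_j,\;K(j,l)s(j)v_j\big)^T$, compute the mean and covariance to identify $\V_d$, use $\|\V_d^{-1/2}\|_F^2=\Tr(\V_d^{-1})=m(\|\uu\|^2+\|\vv\|^2)/(\|\uu\|^2\|\vv\|^2-\langle\uu,\vv\rangle^2)$, observe $\E[\|\mathbf{X}_j\|^{2+\delta}]=\frac{1}{m}(u_j^2+v_j^2)^{(2+\delta)/2}$, and then normalize by $d$ to reduce the Lyapunov ratio to a constant times $(m/d)^{\delta/2}\to 0$. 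Your additional remark on the nonsingularity of $\V_d$ is a reasonable caveat that the paper leaves implicit.
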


\begin{proof}
Let $\uu$, $\vv \in \R^d$ and $\csb{\uu}$, $\csb{\vv} \in \R^{m}$ be their corresponding count sketch vectors. From the definition of count-sketch (Definition~\ref{def:def_cs}), we have:
\begin{align*} 
  \cs(\uu)_{l} &:= \sum_{h(j) = l} s(j) u_{j} = \sum_{j \in [d]} K(j,l) s(j) u_j, \numberthis \label{eq:eqcs00021}\\
    \cs(\vv)_{l} &:= \sum_{h_1(i_1) = l} s(j) v_{j} = \sum_{j \in [d]} K(j,l) s(j) v_j \numberthis  \label{eq:eqcs00022} 
\end{align*}
where $l \in [m]$ and $K(j,l)$ is an indicator random variable that takes value $1$ if $h(j) = l$, otherwise takes value $0$. Define a sequence of  random vectors $\{\X_{j}\}_{j=1}^{d}$, where
\begin{align}
    \X_{j} &:= \begin{bmatrix}
    K(j,l) s(j) u_j.\\
     K(j,l) s(j) v_j
    \end{bmatrix}.
\end{align}
The expected value of $\X_{j}$
\begin{align*}
    \E[\X_{j}] &= \begin{bmatrix}
    \E[  K(j,l) s(j) u_j] \\
    \E[ K(j,l) s(j) v_j]
    \end{bmatrix} = \begin{bmatrix}
    0 \\
    0
    \end{bmatrix} = \mathbf{0}. \numberthis \label{eq:eqcd00025}
\end{align*}
Equation~\eqref{eq:eqcd00025} holds because $\E[s(j)] = 0$. The covariance of $\X_j$ is
\begin{align*}
&\Cov(\X_{j})\notag\\
&~= \begin{bmatrix}
    \Cov( K(j,l) s(j) u_j,K(j,l) s(j) u_j ) &  \Cov( K(j,l) s(j) u_j, K(j,l) s(j) v_j)\\
    \Cov( K(j,l) s(j) u_j, K(j,l) s(j) v_j) & \Cov(K(j,l) s(j) v_j,K(j,l) s(j) v_j)
    \end{bmatrix}  \\
    & ~= \frac{1}{m}\begin{bmatrix}
     u_{j}^2 &  u_{j}v_{j}\\
      u_{j}v_{j} &   u_{j}^2
    \end{bmatrix}. \numberthis \label{eq:eqcs00026}
\end{align*}
Equation~\eqref{eq:eqcs00026} holds true due to the following:
\begin{align*}
\E \left[K(j,l) s(j) u_j \right]  = 0  = \E \left[K(j,l) \right] \E[s(j)]  v_{j}.
\end{align*}
\begin{align*}
\Cov( K(j,l) s(j) u_j,K(j,l) s(j) u_j ) &= \E \left[ K(j,l)^2 s(j)^2 u_j^2\right] - \E \left[K(j,l) s(j) u_j\right]^2\\
&= E \left[ K(j,l)\right] u_{j}^2 - 0 = \frac{1}{m} u_{j}^2. 
\end{align*}
Similarly, 
\begin{align*}
    \Cov( K(j,l) s(j) v_j,K(j,l) s(j) v_j ) &= \frac{1}{m} v_{j}^2.
\end{align*}
\begin{align*}
   \Cov( K(j,l) s(j) u_j, K(j,l) s(j) v_j)  &= \E[ K(j,l)^2 s(j)^2 u_j v_j] \notag\\
   &\hspace{2cm}- \E[K(j,l) s(j) u_j] \E[K(j,l) s(j) v_j]\\
    &= \E[K(j,l)]u_{j} v_{j} - 0 = \frac{1}{m} u_{j} v_{j}.
\end{align*}
To compute the expression
\begin{align*}
    \E[||\X_{j} - \E[\X_{j}]||^{2+\delta}] &= \E[||\X_{j}||^{2+\delta}]. \numberthis \label{eq:eqcs00030}
\end{align*}
First, we compute the following:
\begin{align*}
    ||\X_{j}||^2 &=  K(j,l)^2 s(j)^2 u_{j}^2 +  K(j,l)^2 s(j)^2 v_{j}^2
    \\
    &=  K(j,l) u_{j}^2 +  K(j,l)v_{j}^2\\
    &= (u_{j}^2 + v_{j}^2)  K(j,l)
\end{align*}
implies
\begin{align*}
    ||\X_{j}||^{2 + \delta}  = (u_{j}^2 + v_{j}^2)^{\frac{2 + \delta}{2}}  K(j,l). \numberthis \label{eq:eqcs00031}
\end{align*}
From Equations~\eqref{eq:eqcd00025}, \eqref{eq:eqcs00030} and \eqref{eq:eqcs00031}, we have
\begin{align*}
    \E\left[||\X_{j} - \E[\X_j]||^{2+\delta} \right] &= \E\left[||\X_{j}||^{2+\delta} \right]\\
    &= (u_{j}^2 + v_{j}^2)^{\frac{2 + \delta}{2}} \E[ K(j,l)] = \frac{1}{m} (u_{j}^2 + v_{j}^2)^{\frac{2 + \delta}{2}}. \numberthis \label{eq:eqcs00032}
\end{align*}
Let $\V_d := \sum_{j=1}^{d} \Cov(\X_{j})$. Thus from Equation~\eqref{eq:eqcs00026}, we have
\begin{align*}
    \V_d &    = \frac{1}{m}\begin{bmatrix}
      \sum_{j=1}^{d} u_{j}^2 & \sum_{j=1}^{d} u_{j}v_{j}\\
      \sum_{j=1}^{d} u_{j}v_{j} & \sum_{j=1}^{d} u_{j}^2
    \end{bmatrix} = \frac{1}{m}\begin{bmatrix}
  ||\uu||^2 & \langle\uu,\vv \rangle \\
  \langle \uu,\vv \rangle&||\vv||^2
\end{bmatrix}.
\end{align*}

Note that the matrix $\V_{d}$ is symmetric positive definite matrix. Thus $\V_{d}^{-1}$ is also symmetric positive definite and is
 \begin{align*}
 \V_{d}^{-1} & = \frac{m}{(||\uu||^2 ||\vv||^2 - \langle \uu,\vv\rangle^2)}\begin{bmatrix}
     ||\vv||^2 & -\langle \uu,\vv\rangle\\
     -\langle \uu,\vv\rangle & ||\uu||^2
    \end{bmatrix}.
 \end{align*}
Due to the facts that for any square matrix $\mathbf{A}$, $||\mathbf{A}||_{F}^2 = \Tr \left(\mathbf{AA}^{T} \right)$ and for any positive definite matrix $\mathbf{B}$ there exists a matrix $\mathbf{C}$ such that $\mathbf{B}=\mathbf{CC}$, we have
\begin{align*}
	||\V_{d}^{-1/2}||^2 &= \Tr \left( \V_{d}^{-1/2} ( \V_{d}^{-1/2} )^{T}\right) \nonumber\\
                      &= \Tr \left( \V_{d}^{-1/2} \V_{d}^{-1/2} \right) \qquad \left[\because \V_{d}^{-1/2} \text{ is symmetric}\right] \nonumber\\
					   &= \Tr(\V_{d}^{-1}) \nonumber\\
					   &= \frac{m (||\uu||^2 + ||\vv||^2)}{||\uu||^2 ||\vv||^2 - \langle \uu,\vv\rangle^2}.\numberthis \label{eq:eqncs30}
\end{align*}
Here,  $\Tr$ denotes a trace operator. To complete the proof, we need to show that Equation~\eqref{eq:multivariate_lyapunov} of Theorem~\ref{thm:multivariate_Lyapunov_CLT} in the main paper holds true, \textit{i.e.} 
$$\lim_{d \to \infty} \left \|{\V_d}^{-\frac{1}{2}} \right \|^{2+\delta} \E\left[||\X_{j} - \E[\X_j]||^{2+\delta} \right] =  \lim_{d \to \infty}\left \|{\V_d}^{-\frac{1}{2}} \right \|^{2+\delta} \sum_{j=1}^{d} \E[||\X_{j}||^{2+\delta}] = 0.$$ 
From Equations \eqref{eq:eqncs30} and~\eqref{eq:eqcs00032}, we have 
\begin{align*}
    &||{\V_d}^{-\frac{1}{2}}||^{2+\delta} \sum_{j=1}^{d} \E \left[||\X_{j}||^{2+\delta} \right]\nonumber\\
    & =  \left(\frac{ m(||\uu||^2 + ||\vv||^2) }{||\uu||^2 ||\vv||^2-\langle\uu, \vv\rangle^2}\right)^{\frac{2+\delta}{2}} \cdot \frac{1}{m} \cdot \sum_{i=1}^{d}(u_{j}^2 + v_{j}^2 )^{\frac{2+\delta}{2}}\\ 
    &=   (m)^{\delta/2} \cdot\left(\frac{ ||\uu||^2 +||\vv||^2}{||\uu||^2 ||\vv||^2-\langle\uu, \vv\rangle^2}\right)^{\frac{2+\delta}{2}} \cdot \sum_{j=1}^{d}(u_{j}^2 + v_{j}^2)^{\frac{2+\delta}{2}}\\
    &=\left(\frac{m}{d}\right)^{\delta/2}  \cdot  \, \left(\frac{ \sum_{j=1}^{d}\frac{u_{i}^2}{d} + \sum_{j=1}^{d} \frac{v_{j}^2}{d} }{\sum_{j=1}^{d}\frac{u_{j}^2}{d} \sum_{j=1}^{d}\frac{v_{j}^2}{d} - \left(\sum_{j=1}^{d} \frac{u_{j}v_{j}}{d}\right)^2}\right)^{\frac{2+\delta}{2}} \, \cdot \sum_{j=1}^{d} \left(\frac{(u_{j}^2 + v_{j}^2)^{\frac{2+\delta}{2}}}{d}  \right) \\
    &= \left(\frac{m}{d}\right)^{\delta/2} \, \cdot \left(\frac{ \E[u_{j}^2] + \E[v_{j}^2] }{\E[u_{j}^2]\E[v_{j}^2] - \left(\E[u_{j}v_{j}]\right)^2}\right)^{\frac{2+\delta}{2}} \, \cdot \E\left[(u_{i}^2 + v_{i}^2)^{\frac{2+\delta}{2}}\right]\\
    &  {\to 0 ~~~~~ as~~~~~ d \to \infty}. \numberthis \label{eq:eqncs35}
\end{align*}
Equation~\eqref{eq:eqncs35} holds  as $d$ tends to infinity due to the fact that    $ m=o\left(d\right)$ and $\E\left[(u_{j}^2 + v_{j}^2)^{\frac{2+\delta}{2}}\right]<\infty $. Thus, due to Theorem~\ref{thm:multivariate_Lyapunov_CLT}, we have 
\begin{align*}
& \V_{d}^{-1/2}
\sum_{j=1}^d \X_{j}
\overset{}{\to} \mathcal{N} \left( \mathbf{0}, \mathbf{I}
 \right). \\
  \implies
  &\V_{d}^{-1/2}\begin{bmatrix}
\cs(\uu)_{l} \\
\cs(\vv)_{l}
\end{bmatrix}
\overset{\mathcal{D}}{\to}  \mathcal{N} \left( \mathbf{0}, \mathbf{I}
 \right).\\
 \implies
  &\begin{bmatrix}
\cs(\uu)_{l} \\
\cs(\vv)_{l}
\end{bmatrix}
\overset{\mathcal{D}}{\to}  \mathcal{N} \left( \mathbf{0}, \V_d
 \right). \numberthis \label{eq:eqcs600}
\end{align*}
Equation~\eqref{eq:eqcs600} completes a proof of the theorem.
\end{proof}

{\color{black}The following theorem states the vector obtained after concatenation of the corresponding count-sketches of the input vectors follows an asymptotic multivariate normal distribution.

}
\begin{thm} \label{thm:cs_multi_variate_srp}
Let $\uu$, $\vv \in \R^d$ and $\csb{\uu}$, $\csb{\vv} \in \R^{m}$ be their corresponding count-sketch vectors. If $~\forall j \in [d],$  $\E\left[(u_j^2+v_j^2)^{\frac{2+\delta}{2}}\right]$ is finite and  $m=o\left(d^{\frac{\delta}{2(2 + \delta)}}\right)$ for some $\delta>0$,  then 
\begin{equation}
\begin{bmatrix}
\csb{\uu}  \\
\csb{\vv}
\end{bmatrix}
\overset{\mathcal{D}}{\to} \mathcal{N} \left( \mathbf{0},
\V_d 
 \right), \text{ as } d \rightarrow \infty
  \end{equation}
where   $\V_d=
\frac{1}{m}\begin{bmatrix}
  ||\uu||^2 & \cdots & 0 & \langle\uu,\vv \rangle  & \cdots & 0\\
  \vdots & \ddots & \vdots & \vdots & \ddots & \vdots \\
  0 & \cdots & ||\uu||^2  & 0 & \cdots &  \langle\uu,\vv \rangle \\
  \langle\uu,\vv \rangle  & \cdots & 0 & ||\vv||^2 & \cdots & 0 \\
  \vdots & \ddots & \vdots & \vdots & \ddots & \vdots \\
  0 & \cdots &  \langle\uu,\vv \rangle & 0 & \cdots & ||\vv||^2
\end{bmatrix} $ and $\mathbf{0} \in \R^{2m}$ is a zero vector.
\end{thm}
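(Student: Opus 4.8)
The plan is to follow the template already used to pass from Theorem~\ref{thm:uni_cs} to Theorem~\ref{thm:mul_cs_new}, now applied on top of Theorem~\ref{thm:cs_bi_variate}: express the concatenated sketch as a sum of $d$ independent $2m$-dimensional random vectors and invoke the multivariate Lyapunov CLT (Theorem~\ref{thm:multivariate_Lyapunov_CLT}). Writing $K(j,l)$ for the indicator of $h(j)=l$, I would set
\begin{align*}
\X_j := \begin{bmatrix} K(j,1)s(j)u_j \\ \vdots \\ K(j,m)s(j)u_j \\ K(j,1)s(j)v_j \\ \vdots \\ K(j,m)s(j)v_j \end{bmatrix} \in \R^{2m}, \qquad j \in [d],
\end{align*}
so that $\begin{bmatrix}\csb{\uu}\\\csb{\vv}\end{bmatrix}=\sum_{j\in[d]}\X_j$. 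Since $\E[s(j)]=0$ we get $\E[\X_j]=\mathbf{0}$, and using the fact that $K(j,l_1)K(j,l_2)$ equals $K(j,l_1)$ when $l_1=l_2$ and $0$ otherwise together with $\E[K(j,l)]=1/m$, the covariance of $\X_j$ is the $2m\times2m$ block matrix $\tfrac1m\left[\begin{smallmatrix}u_j^2\mathbf{I}_m & u_jv_j\mathbf{I}_m\\ u_jv_j\mathbf{I}_m & v_j^2\mathbf{I}_m\end{smallmatrix}\right]$; summing over $j$ gives $\V_d$, which after reordering the coordinates so that the two entries carrying the same sketch index become adjacent is exactly the matrix in the statement.

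Next I would assemble the two ingredients of the Lyapunov condition~\eqref{eq:multivariate_lyapunov}. For the first, $\|\V_d^{-1/2}\|^{2+\delta}=(\Tr(\V_d^{-1}))^{(2+\delta)/2}$ by symmetry of $\V_d^{-1/2}$; since every $2\times2$ block of $\V_d$ is a scalar multiple of $\mathbf{I}_m$, inverting $\V_d$ reduces to inverting the $2\times2$ matrix $\left[\begin{smallmatrix}\|\uu\|^2 & \langle\uu,\vv\rangle\\\langle\uu,\vv\rangle & \|\vv\|^2\end{smallmatrix}\right]$, giving $\Tr(\V_d^{-1})=\dfrac{m^2(\|\uu\|^2+\|\vv\|^2)}{\|\uu\|^2\|\vv\|^2-\langle\uu,\vv\rangle^2}$, i.e. the same quantity as in Theorem~\ref{thm:cs_bi_variate} but with one extra factor of $m$ because there are now $2m$ coordinates rather than $2$. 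For the second, note that $\|\X_j\|^2=(u_j^2+v_j^2)\sum_{l}K(j,l)^2 s(j)^2=u_j^2+v_j^2$ holds \emph{deterministically} (exactly one bucket is hit, so the $K(j,l)$ factor present in the bivariate proof disappears), hence $\sum_j\E[\|\X_j-\E[\X_j]\|^{2+\delta}]=\sum_j(u_j^2+v_j^2)^{(2+\delta)/2}$.

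Finally I would multiply the two pieces and normalize each sum of squares/products by $d$, rewriting it as an empirical average $\E[u_j^2],\E[v_j^2],\E[u_jv_j],\E[(u_j^2+v_j^2)^{(2+\delta)/2}]$, so that the product collapses to
\begin{align*}
\|\V_d^{-1/2}\|^{2+\delta}\sum_{j}\E\!\left[\|\X_j\|^{2+\delta}\right] = \left(\frac{m}{d^{\frac{\delta}{2(2+\delta)}}}\right)^{2+\delta} C,
\end{align*}
where $C$ is a constant depending only on the empirical second moments and is finite by the moment hypothesis together with the non-degeneracy $\|\uu\|^2\|\vv\|^2>\langle\uu,\vv\rangle^2$. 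This tends to $0$ precisely under the assumed $m=o\!\left(d^{\delta/(2(2+\delta))}\right)$. Theorem~\ref{thm:multivariate_Lyapunov_CLT} then yields $\V_d^{-1/2}\sum_j\X_j\overset{\mathcal{D}}{\to}\mathcal{N}(\mathbf{0},\mathbf{I})$, i.e. $\begin{bmatrix}\csb{\uu}\\\csb{\vv}\end{bmatrix}\overset{\mathcal{D}}{\to}\mathcal{N}(\mathbf{0},\V_d)$. I do not expect a conceptual obstacle: the only care points are the block-matrix bookkeeping for $\Tr(\V_d^{-1})$ and tracking the powers of $d$ so that the $m$-threshold comes out exactly as claimed — the interplay between the extra factor of $m$ in $\Tr(\V_d^{-1})$ and the loss of the $1/m$ factor in $\sum_j\E[\|\X_j\|^{2+\delta}]$ is exactly what tightens the condition from $m=o(d)$ (Theorem~\ref{thm:cs_bi_variate}) to $m=o(d^{\delta/(2(2+\delta))})$, mirroring how Theorem~\ref{thm:mul_cs_new} tightens Theorem~\ref{thm:uni_cs}.
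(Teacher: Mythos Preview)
Your proposal is correct and follows essentially the same route as the paper: define the $2m$-dimensional summands $\X_j$, compute their mean and covariance to obtain $\V_d$, observe that $\|\X_j\|^2=u_j^2+v_j^2$ deterministically because $\sum_l K(j,l)=1$, evaluate $\Tr(\V_d^{-1})=m^2(\|\uu\|^2+\|\vv\|^2)/(\|\uu\|^2\|\vv\|^2-\langle\uu,\vv\rangle^2)$ via the block structure, and plug both into the Lyapunov condition to get the $(m/d^{\delta/(2(2+\delta))})^{2+\delta}$ factor. Your remark about reordering coordinates is unnecessary---the statement's $\V_d$ is already in the block form $\tfrac1m\left[\begin{smallmatrix}\|\uu\|^2\mathbf{I}_m & \langle\uu,\vv\rangle\mathbf{I}_m\\ \langle\uu,\vv\rangle\mathbf{I}_m & \|\vv\|^2\mathbf{I}_m\end{smallmatrix}\right]$---but this does not affect correctness.
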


\begin{proof}
Let $\uu$, $\vv \in \R^d$ and $\csb{\uu}$, $\csb{\vv} \in \R^{m}$ be their corresponding count-sketch vectors. From the definition of count-sketch (Definition~\ref{def:def_cs}), we  have:
\begin{align*} 
  \cs(\uu)_{l} &:= \sum_{h(j) = l} s(j) u_{j} = \sum_{j \in [d]} K(j,l) s(j) u_j. \numberthis \label{eq:eqcs00021_new}\\
    \cs(\vv)_{l} &:= \sum_{h_1(i_1) = l} s(j) v_{j} = \sum_{j \in [d]} K(j,l) s(j) v_j. \numberthis \label{eq:eqcs00022_new} 
\end{align*}
where $l \in [m]$ and $K(j,l)$ is an indicator random variable that takes value $1$ if $h(j) = l$, otherwise takes value $0$. We define a sequence of  random vectors $\{\X_{j}\}_{j=1}^{d}$ as follows:
\begin{align}
    \X_{j} &:= \begin{bmatrix}
    K(j,1) s(j) u_j\\
    \vdots\\
     K(j,m) s(j) u_j\\
     K(j,1) s(j) v_j\\
     \vdots\\
     K(j,m) s(j) v_j
    \end{bmatrix}.
\end{align}
The expected value of $\X_{j}$ is
\begin{align*}
    \E[\X_{j}] &= \begin{bmatrix}
    \E[  K(j,1) s(j) u_j] \\
    \vdots\\
    \E[  K(j,m) s(j) u_j]\\
    \E[ K(j,1) s(j) v_j]\\
    \vdots\\
    \E[ K(j,m) s(j) v_j]
    \end{bmatrix} = \begin{bmatrix}
    0 \\
    \vdots\\
    0\\
    0\\
    \vdots\\
    0
    \end{bmatrix} = \mathbf{0}. \numberthis \label{eq:eqcd00025_new}\\
\end{align*}
Equation~\eqref{eq:eqcd00025_new} holds because $\E[s(j)] = 0$. The covariance of $\X_j$ is 
\begin{align*}
\Cov(\X_{j})
    & = \frac{1}{m}\begin{bmatrix}
     u_{j}^2 & \cdots & 0 &  u_{j}v_{j}& \cdots& 0\\
     \vdots & \ddots & \vdots & \vdots & \ddots& \vdots\\
     0 & \cdots & u_{j}^2 & 0 & \cdots & u_{j}v_{j}\\
     u_{j}v_{j} & \cdots & 0 & v_{j}^2 & \cdots & 0\\
     \vdots & \ddots & \vdots & \vdots & \ddots & \vdots\\
      0 & \cdots & u_{j}v_{j} & 0 & \cdots & v_{j}^2 
    \end{bmatrix}. \numberthis \label{eq:eqcs00026_new}
\end{align*}
Equation~\eqref{eq:eqcs00026_new} holds true due to the followings:
\begin{align*}
\E \left[K(j,l) s(j) u_j \right]  = 0 =  \E \left[K(j,l) \right] \E[s(j)]  v_{j}.
\end{align*}
\begin{align*}
\Cov( K(j,l) s(j) u_j,K(j,l) s(j) u_j ) &= \E \left[ K(j,l)^2 s(j)^2 u_j^2\right] - \E \left[K(j,l) s(j) u_j\right]^2\\
&= E \left[ K(j,l)\right] u_{j}^2 - 0 = \frac{1}{m} u_{j}^2.  
\end{align*}
The above equation holds because $\E[K(j,l) = 1/m$ for  $l \in [m]$. Similarly, 
\begin{align*}
    \Cov( K(j,l) s(j) v_j,K(j,l) s(j) v_j ) &= \frac{1}{m} v_{j}^2.
\end{align*}
\begin{align*}
   \Cov( K(j,l) s(j) u_j, K(j,l) s(j) v_j)  &= \E[ K(j,l)^2 s(j)^2 u_j v_j] \notag\\
   & \qquad \qquad - \E[K(j,l) s(j) u_j] \E[K(j,l) s(j) v_j]\\
    &= \E[K(j,l)]u_{j} v_{j} - 0 = \frac{1}{m} u_{j} v_{j}.
\end{align*}
To compute the expression
\begin{align*}
    \E[||\X_{j} - \E[\X_{j}]||^{2+\delta}] &= \E[||\X_{j}||^{2+\delta}] \numberthis \label{eq:eqcs00030_new}
\end{align*}
first, we compute the following:
\begin{align*}
    ||\X_{j}||^2 &=  \sum_{l=1}^m K(j,l) s(j)^2 (u_j^2 + v_j)^2
    = (u_{j}^2 + v_{j}^2)  \sum_{l=1}^{m}K(j,l) .\\
    \implies ||\X_{j}||^{2 + \delta} & = (u_{j}^2 + v_{j}^2)^{\frac{2 + \delta}{2}} \quad \left[\because  \sum_{k=1}^{m} K(j,l) = 1 \right]. \numberthis \label{eq:eqcs00031_new}
\end{align*}
From Equations~\eqref{eq:eqcs00030_new} and \eqref{eq:eqcs00031_new}, we have
\begin{align*}
   \E\left[||\X_{j} - \E[\X_j]||^{2+\delta} \right] &=  \E\left[||\X_{j}||^{2+\delta} \right] = \E \left[(u_{j}^2 + v_{j}^2)^{\frac{2 + \delta}{2}} \right] = (u_{j}^2 + v_{j}^2)^{\frac{2 + \delta}{2}}. \numberthis \label{eq:eqcs00032_new}
\end{align*}
In the following, we define and compute $\V_{d}$ using Equation~\eqref{eq:eqcs00026_new}:
\begin{align*}
    \V_d &:= \sum_{j=1}^{d} \Cov(\X_{j}).\\
    & = \frac{1}{m}\begin{bmatrix}
      \sum_{j=1}^{d} u_{j}^2 & \cdots & 0 & \sum_{j=1}^{d} u_{j}v_{j} & \cdots & 0\\
      \vdots & \ddots & \vdots & \vdots & \ddots & \vdots \\
       0 & \cdots & \sum_{j=1}^{d} u_{j}^2 & 0 & \cdots & \sum_{j=1}^{d} u_{j}v_{j}\\
      \sum_{j=1}^{d} u_{j}v_{j} & \cdots & 0 & \sum_{j=1}^{d} v_{j}^2 & \cdots & 0 \\
      \vdots & \ddots & \vdots & \vdots & \ddots & \vdots  \\
      0 & \cdots & \sum_{j=1}^{d} u_{j}v_{j} & 0 & \cdots & \sum_{j=1}^{d} v_{j}^2 \\
    \end{bmatrix}.\\
    &= \frac{1}{m}\begin{bmatrix}
  ||\uu||^2 & \cdots & 0 & \langle\uu,\vv \rangle  & \cdots & 0\\
  \vdots & \ddots & \vdots & \vdots & \ddots & \vdots \\
  0 & \cdots & ||\uu||^2  & 0 & \cdots &  \langle\uu,\vv \rangle \\
  \langle\uu,\vv \rangle  & \cdots & 0 & ||\vv||^2 & \cdots & 0 \\
  \vdots & \ddots & \vdots & \vdots & \ddots & \vdots \\
  0 & \cdots &  \langle\uu,\vv \rangle & 0 & \cdots & ||\vv||^2
\end{bmatrix}.
\end{align*}

Note that the matrix $\V_{d}$ is symmetric positive definite matrix. Thus $\V_{d}^{-1}$ is also symmetric positive definite and is
 \begin{align*}
 \V_{d}^{-1} & = \frac{m}{||\uu||^2 ||\vv||^2 - \langle \uu,\vv\rangle^2} \hspace{-0.2cm} \begin{bmatrix}
     ||\vv||^2 & \cdots & 0 & -\langle \uu,\vv\rangle & \cdots & 0\\
     \vdots & \ddots & \vdots & \vdots & \ddots & \vdots \\
     0 & \cdots & ||\vv||^2 & 0 & \cdots & -\langle \uu,\vv\rangle \\
     -\langle \uu,\vv\rangle & \cdots & 0 & ||\uu||^2 & \cdots & 0 \\
     \vdots & \ddots & \vdots & \vdots & \ddots & \vdots \\
     0 & \cdots & -\langle \uu,\vv\rangle &  0 & \cdots & ||\uu||^2
    \end{bmatrix}.
 \end{align*}
Due to the facts that for any square matrix $\mathbf{A}$, $||\mathbf{A}||_{F}^2 = \Tr \left(\mathbf{AA}^{T} \right)$ and for any positive definite matrix $\mathbf{B}$ there exists a matrix $\mathbf{C}$ such that $\mathbf{B}=\mathbf{CC}$, we have
\begin{align*}
	||\V_{d}^{-1/2}||^2 &= \Tr \left( \V_{d}^{-1/2} ( \V_{d}^{-1/2} )^{T}\right) \nonumber\\
                      &= \Tr \left( \V_{d}^{-1/2} \V_{d}^{-1/2} \right) \qquad \left[\because \V_{d}^{-1/2} \text{ is symmetric}\right] \nonumber\\
					   &= \Tr(\V_{d}^{-1}) \nonumber\\
					   &= \frac{m^2 (||\uu||^2 + ||\vv||^2)}{||\uu||^2 ||\vv||^2 - \langle \uu,\vv\rangle^2}.\numberthis \label{eq:eqncs30_new}
\end{align*}
Here,  $\Tr$ denotes a trace operator. To complete the proof, we need to show that the Equation~\eqref{eq:multivariate_lyapunov} of Theorem~\ref{thm:multivariate_Lyapunov_CLT} holds true, \textit{i.e.}
$$\lim_{d \to \infty} \left\|{\V_d}^{-\frac{1}{2}} \right\|^{2+\delta} \sum_{j=1}^{d} \E[||\X_{j}||^{2+\delta}] = 0.$$ 
From Equations~\eqref{eq:eqncs30_new} and~\eqref{eq:eqcs00032_new}, we have
\begin{align*}
    & \left \|{\V_d}^{-\frac{1}{2}} \right \|^{2+\delta} \sum_{j=1}^{d} \E \left[||\X_{j}||^{2+\delta} \right]\nonumber\\
    & ~=  \left(\frac{ m^2(||\uu||^2 + ||\vv||^2) }{||\uu||^2 ||\vv||^2-\langle\uu, \vv\rangle^2}\right)^{\frac{2+\delta}{2}} \cdot \sum_{i=1}^{d}(u_{j}^2 + v_{j}^2 )^{\frac{2+\delta}{2}} \\
    &~=   (m)^{2+\delta} \cdot\left(\frac{ ||\uu||^2 +||\vv||^2}{||\uu||^2 ||\vv||^2-\langle\uu, \vv\rangle^2}\right)^{\frac{2+\delta}{2}} \cdot \sum_{j=1}^{d}(u_{j}^2 + v_{j}^2)^{\frac{2+\delta}{2}}\\
    &~=\left(\frac{m}{d^{\frac{\delta}{2(2 + \delta)}}}\right)^{2 + \delta}  \hspace{-0.3cm} \cdot  \, \left(\frac{ \sum_{j=1}^{d}\frac{u_{i}^2}{d} + \sum_{j=1}^{d} \frac{v_{j}^2}{d} }{\sum_{j=1}^{d}\frac{u_{j}^2}{d} \sum_{j=1}^{d}\frac{v_{j}^2}{d} - \left(\sum_{j=1}^{d} \frac{u_{j}v_{j}}{d}\right)^2}\right)^{\frac{2+\delta}{2}} \, \hspace{-0.35cm} \cdot \sum_{j=1}^{d} \left(\frac{(u_{j}^2 + v_{j}^2)^{\frac{2+\delta}{2}}}{d}  \right) \\
    &~= \left(\frac{m}{d^{\frac{\delta}{2(2 + \delta)}}}\right)^{2 + \delta} \, \cdot \left(\frac{ \E[u_{j}^2] + \E[v_{j}^2] }{\E[u_{j}^2]\E[v_{j}^2] - \left(\E[u_{j}v_{j}]\right)^2}\right)^{\frac{2+\delta}{2}} \,  \cdot \E\left[(u_{i}^2 + v_{i}^2)^{\frac{2+\delta}{2}}\right]\\
    & ~ {\to 0 ~~~~~ as~~~~~ d \to \infty}. \numberthis\label{eq:eqncs35_new}
\end{align*}
 {Equation~\eqref{eq:eqncs35_new} holds  as $d$ become large due to the fact that    $ m=o\left(d^{\frac{\delta}{2(2 + \delta)}}\right)$ and $\E\left[(u_{j}^2 + v_{j}^2)^{\frac{2+\delta}{2}}\right]<\infty $. }
Thus, due to Theorem~\ref{thm:multivariate_Lyapunov_CLT}, we have 
\begin{align*}
& \V_{d}^{-1/2}
\sum_{j=1}^d \X_{j}
\overset{}{\to} \mathcal{N} \left( \mathbf{0}, \mathbf{I}
 \right). \\
 \implies
  &\V_{d}^{-1/2}\begin{bmatrix}
\csb{\uu} \\
\csb{\vv}
\end{bmatrix}
\overset{\mathcal{D}}{\to}  \mathcal{N} \left( \mathbf{0}, \mathbf{I}
 \right) \implies \begin{bmatrix}
\csb{\uu} \\
\csb{\vv}
\end{bmatrix}
\overset{\mathcal{D}}{\to}  \mathcal{N} \left( \mathbf{0}, \V_d
 \right). \numberthis\label{eq:eqcs600_new}
\end{align*}
Equation~\eqref{eq:eqcs600_new} completes a proof of the theorem.
\end{proof}

 Building on the result of  Theorems~\ref{thm:cs_bi_variate} and \ref{thm:cs_multi_variate_srp}, in the following corollary, we prove the independence of the concatenation of the corresponding elements of $\csb{\uu}$ and $\csb{\vv}$, and show that the joint probability distribution of $[\csb{\uu}, \csb{\vv}]$ is equal to the product of the distribution of tuples $[\cs(\uu)_l,\cs(\vv)_{l}]$ for $l \in [m]$. 
 
\begin{cor} \label{cor:indp_cs_srp}
Let $\uu, \vv \in \R^{d}$ and $\csb{\uu}, \csb{\vv} \in \R^{m}$ be the corresponding count-sketch vector obtained using Equation~\eqref{eq:eq_cs}. If  $\forall ~ j \in [d]$, $\E \left[(u_{j} + v_{j})^{\frac{2 + \delta}{2}} \right]$ is finite and $m = o \left(d^{\frac{\delta}{2(2+\delta)}}\right)$ for some $\delta > 0$, then as $d \rightarrow \infty$, the tuples $\begin{bmatrix} \cs(\uu)_1\\  \cs(\vv)_{1} \end{bmatrix}, \ldots, \begin{bmatrix} \cs(\uu)_m\\  \cs(\vv)_{m} \end{bmatrix}$ are independent.
\end{cor}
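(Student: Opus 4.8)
The plan is to follow the template of the proof of Corollary~\ref{cor:indp_cs}, but with scalars replaced by $2$-dimensional tuples. First I would apply Theorem~\ref{thm:cs_multi_variate_srp} to obtain that, as $d \to \infty$, the stacked $2m$-dimensional vector $[\csb{\uu}; \csb{\vv}]$ converges in distribution to $\mathcal{N}(\mathbf{0}, \V_d)$ with $\V_d$ the block matrix displayed in that theorem. The key structural observation is that each of the four $m \times m$ blocks of $\V_d$ is a scalar multiple of the $m \times m$ identity, namely $\tfrac{\|\uu\|^2}{m}\mathbf{I}$, $\tfrac{\langle\uu,\vv\rangle}{m}\mathbf{I}$, $\tfrac{\langle\uu,\vv\rangle}{m}\mathbf{I}$, and $\tfrac{\|\vv\|^2}{m}\mathbf{I}$. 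Hence, reordering the coordinates from $(\cs(\uu)_1, \ldots, \cs(\uu)_m, \cs(\vv)_1, \ldots, \cs(\vv)_m)$ into the interleaved order $(\cs(\uu)_1, \cs(\vv)_1, \ldots, \cs(\uu)_m, \cs(\vv)_m)$ turns $\V_d$ into a block-diagonal matrix with $m$ identical $2 \times 2$ diagonal blocks $\V := \frac{1}{m}\begin{bmatrix} \|\uu\|^2 & \langle\uu,\vv\rangle \\ \langle\uu,\vv\rangle & \|\vv\|^2 \end{bmatrix}$, which is precisely the covariance matrix of Theorem~\ref{thm:cs_bi_variate}.

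Next I would invoke the elementary fact that a centered multivariate Gaussian whose covariance is block-diagonal factorizes into the product of the centered Gaussians associated with the individual blocks: since $\det(\V_d) = (\det \V)^m$ and $\V_d^{-1}$ is block-diagonal with $m$ copies of $\V^{-1}$, the density of $[\csb{\uu}; \csb{\vv}]$ evaluated at $(\tilde{u}_1, \tilde{v}_1, \ldots, \tilde{u}_m, \tilde{v}_m)$ equals $\prod_{l=1}^m \phi_{\V}(\tilde{u}_l, \tilde{v}_l)$, where $\phi_{\V}$ denotes the $\mathcal{N}(\mathbf{0}, \V)$ density. By Theorem~\ref{thm:cs_bi_variate}, $\phi_{\V}$ is exactly the limiting density of the tuple $[\cs(\uu)_l; \cs(\vv)_l]$ for each $l \in [m]$. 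Therefore $\Pr\left( [\csb{\uu}; \csb{\vv}] = (\tilde{u}_1, \tilde{v}_1, \ldots, \tilde{u}_m, \tilde{v}_m) \right) = \prod_{l=1}^m \Pr\left( [\cs(\uu)_l; \cs(\vv)_l] = (\tilde{u}_l, \tilde{v}_l) \right)$, which is the definition of (mutual) independence of the $m$ tuples; this is the same chain of equalities written out in the proof of Corollary~\ref{cor:indp_cs}, only with the scalar variance $\|\p\|^2/m$ replaced by the matrix $\V$.

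I do not anticipate a genuine obstacle; the only care needed is bookkeeping. One must check that the off-diagonal $m \times m$ blocks of $\V_d$ really are scalar-times-identity, so that the coordinate permutation yields a genuinely block-diagonal covariance --- this is immediate from the explicit form of $\V_d$ in Theorem~\ref{thm:cs_multi_variate_srp}. One should also verify the normalization identity $\det(\V_d) = (\det\V)^m$, so that the product of the $m$ bivariate densities reproduces the $2m$-dimensional density exactly with no residual factor; this again follows from block-diagonality. Finally, as in Corollary~\ref{cor:indp_cs}, the independence asserted is understood in the limiting (asymptotic) sense, and the write-up may simply note this before carrying out the factorization above.
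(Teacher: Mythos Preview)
Your proposal is correct and follows essentially the same approach as the paper: invoke Theorem~\ref{thm:cs_multi_variate_srp} for the joint $2m$-variate normal limit, then factor the density using the block structure of $\V_d$. The only cosmetic difference is that you argue via a coordinate permutation making $\V_d$ block-diagonal with $m$ copies of the $2\times 2$ block $\V$, whereas the paper writes out $\V_d^{-1}$ explicitly and expands the quadratic form to see it splits as $\sum_{l=1}^m (\tilde u_l,\tilde v_l)\,\V^{-1}\,(\tilde u_l,\tilde v_l)^T$; these are the same computation.
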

\begin{proof}
 From Theorem~\ref{thm:cs_multi_variate_srp}, we have 
 \begin{align*}
     \begin{bmatrix} \csb{\uu}\\
     \csb{\vv}
     \end{bmatrix} \sim \mathcal{N}(\mathbf{0},\V_{d})
 \end{align*}
where $\mathbf{0} \in \R^{2m}$  and  
    $ \V_{d} = \frac{1}{m}\begin{bmatrix}
          ||\uu||^2 & \cdots & 0 & \langle \uu, \vv \rangle & \cdots & 0\\
          \vdots & \ddots & \vdots & \vdots & \ddots & \vdots\\
          0 & \cdots & ||\uu||^2 &  0 & \cdots & \langle \uu, \vv \rangle  \\
          \langle \uu, \vv \rangle & \cdots & 0 & ||\vv||^2 & \cdots & 0 \\
          \vdots & \ddots & \vdots & \vdots & \ddots & \vdots \\
          0 & \cdots & \langle \uu, \vv \rangle &  0 & \cdots & ||\vv||^2
     \end{bmatrix}$.
 Hence, 
 \begin{align*}
     &\Pr\left(
                \begin{bmatrix}
                    \cs(\uu)_{1}\\
                    \vdots\\
                    \cs(\uu)_{m}\\
                    \cs(\vv)_{1}\\
                    \vdots \\
                    \cs(\vv)_{m}
                \end{bmatrix}  
                = \begin{bmatrix}
                    \Tilde{u}_{1}\\
                    \vdots \\
                    \Tilde{u}_{m}\\
                    \Tilde{v}_{1}\\
                    \vdots \\
                    \Tilde{v}_{m}
                \end{bmatrix} \right) \notag\\
     &= \frac{1}{\sqrt{(2\pi)^m det(\V_{d})}} \cdot \exp{\left(-\frac{1}{2} \begin{bmatrix}
                    \Tilde{u}_{1} & \hdots & \Tilde{u}_{m} & \Tilde{v}_{1} & \hdots & \Tilde{v}_{m}
                \end{bmatrix}  \V_d^{-1} \begin{bmatrix}
                    \Tilde{u}_{1}\\
                    \vdots \\
                    \Tilde{u}_{m}\\
                    \Tilde{v}_{1}\\
                    \vdots \\
                    \Tilde{v}_{m}
                \end{bmatrix} \right)}\\
     &= \frac{1}{\sqrt{(2\pi)^m det(\V_{d})}} \cdot \exp{}\Bigg(-\frac{m}{2(||\uu||^2 ||\vv||^2 - \langle \uu,\vv\rangle^2)} \notag\\   
     &\begin{bmatrix}
                    \Tilde{u}_{1} & \hdots & \Tilde{u}_{m} & \Tilde{v}_{1} & \hdots & \Tilde{v}_{m}
                \end{bmatrix}   \hspace{-0.15cm}
                \begin{bmatrix}
                     ||\vv||^2 & \cdots & 0 & -\langle \uu,\vv\rangle & \cdots & 0\\
                     \vdots & \ddots & \vdots & \vdots & \ddots & \vdots \\
                     0 & \cdots & ||\vv||^2 & 0 & \cdots & -\langle \uu,\vv\rangle \\
                     -\langle \uu,\vv\rangle & \cdots & 0 & ||\uu||^2 & \cdots & 0 \\
                     \vdots & \ddots & \vdots & \vdots & \ddots & \vdots \\
                     0 & \cdots & -\langle \uu,\vv\rangle &  0 & \cdots & ||\uu||^2
                \end{bmatrix} \hspace{-0.1cm}
                \begin{bmatrix}
                    \Tilde{u}_{1}\\
                    \vdots \\
                    \Tilde{u}_{m}\\
                    \Tilde{v}_{1}\\
                    \vdots \\
                    \Tilde{v}_{m}
                \end{bmatrix}  \Bigg)\\
    &= \frac{1}{\sqrt{(2\pi)^m \left(\frac{m}{2(||\uu||^2 ||\vv||^2 - \langle \uu,\vv\rangle^2)}\right)^m}} \exp{\left(-\frac{m \left(\sum_{l=1}^{m} \left(\Tilde{u}_{l}^2 ||\vv||^2  +  \Tilde{v}_{l}^2 ||\uu||^2 - 2\Tilde{u}_{1}\Tilde{v}_{1} \langle \uu,\vv\rangle \right)  \right)}{2(||\uu||^2 ||\vv||^2 - \langle \uu,\vv\rangle^2)} \right)}\\
    &= \frac{1}{\sqrt{2\pi \, \left(\frac{m}{(||\uu||^2 ||\vv||^2 - \langle \uu,\vv\rangle^2)}\right)}}  \exp{\left(-\frac{m\left(\Tilde{u}_{1}^2 ||\vv||^2  +  \Tilde{v}_{1}^2 ||\uu||^2 - 2\Tilde{u}_{1}\Tilde{v}_{1} \langle \uu,\vv\rangle \right)  }{2(||\uu||^2 ||\vv||^2 - \langle \uu,\vv\rangle^2)} \right)}\times \notag\\
    & \cdots \times \frac{1}{\sqrt{2\pi \, \left(\frac{m}{2(||\uu||^2 ||\vv||^2 - \langle \uu,\vv\rangle^2)}\right)^m}} \exp{\left(-\frac{m\left(\Tilde{u}_{m}^2 ||\vv||^2  +  \Tilde{v}_{m}^2 ||\uu||^2 - 2\Tilde{u}_{m}\Tilde{v}_{m} \langle \uu,\vv\rangle \right) }{2(||\uu||^2 ||\vv||^2 - \langle \uu,\vv\rangle^2)} \right)}\\
    &= \Pr\left(\begin{bmatrix}
    \cs(\uu)_{1}\\
    \cs(\vv)_{1}
    \end{bmatrix} 
    = \begin{bmatrix}
        \Tilde{u}_{1}\\
        \Tilde{v}_{1}
    \end{bmatrix} \right)
    \times \cdots \times 
    \Pr\left(\begin{bmatrix}
    \cs(\uu)_{m}\\
    \cs(\vv)_{m}
    \end{bmatrix} 
    = \begin{bmatrix}
        \Tilde{u}_{m}\\
        \Tilde{v}_{m}
    \end{bmatrix}
     \right). \numberthis \label{eq:eq_280922_1_new}
 \end{align*}
 
Equation~\eqref{eq:eq_280922_1_new} implies the distributional independence of $\begin{bmatrix} \cs(\uu)_1\\  \cs(\vv)_{1} \end{bmatrix}, \ldots,$ $ \begin{bmatrix} \cs(\uu)_m\\  \cs(\vv)_{m} \end{bmatrix}$ and completes a proof of the theorem.
\end{proof}

Finally, building on the results mentioned in Theorem~\ref{thm:cs_bi_variate} and Corollary~\ref{cor:indp_cs_srp}, we show that our proposal mentioned in Definition~\ref{def:cs_consine_lsh} gives an LSH for cosine similarity. 

\begin{thm} \label{thm:ubiased_cssrp}
Let $\xi(\uu)$, $\xi(\vv)$ be $m$-dimensional binary vector of $\uu,\vv\in \R^d$ respectively, obtained using our proposal (Definition~\ref{def:cs_consine_lsh}). Suppose $\theta_{(\uu,\vv)}=\cos^{-1}\left(\frac{\uu^T\vv}{\|\uu\| \|\vv\|}\right)$ denotes the angular similarity between $\uu$ and $\vv$. Then for $d \rightarrow \infty$  and 
 $m=o\left(d^\frac{\delta}{2(2+\delta)}\right)$,
    $\Pr( \xi(\uu)_{l} = \xi(\vv)_{l} )  = 1-\frac{\theta_{(\uu,\vv)}}{\pi}$ \text{ and }
    $\Pr[\xi(\uu)_{1} = \xi(\vv)_{1}, \ldots, \xi(\uu)_{m} = \xi(\vv)_{m}] = \left(1-\frac{\theta_{(\uu,\vv)}}{\pi}\right)^m$.
\end{thm}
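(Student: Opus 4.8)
The plan is to reduce the statement to the classical sign-collision identity for centered bivariate Gaussians---the same identity that underlies plain SRP---invoked at the asymptotic level afforded by Theorem~\ref{thm:cs_bi_variate}. First I would note that $m=o\!\left(d^{\frac{\delta}{2(2+\delta)}}\right)$ implies $m=o(d)$, so Theorem~\ref{thm:cs_bi_variate} applies: $\begin{bmatrix}\cs(\uu)_l\\ \cs(\vv)_l\end{bmatrix}\overset{\mathcal{D}}{\to}\mathcal{N}\!\left(\mathbf{0},\V_d\right)$ as $d\to\infty$, with $\V_d=\frac{1}{m}\begin{bmatrix}\|\uu\|^2 & \langle\uu,\vv\rangle\\ \langle\uu,\vv\rangle & \|\vv\|^2\end{bmatrix}$. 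Since $\xi(\cdot)_l$ records only the sign of the $l$-th coordinate, and the sign pattern of a centered Gaussian vector is invariant under positive scaling, the limiting collision probability depends on $\V_d$ only through the correlation $\rho=\langle\uu,\vv\rangle/(\|\uu\|\,\|\vv\|)=\cos\theta_{(\uu,\vv)}$.

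Next I would invoke the standard fact (Sheppard's formula, a.k.a.\ the Grothendieck identity used in~\cite{goemans1995improved}): for a non-degenerate centered bivariate normal $(G_1,G_2)$ with correlation $\rho$, one has $\Pr[\sgn(G_1)=\sgn(G_2)]=1-\frac{1}{\pi}\cos^{-1}(\rho)$. Because the limiting law places zero mass on the coordinate axes, the map $(x,y)\mapsto \mathbf{1}[\sgn(x)=\sgn(y)]$ is continuous almost everywhere with respect to that law, so by the portmanteau / continuous-mapping theorem the discrete probability $\Pr[\xi(\uu)_l=\xi(\vv)_l]$ converges to $1-\theta_{(\uu,\vv)}/\pi$, which is the first assertion (asymptotically, as $d\to\infty$). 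The degenerate case $\uu\parallel\vv$ is immediate: then $\cs(\uu)_l$ is a positive or negative multiple of $\cs(\vv)_l$ and $\theta_{(\uu,\vv)}\in\{0,\pi\}$, so both sides agree.

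For the joint statement I would use Corollary~\ref{cor:indp_cs_srp}, which gives (asymptotic) independence of the tuples $\begin{bmatrix}\cs(\uu)_l\\ \cs(\vv)_l\end{bmatrix}$, $l\in[m]$. Hence the events $\{\xi(\uu)_l=\xi(\vv)_l\}$ are independent across $l$, and the joint collision probability factorizes as $\Pr[\xi(\uu)_1=\xi(\vv)_1,\dots,\xi(\uu)_m=\xi(\vv)_m]=\prod_{l=1}^m\Pr[\xi(\uu)_l=\xi(\vv)_l]=\left(1-\theta_{(\uu,\vv)}/\pi\right)^m$. Combined with the monotonicity of $1-\theta/\pi$ in cosine similarity, Definition~\ref{def:LSH} then gives that $\xi(\cdot)$ is a valid LSH family for cosine similarity, mirroring the conclusion drawn for \CSELSH~after Theorem~\ref{thm:CS_Eucl_LSH_property}.

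The hard part will not be any isolated calculation but the careful handling of the asymptotics: Theorems~\ref{thm:cs_bi_variate}--\ref{thm:cs_multi_variate_srp} and Corollary~\ref{cor:indp_cs_srp} only deliver convergence in distribution and asymptotic independence, so one must justify (i) that convergence in distribution of the Gaussian pair transfers to convergence of the \emph{discrete} sign-collision probability---legitimate precisely because the discontinuity set of the sign-comparison map is null under the non-degenerate Gaussian limit---and (ii) that the asymptotic independence of the tuples is exactly the hypothesis needed to factorize the joint collision probability in the limit.
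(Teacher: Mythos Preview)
Your proposal is correct and follows essentially the same route as the paper: invoke Theorem~\ref{thm:cs_bi_variate} for the bivariate normal limit, apply the sign-collision identity $\Pr[\sgn(G_1)=\sgn(G_2)]=1-\theta/\pi$ (the paper cites this as Lemma~6 of Li \etal~\cite{li2006very} rather than Sheppard/Goemans--Williamson), and then use Corollary~\ref{cor:indp_cs_srp} to factorize the joint probability. If anything, your explicit appeal to the continuous-mapping/portmanteau theorem and your treatment of the degenerate case are more careful than the paper's terse ``trivial consequence of asymptotic normality.''
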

\begin{proof} 
 From Definition~\ref{def:cs_consine_lsh}, for $l \in [m]$, we have
 \begin{align*}
     \xi(\uu)_{l} = \sgn(\cs(\uu)_{l}) \quad \text{ and } \quad  \xi(\vv)_{l} = \sgn(\cs(\vv)_{l}).
 \end{align*}
 From Theorem~\ref{thm:cs_bi_variate}, for any $l\in[m]$, $\begin{bmatrix}
 \cs(\uu)_{l} & \cs(\vv)_{l}
 \end{bmatrix}^{T}$ follows asymptotic bi-variate normal distribution for $m=o\left(d\right)$ implies is bivariate normal for $m = o(d^\frac{\delta}{2(2+\delta)})$ as well. Hence,  a trivial consequence of  the asymptotic normality yields
 \begin{align*}
      \Pr[\xi(\uu)_{l} = \xi(\vv)_{l}]  \rightarrow  1-\frac{\theta_{(\uu,\vv)}}{\pi} \quad \left[\text{see Lemma 6 of Li \etal~\cite{li2006very}}\right]. 
      \numberthis \label{eq:eq8888}
 \end{align*}
 Using independence result stated in Corollary~\ref{cor:indp_cs_srp} and Equation~\eqref{eq:eq8888}, for $d \rightarrow \infty$ and $m =o(d^\frac{2}{2(2+\delta)})$, we have
 \begin{align*} 
   \Pr[\xi(\uu)_{1} = \xi(\vv)_{1}, \ldots, \xi(\uu)_{m} = \xi(\vv)_{m}] = \left(1-\frac{\theta_{(\uu,\vv)}}{\pi}\right)^m.
\end{align*}
\end{proof}

Let $S= \frac{\uu^T\vv}{\|\uu\| \|\vv\|}$ denotes the cosine similarity between $\uu$ and $\vv$. From Theorem~\ref{thm:ubiased_cssrp}, it is evident that the probability of a collision decreases monotonically with $S$. Hence,  due to Definition~\ref{def:LSH}, our proposal \CSSRP ~defined in Definition~\ref{def:cs_consine_lsh} is $(R_1,R_2, P_{1}^{m}, P_2^{m})$-sensitive for $R_1 = S$, $R_2 = cS$, $P_{1} = (1-\cos^{-1}(S)/\pi)$  and $P_{2} = (1-\cos^{-1}(cS)/\pi)$.

 \begin{remark} \label{rem:CSSRP}
    We note that our proposal \CSSRP~offers a more space and time efficient LSH for cosine similarity compared to  \SRP ~\cite{charikar2002similarity}. To compute an $m$-sized hashcode of a $d$-dimensional input, the time and space complexity of \SRP ~is $O(md)$, whereas our proposal's time and space complexity is $O(d)$. However,  our results hold when $m=o\left(d^\frac{\delta}{2(2+\delta)}\right)$ and $d$ is large. 
\end{remark}
 
{\color{black}
\subsection{Improvement Using \HCS} 

 In this subsection, we provide another solution that gives further advantage on space complexity while retaining the speedup obtained via $\CSSRP$. The core idea of our approach is first to consider the input vector $\mathbf{p}\in\R^d$ as a mode $N$ tensor with $d^{\frac{1}{N}}$ dimensional along each mode; followed by compressing it via higher order count sketch;  consequently taking the elementwise sign of the resulting sketch tensor, and finally reshaping the tensor into a vector. Similar to \CSSRP~the  main technical challenge is to demonstrate that after computing higher order count sketch of input vector pairs, entries of the resultant sketches follow the bivariate normal distribution and simultaneously are independent of each other. We define our proposal for LSH for cosine similarity based on a higher-order count sketch as follows:
\begin{definition}[HCS-SRP] \label{def:hcs_consine_lsh}
 We denote our proposal as a hash function $\xi'(\cdot)$ that takes a vector $\uu \in \R^d$ as input, first compute a higher order count sketch of $\uu$ (say $\hcsb{\uu}$) using Definition~\ref{def:hcs}, flatten it to a vector form (say $\Tilde{\uu}:= \Vecc(\hcsb{\uu}) \in \R^m$), and then compute the sign of each component of the compressed vector.  We define it as follows:
\begin{align}
\xi'(\uu) &= (\sgn(\Tilde{u}_1), \ldots, \sgn(\Tilde{u}_{m}) )  
\end{align}
where, $\sgn(\Tilde{u}_{l}) = 1$ if $\Tilde{u}_{l} > 0$, otherwise $\sgn(\Tilde{u}_{l}) = 0$, for $l \in [m]$.
\end{definition}

The following theorem demonstrates that if we compress two arbitrary input vectors using a higher order count sketch and concatenate their corresponding sketch values for an arbitrary index, the resultant vector follows the asymptotic bi-variate normal distribution.

\begin{thm} \label{thm:hcs_bi_variate}
Let $\uu$, $\vv \in \R^d$ and $\hcsb{\uu}$, $\hcsb{\vv} \in \R^{m_1 \times \cdots \times m_N}$ be their corresponding higher order count sketches. We define $\Tilde{\uu} := \Vecc(\hcs(\uu)) \in \R^{m}$ and $\Tilde{\vv} := \Vecc(\hcs(\vv)) \in \R^{m}$ s.t. $  m = \prod_{k=1}^{N} m_{k}$ and  $  l= \sum_{k=2}^{N} \left(l_{k} \prod_{t=1}^{k-1} m_{t} \right) + l_1 \in [m]$ for any  $l_{k} \in [m_{k}]$ and $k \in [N]$. If $~\forall j \in [d]$, $  {\E[u_{j}^2]}$, $\E[v_{j}^2]$, $\E[u_jv_j]$ is finite and  $\sqrt{m} N^{\left(\frac{4}{5}\right)} = o\left(d^{\left(\frac{3N-8}{10N}\right)} \right)$,  then as $d \to \infty$,   we have

\begin{equation}
\begin{bmatrix}
\Tilde{u}_{l}  \\
\Tilde{v}_{l}
\end{bmatrix}
\overset{\mathcal{D}}{\to} \mathcal{N} \left( \mathbf{0},
\Sig_d
 \right), \qquad l \in [m],
  \end{equation}
 where  $\Sig_d=
\frac{1}{m}\begin{bmatrix}
||\uu||^2 &  \langle \uu, \vv \rangle\\
\langle \uu, \vv \rangle & ||\vv||^2
\end{bmatrix}$, $\mathbf{0}$ is a $2$-dimensional vector with each entry as  zero.
\end{thm}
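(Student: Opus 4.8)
The plan is to mirror the structure of the proof of Theorem~\ref{thm:cs_bi_variate}, but replace the Lyapunov CLT step with the dependency-graph CLT machinery (Theorems~\ref{thm:clt_grpah_var} and~\ref{thm:clt_grpah_vec}) exactly as was done for $\HCSELSH$ in Theorem~\ref{thm:hcs_normality_multi}. Concretely, I will write $\Tilde{u}_l = \sum_{j\in[d]} U_j$ and $\Tilde{v}_l = \sum_{j\in[d]} V_j$ where $U_j := Z_{jl}\, s_1(i_1)\cdots s_N(i_N)\, u_j$ and $V_j := Z_{jl}\, s_1(i_1)\cdots s_N(i_N)\, v_j$, with $Z_{jl}$ the usual $0/1$ indicator that $h_1(i_1)=l_1,\ldots,h_N(i_N)=l_N$, and $j,l$ the standard index maps. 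I then package these into a sequence of $2$-dimensional bounded random vectors $\Y_j := (U_j,\, V_j)^T$, and take $\mathbf{S}_d := \sum_{j\in[d]} \Y_j = (\Tilde{u}_l,\, \Tilde{v}_l)^T$, so that applying Theorem~\ref{thm:clt_grpah_vec} to this family yields the claimed bivariate normality.

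First I would compute the first two moments. Since $\E[s_k(\cdot)]=0$ and the $s_k$ are independent across modes, $\E[\Y_j]=\mathbf{0}$, hence $\E[\mathbf{S}_d]=\mathbf{0}$. For the covariance, the within-term contribution is $\Cov(\Y_j,\Y_j) = \tfrac1m \big(\begin{smallmatrix} u_j^2 & u_jv_j \\ u_jv_j & v_j^2 \end{smallmatrix}\big)$ using $Z_{jl}^2 = Z_{jl}$, $s_k(\cdot)^2=1$, $\E[Z_{jl}]=1/m$; and the cross-term $\Cov(\Y_j,\Y_{j'})$ for $j\neq j'$ vanishes because $j\neq j'$ forces $i_k\neq i_k'$ for at least one mode $k$, so $\E[s_k(i_k)s_k(i_k')]=\E[s_k(i_k)]\E[s_k(i_k')]=0$ by $2$-wise independence. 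Summing over $j$ gives $\Cov(\mathbf{S}_d) = \tfrac1m\big(\begin{smallmatrix} \|\uu\|^2 & \langle\uu,\vv\rangle \\ \langle\uu,\vv\rangle & \|\vv\|^2 \end{smallmatrix}\big) = \Sig_d$, as required. Next, for an arbitrary unit vector $\a=(a_1,a_2)^T\in\R^2$ I form $T_d := \a^T\mathbf{S}_d = a_1\Tilde{u}_l + a_2\Tilde{v}_l = \sum_j \a^T\Y_j$, with $\sigma_d^2 = \a^T\Sig_d\a = \tfrac1m\big(a_1^2\|\uu\|^2 + 2a_1a_2\langle\uu,\vv\rangle + a_2^2\|\vv\|^2\big)$. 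Each summand $\a^T\Y_j$ is bounded: $|\a^T\Y_j| \le \|\a\|\|\Y_j\| \le \|\Y_j\| = Z_{jl}\sqrt{u_j^2+v_j^2} \le \sqrt{\|\uu\|_\infty^2 + \|\vv\|_\infty^2} =: A$, a finite constant by hypothesis. The dependency graph is the same as in the HCS analysis: an edge between $\Y_j,\Y_{j'}$ whenever they share an index $i_k$ in some mode $k$, so the maximal degree is $M = \sum_{k=1}^N d_k - N = N(d^{1/N}-1)$ under the equal-mode assumption.

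Then I would verify the scaling condition \eqref{eq:eq051024_0} of Theorem~\ref{thm:clt_grpah_vec}, namely $(d/M)^{1/\alpha}\, MA/\sigma_d \to 0$. Writing $\sigma_d^2 = \Theta(\|\uu\|^2/m)$ (the quadratic form in $\a$ is bounded above and, for a positive-definite $\Sig_d$, below by constants times $\|\uu\|^2/m$ and $\|\vv\|^2/m$, all of the same polynomial order), the computation is identical up to constants to the chain of inequalities culminating in Equations~\eqref{eq:eq261022_9}--\eqref{eq:eq261022_10}: with $\alpha = 5$ one gets the bound $\to 0$ precisely when $\sqrt{m}\,N^{4/5} = o\big(d^{(3N-8)/(10N)}\big)$, which is the stated hypothesis. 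Hence Theorem~\ref{thm:clt_grpah_vec} gives $\mathbf{S}_d \overset{\mathcal{D}}{\to} \mathcal{N}(\mathbf{0},\Sig_d)$, i.e. $(\Tilde{u}_l,\Tilde{v}_l)^T \overset{\mathcal{D}}{\to}\mathcal{N}(\mathbf{0},\Sig_d)$, completing the proof.

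The main obstacle — really the only non-routine point — is justifying that the single scalar condition $(d/M)^{1/\alpha}MA/\sigma_d\to 0$ suffices \emph{uniformly over all unit vectors $\a$}, since $\sigma_d$ depends on $\a$. This is handled by observing that $\Sig_d$ is symmetric positive definite with both eigenvalues of order $1/m$ times quantities $\Theta(\|\uu\|^2)$ and $\Theta(\|\vv\|^2)$ under the finiteness/nondegeneracy assumptions on $\E[u_j^2],\E[v_j^2],\E[u_jv_j]$, so $\sigma_d^2$ is bounded below by a constant multiple of $1/m$ times $\min(\|\uu\|^2,\|\vv\|^2)$ independently of $\a$; plugging this worst-case lower bound into the scaling condition recovers exactly the same asymptotic requirement on $m$ and $N$, so the convergence holds simultaneously for every $\a$ and Cramér--Wold (Theorem~\ref{thm:cramer}, already embedded in Theorem~\ref{thm:clt_grpah_vec}) applies. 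Everything else — the moment computations and the algebraic manipulation of the $d^{1/N}$ powers — is a verbatim adaptation of the $\HCSELSH$ argument.
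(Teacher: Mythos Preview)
Your proposal is correct and follows essentially the same route as the paper's proof: both write $(\Tilde u_l,\Tilde v_l)^T=\sum_j \Y_j$ with $\Y_j=(Z_{jl}s_1\cdots s_N u_j,\,Z_{jl}s_1\cdots s_N v_j)^T$, compute the same mean and covariance, identify the same dependency graph with $M=\sum_k d_k-N$, and verify the scaling condition of Theorem~\ref{thm:clt_grpah_vec} with $\alpha=5$ to recover the hypothesis $\sqrt{m}\,N^{4/5}=o(d^{(3N-8)/(10N)})$. The only cosmetic differences are your choice of bound $A=\sqrt{\|\uu\|_\infty^2+\|\vv\|_\infty^2}$ versus the paper's $A=\|\uu\|_\infty+\|\vv\|_\infty$ (both valid), and your explicit remark about uniformity of the scaling condition over unit vectors $\a$, which the paper leaves implicit.
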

\begin{proof}
Let $\uu$, $\vv \in \R^d$ and $\hcsb{\uu}$, $\hcsb{\vv} \in \R^{m_1 \times \cdots \times m_N}$ be their corresponding higher order count sketches. From the definition of higher order count sketch (Definition~\ref{def:hcs}), we have:
\begin{align*} 
  \hcs(\uu)_{l_1, \ldots, l_N} &:= \sum_{h_1(i_1) = l_1, \ldots, h_{N}(i_N) = l_N} s_1(i_1) s_2(i_2) \cdots s_N(i_N) u_{j},  \numberthis \label{eq:eq281022_1} \\
    \hcs(\vv)_{l_1, \ldots, l_N} &:= \sum_{h_1(i_1) = l_1, \ldots, h_{N}(i_N) = l_N} s_1(i_1) s_2(i_2) \cdots s_N(i_N) v_{j} \numberthis \label{eq:eq281022_2} 
\end{align*}
where, for $k \in [N]$, $l_k \in [m_k]$, $i_k \in [d_k]$ and 
$  j= \sum_{k=2}^{N} \left(i_{k} \prod_{t=1}^{k-1} d_{t} \right) + i_1$. 
Let $ m := \prod_{k=1}^{N} m_{k}$, $\Tilde{\uu} := \Vecc(\hcsb{\uu}) \in \R^{m}$, $\Tilde{\vv} := \Vecc(\hcsb{\vv}) \in \R^{m}$. Let $Z_{jl}$ be an indicator random variable (where, $ j= \sum_{k=2}^{N} \left(i_{k} \prod_{t=1}^{k-1} d_{t} \right) + i_1$ and $  l= \sum_{k=2}^{N} \left(l_{k} \prod_{t=1}^{k-1} m_{t} \right) + l_1$), which takes value $1$ if $h_1(i_1) = l_1, \ldots, h_{N}(i_N) = l_N$, otherwise takes value $0$. For $l \in [m]$, we can write the $l$-th components of sketch vectors $\Tilde{\uu}$ and $\Tilde{\vv}$ as follows:
\begin{align*}
    \Tilde{u}_{l} &:= \sum_{i_1 \in [d_1], \ldots, i_{N} \in [d_N]} Z_{jl} s_1(i_1) s_2(i_2) \cdots s_N(i_N) u_{j}, \numberthis \label{eq:eq281022_3} \\
     \Tilde{v}_{l} &:= \sum_{i_1 \in [d_1], \ldots, i_{N} \in [d_N]} Z_{jl} s_1(i_1) s_2(i_2) \cdots s_N(i_N) v_{j}. \numberthis\label{eq:eq281022_4} 
\end{align*}

Let $\hat{u}_{jl} = Z_{jl} s_1(i_1) s_2(i_2) \cdots s_N(i_N) u_{j}$  and $\hat{v}_{jl} = Z_{jl} s_1(i_1) s_2(i_2) \cdots s_N(i_N) v_{j}$, where, $  j= \sum_{k=2}^{N} \left(i_{k} \prod_{t=1}^{k-1} d_{t} \right) + i_1 \in [d]$ and $l \in [m]$ . Define a sequence of  random vectors $\{\Y_{j}\}_{j=1}^{d}$ as follows:
\begin{align*}
    \Y_{j} &:= \begin{bmatrix}
    \hat{u}_{jl}\\
    \hat{v}_{jl}
    \end{bmatrix}.\\
    \implies \begin{bmatrix}
    \Tilde{u}_{l}\\
    \Tilde{v}_{l}
    \end{bmatrix} &= \sum_{j \in [d]} \Y_{j}. \numberthis \label{eq:eq281022_5}
\end{align*}
The expected value of $\begin{bmatrix}
    \Tilde{u}_{l} &
    \Tilde{v}_{l}
    \end{bmatrix}^T$ is
\begin{align*}
    \E\begin{bmatrix}
    \Tilde{u}_{l}\\
    \Tilde{v}_{l}
    \end{bmatrix}  &= \sum_{j \in[d]} \E\left[ \Y_{j} \right] = \sum_{j \in[d]} \begin{bmatrix}
    \E[ \hat{u}_{jl}] \\
    \E[\hat{v}_{jl}]
    \end{bmatrix} = \begin{bmatrix}
    0 \\
    0
    \end{bmatrix} \numberthis \label{eq:eq281022_6} = \mathbf{0}.
\end{align*}
Equation~\eqref{eq:eq281022_6} holds because $\E[s_1(i_1)\cdots s_N(i_N)] = \E[s_1(i_1)] \cdots  \E[s_N(i_N)] = 0$. 
The covariance  of  $\begin{bmatrix}
    \Tilde{u}_{l} &
    \Tilde{v}_{l}
    \end{bmatrix}^T$
    is 
\begin{align*}
  \Cov\left(\begin{bmatrix}
    \Tilde{u}_{l} \\
    \Tilde{v}_{l}
    \end{bmatrix} \right) &= \Cov \left(\sum_{j \in [d]}\Y_{j} \right) = \sum_{j \in [d]} \Cov(\Y_{j}, \Y_{j}) + \sum_{j \neq j'} \Cov(\Y_{j}, \Y_{j'}). \numberthis \label{eq:eq281022_7}
\end{align*}
We compute the terms of Equation~\eqref{eq:eq281022_7} one by one and are
\begin{align*}
    \Cov(\Y_j, \Y_j) &= \begin{bmatrix}
     \Cov(\hat{u}_{jl},\hat{u}_{jl}) &  \Cov(\hat{u}_{jl}, \hat{v}_{jl})\\
    \Cov(\hat{u}_{jl}, \hat{v}_{jl}) & \Cov(\hat{v}_{jl},\hat{v}_{jl})
    \end{bmatrix}\\
    &=  \begin{bmatrix}
     \E\left[ Z_{jl}\right] u_{j}^2 & \E\left[ Z_{jl}\right] u_{j}v_{j}\\
      \E\left[ Z_{jl}\right] u_{j}v_{j} &  \E\left[ Z_{jl}\right] v_{j}^2
    \end{bmatrix} =  \begin{bmatrix}
     \frac{ u_{j}^2}{m} &\frac{u_{j}v_{j}}{m}\\
      \frac{u_{j}v_{j}}{m} & \frac{v_{j}^2}{m}
    \end{bmatrix}. \numberthis \label{eq:eq281022_8}\\
    \Cov(\Y_j, \Y_j') &=\begin{bmatrix}
     \Cov(\hat{u}_{jl},\hat{u}_{j'l}) &  \Cov(\hat{u}_{jl}, \hat{v}_{j'l})\\
    \Cov(\hat{v}_{jl}, \hat{u}_{j'l}) & \Cov(\hat{v}_{jl},\hat{v}_{j'l})
    \end{bmatrix} =  \begin{bmatrix}
     0 & 0\\
     0 & 0
    \end{bmatrix}. \numberthis \label{eq:eq281022_9}
\end{align*}
Equation~\eqref{eq:eq281022_9} holds true due to the following.
\begin{align*}
  \Cov(\hat{u}_{jl},\hat{u}_{j'l}) &=  \E[Z_{jl} Z_{j'l}] \E[s_{1}(i_1)s_{1}(i_1')]\cdots\E[s_{N}(i_N)s_{N}(i_N')]u_j u_j' -0 = 0.  \numberthis \label{eq:eq281022_10}\\
  \Cov(\hat{u}_{jl}, \hat{v}_{j'l})&=  \E[Z_{jl} Z_{j'l}] \E[s_{1}(i_1)s_{1}(i_1')]\cdots\E[s_{N}(i_N)s_{N}(i_N')]u_j v_j' - 0=0.
  \numberthis \label{eq:eq281022_11}
 \end{align*}
Equations~\eqref{eq:eq281022_10} and \eqref{eq:eq281022_11} hold true because $j \neq j'$  implies  for at least one $k \in [N], i_{k} \neq i_k'$ which leads to $\E[s_{k}(i_{k}) s_{k}(i_{k}')] = \E[ s_{k}(i_{k})] \E[s_{k}(i_{k}')] = 0$. Similarly,
 \begin{align*}
     \Cov(\hat{v}_{jl},\hat{v}_{j'l}) & = 0 \quad \text{ and } \quad 
     \Cov(\hat{v}_{jl},\hat{u}_{j'l})  = 0.
 \end{align*}
 Let 
 \begin{align*}
     \boldsymbol{\Sigma}_d &:= \Cov\left( \sum_{j \in [d]} \Y_j\right) =\Cov\left(\begin{bmatrix}
    \Tilde{u}_{l} \\
    \Tilde{v}_{l}
    \end{bmatrix} \right).
 \end{align*}
Form Equations~\eqref{eq:eq281022_7}, \eqref{eq:eq281022_8} and \eqref{eq:eq281022_9}, we have
\begin{align*}
    \boldsymbol{\Sigma}_d &= \begin{bmatrix}
     \frac{ \| \uu \|^2}{m} &\frac{\langle \uu, \vv \rangle}{m}\\
      \frac{\langle \uu, \vv \rangle}{m} & \frac{ \| \vv \|^2}{m}
    \end{bmatrix}. \numberthis \label{eq:eq281022_14}
\end{align*}
For any unit vector $\a \in \R^2$, define a random variable 
$$S_d := \sum_{j\in[d]}\a^T \Y_j = \a^T\begin{bmatrix}
    \Tilde{u}_{l} \\
    \Tilde{v}_{l}
    \end{bmatrix}. $$
The expected value and variance of $S_d$ are
\begin{align}
    \E[S_d] := \a^T \E\begin{bmatrix}
    \Tilde{u}_{l} \\
    \Tilde{v}_{l}
    \end{bmatrix} = 0, \numberthis \label{eq:eq101024_1}
\end{align}
and 
{\color{black}
\begin{align}
    \sigma_d^2 =\Var(S_d):= \a^T \Cov\left(\begin{bmatrix}
    \Tilde{u}_{l} \\
    \Tilde{v}_{l}
    \end{bmatrix} \right) \a = \frac{1}{m}\left(\|\uu\|^2 a_1^2 + \|\vv\|^2 a_2^2 +2 \langle \uu,\vv\rangle a_1a_2 \right).
\end{align}
}

To complete the proof, we need to prove that Equation~\eqref{eq:eq051024_0} of Theorem~\ref{thm:clt_grpah_vec} holds true for some value of $\alpha$. We recall it as follows:
\begin{align*}
    \left(\frac{d}{M}\right)^{\frac{1}{\alpha}} \frac{M A}{\sigma_d}  \rightarrow 0 \text{ as } d \rightarrow \infty
\end{align*}
where $k$ is an integer, $A$ is an upper bound on the norm of $Y_j$'s and is a sum of the $\ell_{\infty}$ of $\uu$ and $\vv$, i.e., $A = \|\uu\|_{\infty}+ \|\vv\|_{\infty} $ and $M$ denotes the maximum degree of the dependency graph generated by the random vectors $\mathbf{Y}_{j}$, $j\in[d]$ and equal to $\sum_{i=1}^{N} d_{i} - N$. For ease of analysis, we assume that $d_{1} =\ldots=d_{N} = d^{\frac{1}{N}}$ and compute the following
\begin{align*}
 &\left(\frac{d}{M}\right)^{\frac{1}{k}} \frac{M A}{\sigma_d} = \left( \frac{d}{\sum_{i=1}^{N} d_{i} - N}\right)^{\frac{1}{k}} \cdot \left(\sum_{i=1}^{N} d_{i} - N \right) \cdot (\|\uu\|_{\infty}+ \|\vv\|_{\infty}) \notag\\
 &\hspace{7cm} \cdot \frac{\sqrt{m} }{\sqrt{||\uu||^2 a_{1}^2 + ||\vv||^2 a_2^2 + 2\langle \uu,\vv\rangle a_1a_2}}\\
 &~= \sqrt{m} \, d^{\frac{1}{k}}\, \left(\sum_{i=1}^{N} d_{i} - N \right)^{1-\frac{1}{k}} \cdot\left(\sum_{i=1}^{N} d_{i} - N \right) \cdot (\|\uu\|_{\infty}+ \|\vv\|_{\infty}) \notag\\
 &\hspace{7cm} \cdot \frac{1 }{\sqrt{||\uu||^2 a_{1}^2 + ||\vv||^2 a_2^2 + 2\langle \uu,\vv\rangle a_1a_2}}\\
 &~= \sqrt{m} \, d^{\frac{1}{k}} \, \left(Nd^{\frac{1}{N}} - N \right)^{1-\frac{1}{k}} \cdot (\|\uu\|_{\infty} + \|\vv\|_{\infty}) \notag\\
 &\hspace{3cm} \cdot \frac{1}{\sqrt{||\uu||^2 a_{1}^2 + ||\vv||^2 a_2^2 + 2\langle \uu,\vv\rangle a_1a_2}},~\left[ \because d_{1} =  \ldots=d_{N} = d^{\frac{1}{N}} \right]\\
 &~= \sqrt{m} \, d^{\frac{1}{k}} \, \left(N d^{\frac{1}{N}} - N \right)^{1-\frac{1}{k}} \cdot  (\|\uu\|_{\infty} + \|\vv\|_{\infty}) \notag\\
 &\hspace{5cm}\cdot \frac{1}{\sqrt{a_1^2\sum_{j\in [d]} u_j^2  + a_2^2\sum_{j\in [d]} v_j^2 + 2a_1a_2 \sum_{j\in [d]} u_jv_j}}\\
 &~= \frac{\sqrt{m} \, d^{\frac{1}{k}} \, \left(N d^{\frac{1}{N}} - N \right)^{1-\frac{1}{k}}}{\sqrt{d}} \cdot  (\|\uu\|_{\infty} + \|\vv\|_{\infty}) \notag\\
 &\hspace{5cm}\cdot \frac{1}{\sqrt{a_1^2\sum_{j\in [d]} \frac{u_j^2}{d}  + a_2^2\sum_{j\in [d]} \frac{v_j^2}{d} + 2a_1a_2 \sum_{j\in [d]} \frac{u_jv_j}{d}}}\\
 &~= \frac{\sqrt{m} \, d^{\frac{1}{k}} \, \left(N d^{\frac{1}{N}} - N \right)^{1-\frac{1}{k}}}{\sqrt{d}} \cdot  (\|\uu\|_{\infty} + \|\vv\|_{\infty}) \notag\\
 & \hspace{7cm} \cdot \frac{1}{\sqrt{a_1^2 \E[u_j^2]  + a_2^2 \E[v_j^2] + 2a_1a_2 \E[u_jv_j]}}\\
 &~= \frac{\sqrt{m} \, d^{\frac{1}{k}} \, N^{1-\frac{1}{k}} \left(d^{\frac{1}{N}} - 1 \right)^{1-\frac{1}{k}}}{\sqrt{d} } \cdot (\|\uu\|_{\infty} + \|\vv\|_{\infty}) \notag \\
 &\hspace{6.8cm} \cdot \frac{1}{\sqrt{a_1^2 \E[u_j^2]  + a_2^2 \E[v_j^2] + 2a_1a_2 \E[u_jv_j]}}\\
 &~\leq \frac{\sqrt{m} \, d^{\frac{1}{k}} \, N^{1-\frac{1}{k}} \left(d^{\frac{1}{N}} \right)^{1-\frac{1}{k}}}{\sqrt{d} } \cdot  (\|\uu\|_{\infty} + \|\vv\|_{\infty}) \notag\\
 &\hspace{6.8cm}\cdot \frac{1}{\sqrt{a_1^2 \E[u_j^2]  + a_2^2 \E[v_j^2] + 2a_1a_2 \E[u_jv_j]}}\\
 &~= \frac{\sqrt{m} \,  N^{\left(1-\frac{1}{k}\right)} \, d^{\left(\frac{N+k-1}{kN}\right)}}{\sqrt{d} } \cdot  (\|\uu\|_{\infty} + \|\vv\|_{\infty}) \notag\\
 &\hspace{6.8cm}\cdot \frac{1}{\sqrt{a_1^2 \E[u_j^2]  + a_2^2 \E[v_j^2] + 2a_1a_2 \E[u_jv_j]}}\\
 &~= \frac{\sqrt{m} \,  N^{\left(1-\frac{1}{k}\right)}}{d^{\left(\frac{1}{2}- \frac{N+k-1}{kN}\right)} } \cdot  (\|\uu\|_{\infty} + \|\vv\|_{\infty}) \cdot \frac{1}{\sqrt{a_1^2 \E[u_j^2]  + a_2^2 \E[v_j^2] + 2a_1a_2 \E[u_jv_j]}}\\
 &~= \frac{\sqrt{m} \,  N^{\left(1-\frac{1}{k}\right)}}{d^{\left(\frac{kN-2N-2k+2}{2kN}\right)} } \cdot  (\|\uu\|_{\infty} + \|\vv\|_{\infty}) \cdot \frac{1}{\sqrt{a_1^2 \E[u_j^2]  + a_2^2 \E[v_j^2] + 2a_1a_2 \E[u_jv_j]}}\\
 &~= \frac{\sqrt{m} \,  N^{\left(1-\frac{1}{k}\right)}}{d^{\left(\frac{kN-2N-2k+2}{2kN}\right)} } \, (\|\uu\|_{\infty} + \|\vv\|_{\infty}) \cdot \frac{1}{\sqrt{a_1^2 \E[u_j^2]  + a_2^2 \E[v_j^2] + 2a_1a_2 \E[u_jv_j]}}\\
 & ~ \rightarrow 0 \text{ as } d \rightarrow \infty \text{ for } k > \frac{2(N-1)}{(N-2)} \text{ and }  \sqrt{m} \,  N^{\left(1-\frac{1}{k}\right)} = o\left(d^{\left(\frac{kN-2N-2k+2}{2kN}\right)} \right). \numberthis \label{eq:eq281022_17}
\end{align*}
Equation~\eqref{eq:eq281022_17} hold true for $k > \frac{2(N-1)}{(N-2)}$ \text{ and }  $\sqrt{m} \,  N^{\left(1-\frac{1}{k}\right)} = o\left(d^{\left(\frac{kN-2N-2k+2}{2kN}\right)} \right)$, provided $ {\E[u_{j}^2]}$, $\E[v_{j}^2]$, $\E[u_jv_j]$, and,  $\|\uu\|_{\infty} + \|\vv\|_{\infty}$ is finite. For $k = 5$ and $d \rightarrow \infty$, from Equation~\eqref{eq:eq281022_17}, we have
\begin{align*}
     \left(\frac{d}{M}\right)^{\frac{1}{k}} \frac{M A}{\sigma_{d}} & \rightarrow 0 \text{ for } \sqrt{m} N^{\left(\frac{4}{5}\right)} = o\left(d^{\left(\frac{3N-8}{10N}\right)} \right). \numberthis \label{eq:eq281022_18}
\end{align*}
Thus, form Theorem~\ref{thm:clt_grpah_vec}, we have
\begin{align*}
    \begin{bmatrix}
\Tilde{u}_{l}  \\
\Tilde{v}_{l}
\end{bmatrix} = \sum_{j \in [d]} \Y_i= \overset{\mathcal{D}}{\to} \mathcal{N}\left(\mathbf{0}, \boldsymbol{\Sigma}_{d} \right). \numberthis \label{eq:eq281022_19}
\end{align*}
Equation~\eqref{eq:eq281022_19} completes a proof of the theorem.
\end{proof}

The theorem below states that the vector formed by concatenating the corresponding higher-order count sketches of the input vectors asymptotically follows a multivariate normal distribution. 

\begin{thm} \label{thm:hcs_muti_variate_srp}
Let $\uu$, $\vv \in \R^d$ and $\hcsb{\uu}$, $\hcsb{\vv} \in \R^{m_1 \times \cdots \times m_N}$ be their corresponding higher order count sketch. We define $\Tilde{\uu} := \Vecc(\hcs(\uu)) \in \R^{m}$ and $\Tilde{\vv} := \Vecc(\hcs(\vv)) \in \R^{m}$ s.t. $  m = \prod_{k=1}^{N} m_{k}$ and  $  l= \sum_{k=2}^{N} \left(l_{k} \prod_{t=1}^{k-1} m_{t} \right) + l_1 \in [m]$ for any  $l_{k} \in [m_{k}]$ and $k \in [N]$. If $~\forall j \in [d],$  ${\E[u_{j}^2]}$, $\E[v_{j}^2]$, $\E[u_jv_j]$ is finite and  $\sqrt{m} N^{\left(\frac{4}{5}\right)} = o\left(d^{\left(\frac{3N-8}{10N}\right)} \right)$,  then as $d \to \infty$,   we have

\begin{equation}
\begin{bmatrix}
\Tilde{\uu}  \\
\Tilde{\vv}
\end{bmatrix}
\overset{\mathcal{D}}{\to} \mathcal{N} \left( \mathbf{0},
\Sig_d 
 \right) \quad  \text{for} \quad l \in [m],
  \end{equation}
where  $\Sig_d=
\frac{1}{m}\begin{bmatrix}
  ||\uu||^2 & \cdots & 0 & \langle\uu,\vv \rangle  & \cdots & 0\\
  \vdots & \ddots & \vdots & \vdots & \ddots & \vdots \\
  0 & \cdots & ||\uu||^2  & 0 & \cdots &  \langle\uu,\vv \rangle \\
  \langle\uu,\vv \rangle  & \cdots & 0 & ||\vv||^2 & \cdots & 0 \\
  \vdots & \ddots & \vdots & \vdots & \ddots & \vdots \\
  0 & \cdots &  \langle\uu,\vv \rangle & 0 & \cdots & ||\vv||^2
\end{bmatrix}$ and $\mathbf{0} $ is a zero vector.
\end{thm}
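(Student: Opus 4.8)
The plan is to reuse the dependency-graph CLT machinery that drove the proofs of Theorem~\ref{thm:hcs_normality_multi} (the single-vector multivariate HCS statement) and Theorem~\ref{thm:hcs_bi_variate} (the bivariate single-coordinate statement), rather than the Lyapunov route taken for the count-sketch analogue Theorem~\ref{thm:cs_multi_variate_srp}: since the higher order count sketch makes the coordinatewise summands dependent whenever their multi-indices coincide in some mode, the Lyapunov CLT does not apply, and the appropriate tool is Theorem~\ref{thm:clt_grpah_vec}. First I would, via Definition~\ref{def:hcs}, write the $l$-th entries as $\Tilde{u}_l = \sum_{j\in[d]} Z_{jl}\, s_1(i_1)\cdots s_N(i_N)\, u_j$ and $\Tilde{v}_l = \sum_{j\in[d]} Z_{jl}\, s_1(i_1)\cdots s_N(i_N)\, v_j$, with $Z_{jl}$ the indicator of $h_1(i_1)=l_1,\dots,h_N(i_N)=l_N$ and $j=\sum_{k=2}^N (i_k\prod_{t=1}^{k-1} d_t)+i_1$, and then stack the $m$ coordinates of $\Tilde{\uu}$ on top of the $m$ coordinates of $\Tilde{\vv}$ to form $2m$-dimensional random vectors $\Y_j$, $j\in[d]$, so that the stacked sketch $(\Tilde{\uu}^T,\Tilde{\vv}^T)^T$ equals $\sum_{j\in[d]}\Y_j$.

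Second, I would compute the moments exactly as in Theorems~\ref{thm:hcs_normality_multi}--\ref{thm:hcs_bi_variate}. We have $\E[\Y_j]=\mathbf{0}$ because each $\E[s_k(\cdot)]=0$; for a single index, $\Cov(\Y_j,\Y_j)=\frac1m\left(\begin{smallmatrix} u_j^2 & u_jv_j\\ u_jv_j & v_j^2\end{smallmatrix}\right)\otimes I_m$, using $Z_{jl}^2=Z_{jl}$, $\E[Z_{jl}]=1/m$ and $Z_{jl}Z_{jl'}=0$ for $l\neq l'$; and $\Cov(\Y_j,\Y_{j'})=\mathbf{0}$ for $j\neq j'$ since $j\neq j'$ forces $i_k\neq i_k'$ in at least one mode $k$, whence $\E[s_k(i_k)s_k(i_k')]=0$ by $2$-wise independence. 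Summing over $j$ yields $\Cov\big(\sum_j\Y_j\big)=\Sig_d=\frac1m\, C\otimes I_m$ with $C=\left(\begin{smallmatrix}\|\uu\|^2 & \langle\uu,\vv\rangle\\ \langle\uu,\vv\rangle & \|\vv\|^2\end{smallmatrix}\right)$, which is precisely the block matrix displayed in the statement. Third, I would build the dependency graph with an edge between $\Y_j$ and $\Y_{j'}$ whenever their multi-indices agree in some mode; as in Theorem~\ref{thm:hcs_normality} its maximal degree is $M=\sum_{k=1}^N d_k-N = Nd^{1/N}-N$ when $d_1=\cdots=d_N=d^{1/N}$, and each $\Y_j$ is bounded because $\|\Y_j\|^2=(u_j^2+v_j^2)\sum_l Z_{jl}=u_j^2+v_j^2$, so $\|\Y_j\|\le\|\uu\|_\infty+\|\vv\|_\infty=:A<\infty$.

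Fourth, I would apply Theorem~\ref{thm:clt_grpah_vec} via the Cramér--Wold reduction already built into it: for an arbitrary unit vector $\a\in\R^{2m}$ the variance of $\a^T\sum_j\Y_j$ is $\sigma_d^2=\a^T\Sig_d\,\a$. Because $\Sig_d=\frac1m\, C\otimes I_m$ and $C$ is positive definite whenever $\|\uu\|^2\|\vv\|^2-\langle\uu,\vv\rangle^2>0$ (i.e.\ $\uu,\vv$ linearly independent, which is implicit for $\Sig_d$ to be well posed), we obtain the \emph{uniform} lower bound $\sigma_d^2\ge\lambda_{\min}(\Sig_d)=\frac1m\lambda_{\min}(C)>0$, and the finite-second-moment hypotheses make $\lambda_{\min}(C)=\Theta(d)$, so $\sigma_d=\Omega(\sqrt{d/m})$ --- exactly the order carried for $\sigma_d$ in the proofs of Theorems~\ref{thm:hcs_normality_multi} and~\ref{thm:hcs_bi_variate}. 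Substituting $M=Nd^{1/N}-N$, the finite $A$, and this bound on $\sigma_d$ into $(d/M)^{1/\alpha}\,MA/\sigma_d$ and repeating verbatim the exponent estimate of Equation~\eqref{eq:eq261022_60} (equivalently~\eqref{eq:eq281022_17}) with $\alpha=5$ shows this quantity $\to 0$ whenever $\sqrt{m}\,N^{4/5}=o\!\big(d^{(3N-8)/(10N)}\big)$. Theorem~\ref{thm:clt_grpah_vec} then gives $\a^T\sum_j\Y_j\overset{\mathcal{D}}{\to}\langle\a,Z\rangle$ for every unit $\a$, with $Z\sim\mathcal{N}(\mathbf{0},\Sig_d)$, hence $(\Tilde{\uu}^T,\Tilde{\vv}^T)^T\overset{\mathcal{D}}{\to}\mathcal{N}(\mathbf{0},\Sig_d)$.

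The main obstacle I anticipate is not conceptual but a matter of care in Step~4: arguing the lower bound on $\sigma_d^2=\a^T\Sig_d\,\a$ \emph{uniformly} over all unit vectors $\a\in\R^{2m}$, so that no adversarial choice of $\a$ can collapse the variance, and then threading the $2\times2$ cross-covariance block between the $\uu$- and $\vv$-parts through the same exponent bookkeeping used in the single-vector HCS proof. Everything else is a straightforward reassembly of tools already established in the paper; the genuinely new ingredient over Theorem~\ref{thm:hcs_normality_multi} is only the handling of that cross block and the requirement that $\uu$ and $\vv$ be linearly independent so that $\Sig_d$ stays non-degenerate.
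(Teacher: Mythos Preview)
Your proposal is correct and follows essentially the same route as the paper: decompose $(\Tilde{\uu}^T,\Tilde{\vv}^T)^T$ as $\sum_{j\in[d]}\Y_j$ for $2m$-dimensional $\Y_j$, compute the mean and covariance to obtain $\Sig_d$, build the same dependency graph with $M=Nd^{1/N}-N$ and $A=\|\uu\|_\infty+\|\vv\|_\infty$, and verify the condition of Theorem~\ref{thm:clt_grpah_vec} with $\alpha=5$. Your treatment of the variance lower bound via $\sigma_d^2\ge\lambda_{\min}(\Sig_d)=\frac{1}{m}\lambda_{\min}(C)$ is in fact slightly cleaner than the paper's, which writes out $\sigma_d^2$ explicitly as a quadratic form in $\a$ and carries it through the exponent bookkeeping without making the uniformity over $\a$ (or the implicit linear-independence requirement on $\uu,\vv$) explicit.
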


\begin{proof}
Let $\uu$, $\vv \in \R^d$ and $\hcsb{\uu}$, $\hcsb{\vv} \in \R^{m_1 \times \cdots \times m_N}$ be their corresponding higher order count sketches. From the definition of higher order count sketch, we have:
\begin{align*} 
  \hcs(\uu)_{l_1, \ldots, l_N} &:= \sum_{h_1(i_1) = l_1, \ldots, h_{N}(i_N) = l_N} s_1(i_1) s_2(i_2) \cdots s_N(i_N) u_{j}, \numberthis \label{eq:eq281022_51} \\
    \hcs(\vv)_{l_1, \ldots, l_N} &:= \sum_{h_1(i_1) = l_1, \ldots, h_{N}(i_N) = l_N} s_1(i_1) s_2(i_2) \cdots s_N(i_N) v_{j} \numberthis \label{eq:eq281022_52} 
\end{align*}
where, for $k \in [N]$, $l_k \in [m_k]$, $i_k \in [d_k]$ and $  j= \sum_{k=2}^{N} \left(i_{k} \prod_{t=1}^{k-1} d_{t} \right) + i_1$. Let $ m := \prod_{k=1}^{N} m_{k}$, $\Tilde{\uu} := \Vecc(\hcsb{\uu}) \in \R^{m_1m_2\cdots m_N}$, $\Tilde{\vv} := \Vecc(\hcsb{\vv}) \in \R^{m_1m_2\cdots m_N}$ and $Z_{jl}$ be an indicator random variable (where, $  j=\sum_{k=2}^{N} \left(i_{k} \prod_{t=1}^{k-1} d_{t} \right)$ $  + i_1$ and $  l= \sum_{k=2}^{N} \left(l_{k} \prod_{t=1}^{k-1} m_{t} \right) + l_1$), which takes value $1$ if $h_1(i_1) = l_1, \ldots, h_{N}(i_N) = l_N$, otherwise takes value $0$. We can write the $l$-th component of the sketch vectors $\Tilde{\uu}$ and $\Tilde{\vv}$ as follows:
\begin{align*}
    \Tilde{u}_{l} &:= \sum_{i_1 \in [d_1], \ldots, i_{N} \in [d_N]} Z_{jl} s_1(i_1) s_2(i_2) \cdots s_N(i_N) u_{j}, \numberthis  \label{eq:eq281022_53} \\
     \Tilde{v}_{l} &:= \sum_{i_1 \in [d_1], \ldots, i_{N} \in [d_N]} Z_{jl} s_1(i_1) s_2(i_2) \cdots s_N(i_N) v_{j}. \numberthis  \label{eq:eq281022_54} 
\end{align*}

Let $\hat{u}_{jl} = Z_{jl} s_1(i_1) s_2(i_2) \cdots s_N(i_N) u_{j}$  and $\hat{v}_{jl} = Z_{jl} s_1(i_1) s_2(i_2) \cdots s_N(i_N) v_{j}$, where, $ j= \sum_{k=2}^{N} \left(i_{k} \prod_{t=1}^{k-1} d_{t} \right) + i_1 \in [d]$ and $l \in [m]$ . We define a sequence of  random vectors $\{\Y_{j}\}_{j=1}^{d}$ as follows:
\begin{align*}
    \Y_{j} &:= \begin{bmatrix}
    \hat{\uu}_{j}\\
    \hat{\vv}_{j}
    \end{bmatrix} = \begin{bmatrix}
    \hat{u}_{j1}\\
    \vdots\\
    \hat{u}_{jm}\\
    \hat{v}_{j1}\\
    \vdots\\
    \hat{v}_{jm}
    \end{bmatrix}.\\
  \implies \begin{bmatrix}
    \Tilde{\uu}\\
    \Tilde{\vv}
    \end{bmatrix} &= \sum_{j \in [d]} \Y_{j}. \numberthis \label{eq:eq281022_55}
\end{align*}
The expected value of $\begin{bmatrix}
    \Tilde{\uu} &
    \Tilde{\vv}
    \end{bmatrix}^T$ is
\begin{align*}
    \E \begin{bmatrix}
    \Tilde{\uu} \\ \Tilde{\vv}
    \end{bmatrix}  &= \sum_{j \in[d]} \E\left[ \Y_{j} \right]= \sum_{j \in[d]} \begin{bmatrix}
    \E[\hat{u}_{j1}]\\
    \vdots\\
    \E[\hat{u}_{jm}]\\
    \E[\hat{v}_{j1}]\\
    \vdots\\
    \E[\hat{v}_{jm}]
    \end{bmatrix}
    = \begin{bmatrix}
    0 \\
    \vdots\\
    0\\
    0\\
    \vdots\\
    0
    \end{bmatrix} = \mathbf{0}. \numberthis \label{eq:eq281022_56}\\
\end{align*}
Equation~\eqref{eq:eq281022_56} holds because $\E[s_1(i_1)\cdots s_N(i_N)] = \E[s_1(i_1)] \cdots  \E[s_N(i_N)] = 0$. 
We compute the covariance matrix  of  $\begin{bmatrix}
    \Tilde{\uu} &
    \Tilde{\vv}
    \end{bmatrix}^T$ as follows:
\begin{align*}
  \Cov\left(\begin{bmatrix}
    \Tilde{\uu} \\
    \Tilde{\vv}
    \end{bmatrix} \right) &= \Cov \left(\sum_{j \in [d]}\Y_{j} \right)= \sum_{j \in [d]} \Cov(\Y_{j}, \Y_{j}) + \sum_{j \neq j'} \Cov(\Y_{j}, \Y_{j'}). \numberthis \label{eq:eq281022_57}
\end{align*}
We compute each term of Equation~\eqref{eq:eq281022_57} one by one.
\begin{align*}
 & \Cov(\Y_{j}, \Y_j) = \begin{bmatrix}
    \Cov(\hat{\uu}_{j},\hat{\uu}_{j}) &  \Cov(\hat{\uu}_{j}, \hat{\vv}_{j})\\
    \Cov(\hat{\uu}_{j}, \hat{\vv}_{j}) & \Cov(\hat{\vv}_{j},\hat{\vv}_{j})
    \end{bmatrix} \\
    & = \begin{bmatrix}
     \Cov(\hat{u}_{j1}, \hat{u}_{j1}) & \cdots & \Cov(\hat{u}_{j1}, \hat{u}_{jm}) & \Cov(\hat{u}_{j1}, \hat{v}_{j1}) & \cdots & \Cov(\hat{u}_{j1}, \hat{v}_{jm})\\
     \vdots & \ddots & \vdots & \vdots & \ddots & \vdots\\
     \Cov(\hat{u}_{jm}, \hat{u}_{j1}) & \cdots & \Cov(\hat{u}_{jm}, \hat{u}_{jm})& \Cov(\hat{u}_{jm}, \hat{v}_{j1}) & \cdots & \Cov(\hat{u}_{jm}, \hat{v}_{jm})\\
     \Cov(\hat{v}_{j1}, \hat{u}_{j1}) & \cdots & \Cov(\hat{v}_{jm}, \hat{u}_{jm}) & \Cov(\hat{v}_{j1}, \hat{v}_{j1}) & \cdots & \Cov(\hat{v}_{j1}, \hat{v}_{jm})\\
     \vdots & \ddots & \vdots &\vdots & \ddots & \vdots\\
     \Cov(\hat{v}_{jm}, \hat{u}_{j1}) & \cdots & \Cov(\hat{v}_{jm}, \hat{u}_{jm}) & \Cov(\hat{v}_{jm}, \hat{v}_{j1}) & \cdots & \Cov(\hat{v}_{jm}, \hat{v}_{jm})
    \end{bmatrix}\\
    & = \begin{bmatrix}
     \frac{u_{j}^2}{m} & \cdots & 0 & \frac{u_{j}v_{j}}{m} & \cdots & 0\\
     \vdots & \ddots & \vdots & \vdots & \ddots & \vdots\\
     0 & \cdots & \frac{u_{j}^2}{m} & 0 & \cdots &  \frac{u_{j}v_{j}}{m}\\
     \frac{u_{j}v_{j}}{m} & \cdots & 0 & \frac{v_{j}^2}{m} & \cdots & 0\\
     \vdots & \ddots & \vdots &\vdots & \ddots & \vdots\\
     0 & \cdots & \frac{u_{j}v_{j}}{m} & 0 & \cdots & \frac{v_{j}^2}{m} 
    \end{bmatrix}.\numberthis \label{eq:eq281022_58}\\
\end{align*}
\begin{align*}
   &Cov(\Y_{j}, \Y_j') 
    = \begin{bmatrix}
    \Cov(\hat{\uu}_{j},\hat{\uu}_{j'}) &  \Cov(\hat{\uu}_{j}, \hat{\vv}_{j'})\\
    \Cov(\hat{\uu}_{j}, \hat{\vv}_{j'}) & \Cov(\hat{\vv}_{j},\hat{\vv}_{j'})
    \end{bmatrix} \\
    & = \begin{bmatrix}
     \Cov(\hat{u}_{j1}, \hat{u}_{j'1}) & \cdots & \Cov(\hat{u}_{j1}, \hat{u}_{j'm}) & \Cov(\hat{u}_{j1}, \hat{v}_{j'1}) & \cdots & \Cov(\hat{u}_{j1}, \hat{v}_{j'm})\\
     \vdots & \ddots & \vdots & \vdots & \ddots & \vdots\\
     \Cov(\hat{u}_{jm}, \hat{u}_{j'1}) & \cdots & \Cov(\hat{u}_{jm}, \hat{u}_{j'm})& \Cov(\hat{u}_{jm}, \hat{v}_{j'1}) & \cdots & \Cov(\hat{u}_{jm}, \hat{v}_{j'm})\\
     \Cov(\hat{v}_{j1}, \hat{u}_{j'1}) & \cdots & \Cov(\hat{v}_{jm}, \hat{u}_{j'm}) & \Cov(\hat{v}_{j1}, \hat{v}_{j'1}) & \cdots & \Cov(\hat{v}_{j1}, \hat{v}_{j'm})\\
     \vdots & \ddots & \vdots &\vdots & \ddots & \vdots\\
     \Cov(\hat{v}_{jm}, \hat{u}_{j'1}) & \cdots & \Cov(\hat{v}_{jm}, \hat{u}_{j'm}) & \Cov(\hat{v}_{jm}, \hat{v}_{j'1}) & \cdots & \Cov(\hat{v}_{jm}, \hat{v}_{j'm})
    \end{bmatrix}\\
    & = \begin{bmatrix}
    0 & \cdots & 0 & 0 & \cdots & 0\\
     \vdots & \ddots & \vdots & \vdots & \ddots & \vdots\\
     0 & \cdots & 0 & 0 & \cdots &  0\\
     0 & \cdots & 0 & 0 & \cdots & 0\\
     \vdots & \ddots & \vdots &\vdots & \ddots & \vdots\\
     0 & \cdots & 0 & 0 & \cdots & 0 
    \end{bmatrix}.\numberthis \label{eq:eq281022_59}
\end{align*}
Equations~\eqref{eq:eq281022_58} and \eqref{eq:eq281022_59} hold true due to the followings:
\begin{align*}
\E \left[\hat{u}_{jl}\right] &= \E \left[Z_{jl} s_1(i_1) s_2(i_2) \cdots s_N(i_N) u_{j} \right] =0.\\
\E \left[\hat{v}_{jl}\right] &= \E \left[Z_{jl} s_1(i_1) s_2(i_2) \cdots s_N(i_N) v_{j} \right] = 0.
\end{align*}
\begin{align*}
\Cov(\hat{u}_{jl}, \hat{u}_{jl}) &= \Cov(Z_{jl} s_1(i_1) s_2(i_2) \cdots s_N(i_N) u_{j}, Z_{jl} s_1(i_1) s_2(i_2) \cdots s_N(i_N) u_{j})\\
&= \E \left[ Z_{jl}^2 s_1(i_1)^2 s_2(i_2)^2 \cdots s_N(i_N)^2 u_{j}^2\right] - \E \left[ Z_{jl} s_1(i_1) s_2(i_2) \cdots s_N(i_N) u_{j}\right]^2\\
&= \E \left[ Z_{jl}\right] u_{j}^2 - 0\\
&= \E \left[ Z_{jl}\right] u_{j}^2\\
&= \frac{u_j^2}{m}.
\end{align*}
Similarly, 
\begin{align*}
    \Cov(\hat{v}_{jl}, \hat{v}_{jl}) &= \frac{v_{j}^2}{m}.
\end{align*}

\begin{align*}
    \Cov(\hat{u}_{jl}, \hat{v}_{jl}) & = \Cov(Z_{jl} s_1(i_1) s_2(i_2) \cdots s_N(i_N) u_{j}, Z_{jl} s_1(i_1) s_2(i_2) \cdots s_N(i_N) v_{j})\\
    & = \E[ Z_{jl} s_1(i_1) s_2(i_2) \cdots s_N(i_N) u_{j} Z_{jl} s_1(i_1) s_2(i_2) \cdots s_N(i_N) v_{j}] \notag\\
    &\qquad- \E[Z_{jl} s_1(i_1) s_2(i_2) \cdots s_N(i_N) u_{j}] \E[Z_{jl} s_1(i_1) s_2(i_2) \cdots s_N(i_N) v_{j}]\\
    & = \E[Z_{jl}^2 s_1(i_1)^2  s_2(i_2)^2 \cdots s_N(i_N)^2 u_{j} v_{j}] - 0\\
    &= \E[Z_{jl}]u_{j} v_{j}\\
    & = \frac{u_j v_j}{m}.
\end{align*}

\begin{align*}
  \Cov(\hat{u}_{jl},\hat{u}_{j'l}) &=  \E[Z_{jl} Z_{j'l}] \E[s_{1}(i_1)s_{1}(i_1')]\cdots\E[s_{N}(i_N)s_{N}(i_N')]u_j u_j' -0 = 0. \numberthis \label{eq:eq281022_10_1}\\
  \Cov(\hat{u}_{jl}, \hat{v}_{j'l})&=  \E[Z_{jl} Z_{j'l}] \E[s_{1}(i_1)s_{1}(i_1')]\cdots\E[s_{N}(i_N)s_{N}(i_N')]u_j v_j' - 0 =0. \numberthis \label{eq:eq281022_11_1}
 \end{align*}
 Similarly,  we can prove the following holds true.
 \begin{align*}
     \Cov(\hat{v}_{jl},\hat{u}_{j'l}) & = 0.\\
     \Cov(\hat{v}_{jl},\hat{u}_{jl'}) & = 0.\\
     \Cov(\hat{v}_{jl},\hat{u}_{j'l'}) &= 0.
 \end{align*}
Let 
 \begin{align*}
     \boldsymbol{\Sigma}_d &:= \Cov\left( \sum_{j \in [d]} \Y_j\right) = \Cov\left(\begin{bmatrix}
    \Tilde{\uu} \\
    \Tilde{\vv}
    \end{bmatrix} \right).
 \end{align*}
Form Equations~\eqref{eq:eq281022_57}, \eqref{eq:eq281022_58} and \eqref{eq:eq281022_59}, we have
\begin{align*}
    \boldsymbol{\Sigma}_d &= \begin{bmatrix}
     \frac{\|\uu\|_2^2}{m} & \cdots & 0 & \frac{\langle \uu, \vv \rangle}{m} & \cdots & 0\\
     \vdots & \ddots & \vdots & \vdots & \ddots & \vdots\\
     0 & \cdots & \frac{\|\uu\|_2^2}{m} & 0 & \cdots &  \frac{\langle \uu, \vv \rangle}{m}\\
     \frac{\langle \uu, \vv \rangle}{m} & \cdots & 0 & \frac{\|\vv\|_2^2}{m} & \cdots & 0\\
     \vdots & \ddots & \vdots &\vdots & \ddots & \vdots\\
     0 & \cdots & \frac{\langle \uu, \vv \rangle}{m} & 0 & \cdots & \frac{\|\vv\|_2^2}{m} 
    \end{bmatrix}. \numberthis \label{eq:eq281022_60}
\end{align*}
For any unit vector $\a \in \R^{2m}$, define a random variable 
$$S_d := \sum_{j\in[d]}\a^T \Y_j = \a^T\begin{bmatrix}
    \Tilde{\uu}\\
    \Tilde{\vv}
    \end{bmatrix}. $$
The expected value and variance of $S_d$ is
\begin{align}
    \E[S_d] := \a^T \E\begin{bmatrix}
    \Tilde{\uu} \\
    \Tilde{\vv}
    \end{bmatrix} = 0 \numberthis \label{eq:eq101024_2}
\end{align}
and 
\begin{align}
    \sigma_d^2 &=\Var(S_d):= \a^T \Cov\left(\begin{bmatrix}
    \Tilde{\uu} \\
    \Tilde{\vv}
    \end{bmatrix}  \right) \a \notag\\
    & =\frac{1}{m} \left(\|\uu\|^2 \sum_{i=1}^m a_i^2 + \|\vv\|^2 \sum_{i=m+1}^{2m} a_i^2 + 2 \langle \uu, \vv \rangle \sum_{i=1}^m a_i a_{m+i} \right).
\end{align}

To complete the proof, we need to prove that Equation~\eqref{eq:eq051024_0} of Theorem~\ref{thm:clt_grpah_vec} holds true for some value of $\alpha$. We recall it as follows:
\begin{align*}
    \left(\frac{d}{M}\right)^{\frac{1}{\alpha}} \frac{M A}{\sigma_d}  \rightarrow 0 \text{ as } d \rightarrow \infty
\end{align*}
where $k$ is an integer, $A$ is upper bound on the norm of $\Y_j$'s and is $\|\uu\|_{\infty} + \|\vv\|_{\infty}$ and $M$ denotes the maximum degree of the dependency graph generated by the random vectors $\mathbf{Y}_{j}$, $j\in[d]$ and equal to $\sum_{i=1}^{N} d_{i} - N$. For ease of analysis, we assume that $d_{1} =\ldots=d_{N} = d^{\frac{1}{N}}$ and compute the following
\begin{align*}
 &\left(\frac{d}{M}\right)^{\frac{1}{k}} \frac{M A}{\sigma_d} = \left( \frac{d}{\sum_{i=1}^{N} d_{i} - N}\right)^{\frac{1}{k}} \cdot \left(\sum_{i=1}^{N} d_{i} - N \right) \cdot (\|\uu\|_{\infty}+ \|\vv\|_{\infty}) \notag\\
 &\hspace{3.3cm} \cdot \frac{ \sqrt{m} }{\sqrt{\|\uu\|^2 \sum_{i=1}^m a_i^2 + \|\vv\|^2 \sum_{i=m+1}^{2m} a_i^2 + 2 \langle \uu, \vv \rangle \sum_{i=1}^m a_i a_{m+i}}}\\
 &~= \sqrt{m} \, d^{\frac{1}{k}}\, \left(\sum_{i=1}^{N} d_{i} - N \right)^{1-\frac{1}{k}} \cdot\left(\sum_{i=1}^{N} d_{i} - N \right) \cdot (\|\uu\|_{\infty}+ \|\vv\|_{\infty}) \notag\\
&\hspace{3.3cm} \cdot \frac{ 1 }{\sqrt{\|\uu\|^2 \sum_{i=1}^m a_i^2 + \|\vv\|^2 \sum_{i=m+1}^{2m} a_i^2 + 2 \langle \uu, \vv \rangle \sum_{i=1}^m a_i a_{m+i}}}\\
&= \sqrt{m} \, d^{\frac{1}{k}} \, \left(Nd^{\frac{1}{N}} - N \right)^{1-\frac{1}{k}} \cdot (\|\uu\|_{\infty} + \|\vv\|_{\infty}) \notag\\
 &\hspace{3cm} \cdot \frac{ 1 }{\sqrt{\|\uu\|^2 \sum_{i=1}^m a_i^2 + \|\vv\|^2 \sum_{i=m+1}^{2m} a_i^2 + 2 \langle \uu, \vv \rangle \sum_{i=1}^m a_i a_{m+i}}}\\
 &\hspace{8cm}~\left[ \because d_{1} =  \ldots=d_{N} = d^{\frac{1}{N}} \right]\\
 &~= \sqrt{m} \, d^{\frac{1}{k}} \, \left(N d^{\frac{1}{N}} - N \right)^{1-\frac{1}{k}} \cdot  (\|\uu\|_{\infty} + \|\vv\|_{\infty}) \notag\\
 & \cdot \frac{ 1 }{\sqrt{\left( \sum_{i=1}^m a_i^2\right) \sum_{j \in [d]} u_j^2  + \left(\sum_{i=m+1}^{2m} a_i^2\right) \sum_{j \in [d]} v_j^2  + 2 \left(\sum_{i=1}^m a_i a_{m+i} \right) \sum_{j\in [d]} u_jv_j }}\\
 &= \frac{\sqrt{m} \, d^{\frac{1}{k}} \, \left(N d^{\frac{1}{N}} - N \right)^{1-\frac{1}{k}}}{\sqrt{d}} \cdot  (\|\uu\|_{\infty} + \|\vv\|_{\infty}) \notag\\
 & \cdot \frac{ 1 }{\sqrt{\left( \sum_{i=1}^m a_i^2\right) \sum_{j \in [d]} \frac{u_j^2}{d}  + \left(\sum_{i=m+1}^{2m} a_i^2\right) \sum_{j \in [d]} \frac{v_j^2}{d}  + 2 \left(\sum_{i=1}^m a_i a_{m+i} \right) \sum_{j\in [d]} \frac{u_jv_j}{d} }}\\
  &= \frac{\sqrt{m} \, d^{\frac{1}{k}} \, \left(N d^{\frac{1}{N}} - N \right)^{1-\frac{1}{k}}}{\sqrt{d}} \cdot  (\|\uu\|_{\infty} + \|\vv\|_{\infty}) \notag\\
 &\hspace{2cm} \cdot \frac{ 1 }{\sqrt{\left( \sum_{i=1}^m a_i^2\right) \E[u_j^2]  + \left(\sum_{i=m+1}^{2m} a_i^2\right) \E[v_j^2]  + 2 \left(\sum_{i=1}^m a_i a_{m+i} \right) \E[u_jv_j]}}\\
 &= \frac{\sqrt{m} \, d^{\frac{1}{k}} \, N^{1-\frac{1}{k}} \, \left( d^{\frac{1}{N}} - 1\right)^{1-\frac{1}{k}}}{\sqrt{d}} \cdot  (\|\uu\|_{\infty} + \|\vv\|_{\infty}) \notag\\
 & \hspace{2cm}\cdot \frac{ 1 }{\sqrt{\left( \sum_{i=1}^m a_i^2\right) \E[u_j^2]  + \left(\sum_{i=m+1}^{2m} a_i^2\right) \E[v_j^2]  + 2 \left(\sum_{i=1}^m a_i a_{m+i} \right) \E[u_jv_j]}}\\
 &\leq \frac{\sqrt{m} \, d^{\frac{1}{k}} \, N^{1-\frac{1}{k}} \, \left( d^{\frac{1}{N}}\right)^{1-\frac{1}{k}}}{\sqrt{d}} \cdot  (\|\uu\|_{\infty} + \|\vv\|_{\infty}) \notag\\
 & \hspace{2cm}\cdot \frac{ 1 }{\sqrt{\left( \sum_{i=1}^m a_i^2\right) \E[u_j^2]  + \left(\sum_{i=m+1}^{2m} a_i^2\right) \E[v_j^2]  + 2 \left(\sum_{i=1}^m a_i a_{m+i} \right) \E[u_jv_j]}}\\
 &= \frac{\sqrt{m} \,  N^{\left(1-\frac{1}{k}\right)} \, d^{\left(\frac{N+k-1}{kN}\right)}}{\sqrt{d} } \cdot  (\|\uu\|_{\infty} + \|\vv\|_{\infty}) \notag\\
 & \hspace{2cm}\cdot \frac{ 1 }{\sqrt{\left( \sum_{i=1}^m a_i^2\right) \E[u_j^2]  + \left(\sum_{i=m+1}^{2m} a_i^2\right) \E[v_j^2]  + 2 \left(\sum_{i=1}^m a_i a_{m+i} \right) \E[u_jv_j]}}\\
 &= \frac{\sqrt{m} \,  N^{\left(1-\frac{1}{k}\right)}}{d^{\left(\frac{1}{2}- \frac{N+k-1}{kN}\right)} } \cdot (\|\uu\|_{\infty} + \|\vv\|_{\infty}) \notag\\
 & \hspace{2cm}\cdot \frac{ 1 }{\sqrt{\left( \sum_{i=1}^m a_i^2\right) \E[u_j^2]  + \left(\sum_{i=m+1}^{2m} a_i^2\right) \E[v_j^2]  + 2 \left(\sum_{i=1}^m a_i a_{m+i} \right) \E[u_jv_j]}}\\
 &= \frac{\sqrt{m} \,  N^{\left(1-\frac{1}{k}\right)}}{d^{\left(\frac{kN-2N-2k+2}{2kN}\right)} } \cdot (\|\uu\|_{\infty} + \|\vv\|_{\infty}) \notag\\
  & \hspace{2cm}\cdot \frac{ 1 }{\sqrt{\left( \sum_{i=1}^m a_i^2\right) \E[u_j^2]  + \left(\sum_{i=m+1}^{2m} a_i^2\right) \E[v_j^2]  + 2 \left(\sum_{i=1}^m a_i a_{m+i} \right) \E[u_jv_j]}}\\
& ~ \rightarrow 0 \text{ as } d \rightarrow \infty \text{ for } k > \frac{2(N-1)}{(N-2)} \text{ and }  \sqrt{m} \,  N^{\left(1-\frac{1}{k}\right)} = o\left(d^{\left(\frac{kN-2N-2k+2}{2kN}\right)} \right). \numberthis \label{eq:eq181024_1} 
 \end{align*}
Equation~\eqref{eq:eq181024_1} hold true for $k > \frac{2(N-1)}{(N-2)}$ \text{ and }  $\sqrt{m} \,  N^{\left(1-\frac{1}{k}\right)} = o\left(d^{\left(\frac{kN-2N-2k+2}{2kN}\right)} \right)$, provided $ {\E[u_{j}^2]}$, $\E[v_{j}^2]$, $\E[u_jv_j]$, and,  $\|\uu\|_{\infty} + \|\vv\|_{\infty}$ is finite. For $k = 5$ and $d \rightarrow \infty$, from Equation~\eqref{eq:eq181024_1}, we have
\begin{align*}
     \left(\frac{d}{M}\right)^{\frac{1}{k}} \frac{M A}{\sigma_{d}} & \rightarrow 0 \text{ for } \sqrt{m} N^{\left(\frac{4}{5}\right)} = o\left(d^{\left(\frac{3N-8}{10N}\right)} \right). \numberthis \label{eq:eq181024_2}
\end{align*}
Thus, form Theorem~\ref{thm:clt_grpah_vec}, we have
\begin{align*}
    \begin{bmatrix}
\Tilde{\uu}  \\
\Tilde{\vv}
\end{bmatrix} = \sum_{j \in [d]} \Y_i \overset{\mathcal{D}}{\to} \mathcal{N}\left(\mathbf{0}, \boldsymbol{\Sigma}_{d} \right). \numberthis \label{eq:eq181024_3}
\end{align*}
Equation~\eqref{eq:eq181024_3} completes a proof of the theorem.

\end{proof}

{\color{black}
The following corollary states that the independence of the concatenation of the corresponding elements of $\hcsb{\uu}$ and $\hcsb{\vv}$ and is built on the results of Theorem \ref{thm:hcs_bi_variate} and Corollary~\ref{cor:hcs_srp_ind}.  We can prove it following a similar argument as that of Corollary~\ref{cor:indp_cs_srp}.
}
\begin{cor} \label{cor:hcs_srp_ind}
Let $\uu, \vv \in \R^{d}$ and $\hcsb{\uu}, \hcsb{\vv} \in \R^{m_1 \times \cdots \times m_N}$ be their corresponding higher count sketch vectors. Let $\Tilde{\uu}:=\Vecc(\hcsb{\uu}) \in \R^m$ and $\Tilde{\vv} = \Vecc(\hcsb{\vv}) \in \R^m$, where $m=\prod_{k=1}^N m_k$. If  $\forall ~ j \in [d]$,  ${\E[u_{j}^2]}$, $\E[v_{j}^2]$, $\E[u_jv_j]$ are finite and  $\sqrt{m} N^{\left(\frac{4}{5}\right)} = o\left(d^{\left(\frac{3N-8}{10N}\right)} \right)$, then as $d \rightarrow \infty$, the tuples $\begin{bmatrix} \Tilde{u}_1\\  \Tilde{v}_{1} \end{bmatrix}, \ldots, \begin{bmatrix} \Tilde{u}_m\\  \Tilde{v}_{m} \end{bmatrix}$ are independent.
\end{cor}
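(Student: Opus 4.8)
The plan is to mirror the argument of Corollary~\ref{cor:indp_cs_srp} exactly, replacing the role of Theorem~\ref{thm:cs_multi_variate_srp} with Theorem~\ref{thm:hcs_muti_variate_srp}. First I would invoke Theorem~\ref{thm:hcs_muti_variate_srp}, which under the stated hypotheses gives that the stacked vector $\begin{bmatrix} \Tilde{\uu} \\ \Tilde{\vv} \end{bmatrix} \in \R^{2m}$ is asymptotically $\mathcal{N}(\mathbf{0}, \boldsymbol{\Sigma}_d)$, where $\boldsymbol{\Sigma}_d$ has the block structure $\frac{1}{m}\begin{bmatrix} \|\uu\|^2 I_m & \langle \uu,\vv\rangle I_m \\ \langle \uu,\vv\rangle I_m & \|\vv\|^2 I_m \end{bmatrix}$. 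The key observation is that this covariance matrix, after permuting coordinates so that the pair $(\Tilde{u}_l, \Tilde{v}_l)$ sits together for each $l$, becomes block-diagonal with $m$ identical $2\times 2$ blocks $\Sig_d = \frac{1}{m}\begin{bmatrix} \|\uu\|^2 & \langle \uu,\vv\rangle \\ \langle \uu,\vv\rangle & \|\vv\|^2 \end{bmatrix}$.

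Next I would write out the joint Gaussian density of $\begin{bmatrix}\Tilde{\uu} \\ \Tilde{\vv}\end{bmatrix}$ explicitly, substitute $\boldsymbol{\Sigma}_d^{-1}$ (which has the analogous block form with $\|\vv\|^2$, $\|\uu\|^2$ on the diagonal blocks and $-\langle\uu,\vv\rangle$ on the off-diagonal blocks, all scaled by $m/(\|\uu\|^2\|\vv\|^2 - \langle\uu,\vv\rangle^2)$), and observe that the quadratic form in the exponent decomposes as $\sum_{l=1}^m \big(\Tilde{u}_l^2\|\vv\|^2 + \Tilde{v}_l^2\|\uu\|^2 - 2\Tilde{u}_l\Tilde{v}_l\langle\uu,\vv\rangle\big)$, and that the determinant factorizes as $\det(\boldsymbol{\Sigma}_d) = \det(\Sig_d)^m$. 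Consequently the joint density factors as a product over $l \in [m]$ of the bivariate Gaussian density of $(\Tilde{u}_l, \Tilde{v}_l)$, which by Theorem~\ref{thm:hcs_bi_variate} is exactly $\mathcal{N}(\mathbf{0}, \Sig_d)$. This establishes that the tuples $\begin{bmatrix}\Tilde{u}_1 \\ \Tilde{v}_1\end{bmatrix}, \ldots, \begin{bmatrix}\Tilde{u}_m \\ \Tilde{v}_m\end{bmatrix}$ are mutually independent.

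Since the paper explicitly says this corollary can be proven following a similar argument to Corollary~\ref{cor:indp_cs_srp}, I would keep the proof short: state the distribution from Theorem~\ref{thm:hcs_muti_variate_srp}, point to the block-diagonal structure of $\boldsymbol{\Sigma}_d$, note that the density of a jointly Gaussian vector factors across blocks precisely when the cross-block covariances vanish, and conclude. There is no real obstacle here — the only mild subtlety is bookkeeping the coordinate permutation that turns the stated $\boldsymbol{\Sigma}_d$ (organized as the $\uu$-block stacked above the $\vv$-block) into the $2\times 2$-block-diagonal form; this is purely cosmetic and does not affect independence. The substantive work has already been done in Theorems~\ref{thm:hcs_bi_variate} and~\ref{thm:hcs_muti_variate_srp}; this corollary is just the standard fact that zero covariance between blocks of a Gaussian vector implies independence of those blocks.

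\begin{proof}
From Theorem~\ref{thm:hcs_muti_variate_srp}, we have
\begin{align*}
    \begin{bmatrix} \Tilde{\uu} \\ \Tilde{\vv} \end{bmatrix} \sim \mathcal{N}(\mathbf{0}, \Sig_d)
\end{align*}
where $\mathbf{0} \in \R^{2m}$ and
\begin{align*}
    \Sig_d = \frac{1}{m}\begin{bmatrix}
          ||\uu||^2 & \cdots & 0 & \langle \uu, \vv \rangle & \cdots & 0\\
          \vdots & \ddots & \vdots & \vdots & \ddots & \vdots\\
          0 & \cdots & ||\uu||^2 &  0 & \cdots & \langle \uu, \vv \rangle  \\
          \langle \uu, \vv \rangle & \cdots & 0 & ||\vv||^2 & \cdots & 0 \\
          \vdots & \ddots & \vdots & \vdots & \ddots & \vdots \\
          0 & \cdots & \langle \uu, \vv \rangle &  0 & \cdots & ||\vv||^2
     \end{bmatrix}.
\end{align*}
Following the same computation as in the proof of Corollary~\ref{cor:indp_cs_srp} (with $\cs(\uu)_l$ replaced by $\Tilde{u}_l$ and $\cs(\vv)_l$ replaced by $\Tilde{v}_l$), the joint density of $\begin{bmatrix}\Tilde{\uu} & \Tilde{\vv}\end{bmatrix}^T$ factors as
\begin{align*}
    &\Pr\left(
                \begin{bmatrix}
                    \Tilde{u}_{1}\\
                    \vdots\\
                    \Tilde{u}_{m}\\
                    \Tilde{v}_{1}\\
                    \vdots \\
                    \Tilde{v}_{m}
                \end{bmatrix}
                = \begin{bmatrix}
                    \hat{u}_{1}\\
                    \vdots \\
                    \hat{u}_{m}\\
                    \hat{v}_{1}\\
                    \vdots \\
                    \hat{v}_{m}
                \end{bmatrix} \right)
    = \Pr\left(\begin{bmatrix}
    \Tilde{u}_{1}\\
    \Tilde{v}_{1}
    \end{bmatrix}
    = \begin{bmatrix}
        \hat{u}_{1}\\
        \hat{v}_{1}
    \end{bmatrix} \right)
    \times \cdots \times
    \Pr\left(\begin{bmatrix}
    \Tilde{u}_{m}\\
    \Tilde{v}_{m}
    \end{bmatrix}
    = \begin{bmatrix}
        \hat{u}_{m}\\
        \hat{v}_{m}
    \end{bmatrix}
     \right),
\end{align*}
where each factor is the bivariate normal density $\mathcal{N}(\mathbf{0}, \Sig_d)$ given by Theorem~\ref{thm:hcs_bi_variate}. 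This factorization uses the fact that the cross-block entries of $\Sig_d^{-1}$ couple only $\Tilde{u}_l$ with $\Tilde{v}_l$ for the same index $l$, so the quadratic form in the exponent splits as a sum over $l \in [m]$, and $\det(\Sig_d)$ factors accordingly. Hence the tuples $\begin{bmatrix} \Tilde{u}_1\\  \Tilde{v}_{1} \end{bmatrix}, \ldots, \begin{bmatrix} \Tilde{u}_m\\  \Tilde{v}_{m} \end{bmatrix}$ are independent, which completes the proof.
\end{proof}
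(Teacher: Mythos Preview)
Your proposal is correct and takes exactly the approach the paper intends: the paper does not spell out a proof for this corollary but explicitly says it follows the same argument as Corollary~\ref{cor:indp_cs_srp}, built on Theorems~\ref{thm:hcs_bi_variate} and~\ref{thm:hcs_muti_variate_srp}, and your write-up does precisely that factorization of the joint Gaussian density along the block-diagonal structure of $\Sig_d$.
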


{\color{black}
The subsequent theorem states that the \HCSELSH~ is a valid LSH. A proof of the theorem follows similar proof steps to that of Theorem~\ref{thm:ubiased_cssrp} and requires the results of Theorem~\ref{thm:hcs_bi_variate} and Corollary~\ref{cor:hcs_srp_ind}. 
}
\begin{thm} \label{thm:hcssrp_lsh_property}
Let $\xi'(\uu)$, $\xi'(\vv)$ be $m$-dimensional binary vector of $\uu,\vv\in \R^d$ respectively, obtained using our proposal (Definition~\ref{def:hcs_consine_lsh}). Then for $d \rightarrow \infty$ and $\sqrt{m} N^{\left(\frac{4}{5}\right)} = o\left(d^{\left(\frac{3N-8}{10N}\right)} \right)$, 
    $\Pr[ \xi'(\uu)_{l} = \xi'(\vv)_{l} ]  = 1-\frac{\theta_{(\uu,\vv)}}{\pi}$ \text{ and } 
    $\Pr[\xi'(\uu)_{1} = \xi'(\vv)_{1}, \ldots, \xi'(\uu)_{m} = \xi'(\vv)_{m}] = \left(1-\frac{\theta_{(\uu,\vv)}}{\pi}\right)^m$
where 
$\theta_{(\uu,\vv)}=\cos^{-1}\left(\frac{\uu^T\vv}{\|\uu\| \|\vv\|}\right)$ denotes the angular similarity between vector $\uu$ and $\vv$.
\end{thm}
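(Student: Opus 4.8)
The plan is to follow exactly the template of the proof of Theorem~\ref{thm:ubiased_cssrp}, replacing the count-sketch ingredients by their higher-order counterparts supplied by Theorem~\ref{thm:hcs_bi_variate} and Corollary~\ref{cor:hcs_srp_ind}. First I would unpack Definition~\ref{def:hcs_consine_lsh}: writing $\Tilde{\uu} = \Vecc(\hcsb{\uu})$ and $\Tilde{\vv} = \Vecc(\hcsb{\vv})$, we have $\xi'(\uu)_l = \sgn(\Tilde{u}_l)$ and $\xi'(\vv)_l = \sgn(\Tilde{v}_l)$ for each $l \in [m]$, so up to the null events $\{\Tilde{u}_l = 0\}$ and $\{\Tilde{v}_l = 0\}$ the collision event $\{\xi'(\uu)_l = \xi'(\vv)_l\}$ coincides with $\{\sgn(\Tilde{u}_l) = \sgn(\Tilde{v}_l)\}$ in the usual symmetric sense.

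Next I would invoke Theorem~\ref{thm:hcs_bi_variate}: under the stated moment hypotheses and $\sqrt{m}N^{4/5} = o\!\left(d^{(3N-8)/(10N)}\right)$, the pair $(\Tilde{u}_l, \Tilde{v}_l)^T$ converges in distribution as $d \to \infty$ to a centered bivariate normal with covariance $\Sig_d = \tfrac1m\begin{bmatrix}\|\uu\|^2 & \langle\uu,\vv\rangle\\ \langle\uu,\vv\rangle & \|\vv\|^2\end{bmatrix}$, whose correlation coefficient is $\rho = \langle\uu,\vv\rangle/(\|\uu\|\|\vv\|) = \cos\theta_{(\uu,\vv)}$. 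Since the sign-agreement event is invariant under positive rescaling of each coordinate, the factor $\sqrt m$ in Definition~\ref{def:hcs_consine_lsh} and the $1/m$ in $\Sig_d$ are immaterial. Because the limiting law is absolutely continuous and nondegenerate (assuming $\uu,\vv$ are not parallel, $\langle\uu,\vv\rangle^2 < \|\uu\|^2\|\vv\|^2$), the boundary $\{xy=0\}$ of the set $\{\sgn(x)=\sgn(y)\}$ has zero measure under it, so convergence in distribution upgrades (via the portmanteau theorem) to convergence of the collision probability; the orthant probability of a centered bivariate Gaussian with correlation $\rho$ is the classical $1 - \tfrac1\pi\arccos\rho$, which is precisely Lemma~6 of Li \etal~\cite{li2006very}. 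This gives
\[
\Pr[\xi'(\uu)_l = \xi'(\vv)_l] \;\longrightarrow\; 1 - \frac{\theta_{(\uu,\vv)}}{\pi}.
\]
For the product form I would then appeal to Corollary~\ref{cor:hcs_srp_ind}, which under the same hypotheses makes the tuples $(\Tilde{u}_1,\Tilde{v}_1),\dots,(\Tilde{u}_m,\Tilde{v}_m)$ asymptotically mutually independent; hence $\bigcap_{l=1}^m\{\xi'(\uu)_l=\xi'(\vv)_l\}$ factorizes, and combined with the single-coordinate probability this yields $\Pr[\xi'(\uu)_1=\xi'(\vv)_1,\dots,\xi'(\uu)_m=\xi'(\vv)_m] = \left(1-\theta_{(\uu,\vv)}/\pi\right)^m$. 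Monotonicity of $1-\theta_{(\uu,\vv)}/\pi$ in the cosine similarity then delivers the $(R_1,R_2,P_1^m,P_2^m)$-sensitivity through Definition~\ref{def:LSH}, exactly as argued for \CSSRP.

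The main obstacle is not algebraic but the analytic passage from the distributional limit of Theorem~\ref{thm:hcs_bi_variate} to convergence of the discrete collision probability: one must verify that the discontinuity set of the sign-comparison map is a null set for the limiting bivariate normal, which needs nondegeneracy of $\Sig_d$. The parallel (degenerate) case must be treated separately, but there collision occurs almost surely when $\theta_{(\uu,\vv)}=0$ and never when $\theta_{(\uu,\vv)}=\pi$, consistent with $1-\theta_{(\uu,\vv)}/\pi \in \{1,0\}$. A secondary subtlety is that Corollary~\ref{cor:hcs_srp_ind} furnishes independence only in the $d\to\infty$ limit, so the product identity is itself asymptotic — which is exactly what the statement claims.
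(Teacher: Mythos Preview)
Your proposal is correct and follows exactly the route the paper indicates: the paper does not spell out a proof but states that it ``follows similar proof steps to that of Theorem~\ref{thm:ubiased_cssrp} and requires the results of Theorem~\ref{thm:hcs_bi_variate} and Corollary~\ref{cor:hcs_srp_ind},'' which is precisely what you do. One minor slip: Definition~\ref{def:hcs_consine_lsh} has no $\sqrt{m}$ factor (that appears only in \HCSELSH, Definition~\ref{def:HCS_Eucl_LSH}), but since you correctly note that sign is scale-invariant this does not affect the argument.
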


Suppose $S= \frac{\uu^T\vv}{\|\uu\| \|\vv\|}$ denotes the cosine similarity between $\uu$ and $\vv$. From Theorem~\ref{thm:hcssrp_lsh_property}, it is evident that the probability of a collision decreases monotonically with $S= \frac{\uu^T\vv}{\|\uu\| \|\vv\|}$. Hence,  due to Definition~\ref{def:LSH}, our proposal \HCSSRP ~defined in Definition~\ref{def:hcs_consine_lsh} is $(R_1,R_2, P_{1}^{m}, P_2^{m})$-sensitive for $R_1 = S$, $R_2 = cS$, $P_{1} = (1-\cos^{-1}(S)/\pi)$  and $P_{2} = (1-\cos^{-1}(cS)/\pi)$.

 \begin{remark}
   Compared to \SRP, the evaluation of the proposed \HCSSRP ~(Definition~\ref{def:hcs_consine_lsh}) hash function requires a lesser number of operations as well as less space. The time and space complexity of \SRP~(Definition~\ref{def:srp}) to generate an $m$-sized hash code of a $d$-dimensional input vector is $O(md)$, whereas, the time complexity of our proposal \HCSELSH~is $O(d)$, and its space complexity is $O(N\sqrt[N]{d})  \ll O(d)$ when $N=o(d)$.
 \end{remark}
 }
\section{Experiment}~\label{sec:experiments}
\paragraph*{\textbf{Machine Description:}} We performed all the experiments on a machine having the following configurations:  Intel(R) Core(TM) i7-8750H CPU  having a 2.20GHz processor and 16 GB RAM.

{\color{black}
\paragraph*{\textbf{Dataset Description:}} We evaluate our proposal on the tasks of retrieving the top-$k$ nearest neighbour using the following datasets:
\begin{itemize}
\item \textbf{Synthetic:} We randomly generate 50,000 data points, each representing a vector in a $10,000$-dimensional space.
  \item \textbf{Arcene Dataset~\cite{UCI,10.5555/2976040.2976109}:} This dataset contains $900$ real-valued data points in the $10,000$ dimension and is used for the cancer classification tasks.
  
\item \textbf{Gene expression cancer RNA-Seq~\cite{UCI}:} This dataset contains a random extraction of gene expressions of patients with different tumour types. The dataset contains $801$ data samples in $20,531$ dimensions.

\item \textbf{PEMS-SF~\cite{Dua:2019}:} This dataset contains the daily occupancy rates of various car lanes on the San Francisco Bay Freeway over a $15$-month period. This dataset consists of  $440$ real-valued data points in $1,38,672$ dimensions.

\item \textbf{Gist\footnote{http://corpus-texmex.irisa.fr/}:} The Gist dataset comprises $1$ million high-dimensional vectors, each representing a $960$-dimensional Gist feature of an image. For our experiments, we used the first $5,00,000$ data points.
\end{itemize}
We use the following performance metrics to compare the performance of the proposed methods with the baseline algorithms:
\begin{itemize}
    \item[--]  storage require to generate an $m$-length hash code, 
    \item[--] the running time, and,
    \item[--] indexing quality, measured by the trade-off between recall and query time.
\end{itemize}
}

\noindent\textbf{Space complexity analysis:} In order to compare the space usage of our proposals with the baseline method, we computed the number of bytes required to store their projection matrix/tensor for different values of the hash code size $m$. Figure~\ref{fig:space} depicts the comparison of the space used by our proposals and the baseline methods for various  values of hash code size $m$.\\
\begin{figure*}[h]
    \centering    \includegraphics[width=\textwidth]{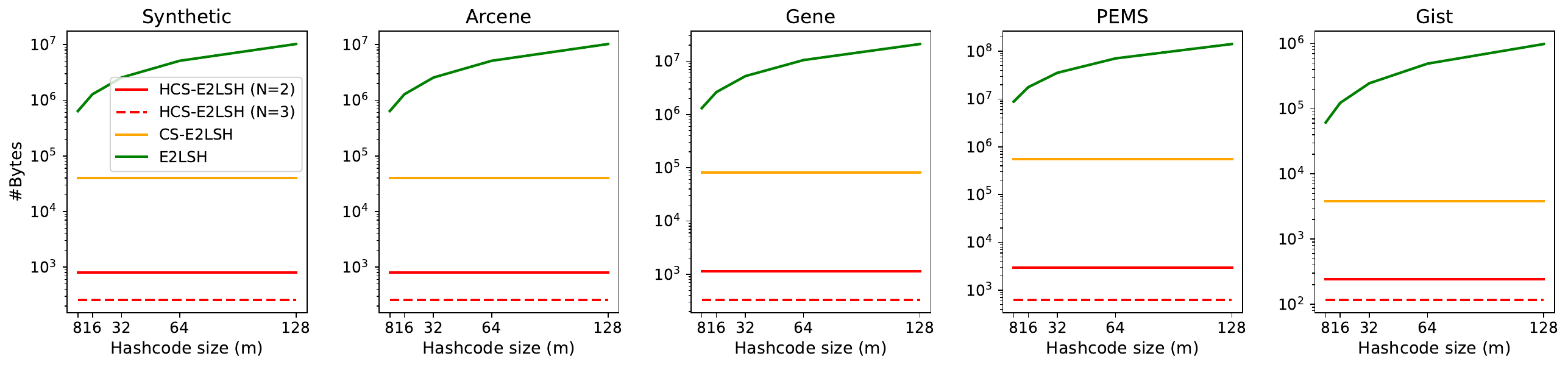}
    \vspace{-0.4cm}
    \caption{Comparison among \HCSELSH, \CSELSH, and \ELSH ~based on space used. \#Bytes denotes the number of bytes used, and $N$ is the order of the HCS tensor. The smaller the values of  \#Bytes indicates a more space-efficient algorithm. }
    \label{fig:space}
\end{figure*}
\noindent\textbf{Insight:} It is evident from Figure~\ref{fig:space} that the amount of space needed for the \HCSELSH~projection tensor decreases as the order (\textit{i.e.}, the value of $N$) increases.  The space usage of \ELSH~is the highest among all algorithms, and for a fixed size of the input vector, it increases with an increase in the hash code size $m$, whereas for our proposals (\CSELSH ~and \HCSELSH), it remains constant. The space requirement of \HCSELSH~is smallest among all the baselines. It is worth mentioning that these experimental findings on space usage align with theoretical space complexity bounds mentioned in Table~\ref{tab:tab1}.\\

\noindent \textbf{Running time analysis:}
Our objective is to compare the running time among the baselines. For this purpose, we generate an $m$-sized hash code for an input vector using \ELSH, \CSELSH, and \HCSELSH ~for different values of $m$. We note the time each technique takes to generate the hash code for all the datasets. We repeat this procedure several times and compute the average. We performed this procedure on a randomly sampled input vector for each dataset and presented the running time comparison among \ELSH, \CSELSH, and \HCSELSH~ \textit{w.r.t.} hash code size $m$ in Figure~\ref{fig:time}.
\begin{figure*}[h]
    \centering
\includegraphics[width=\textwidth]{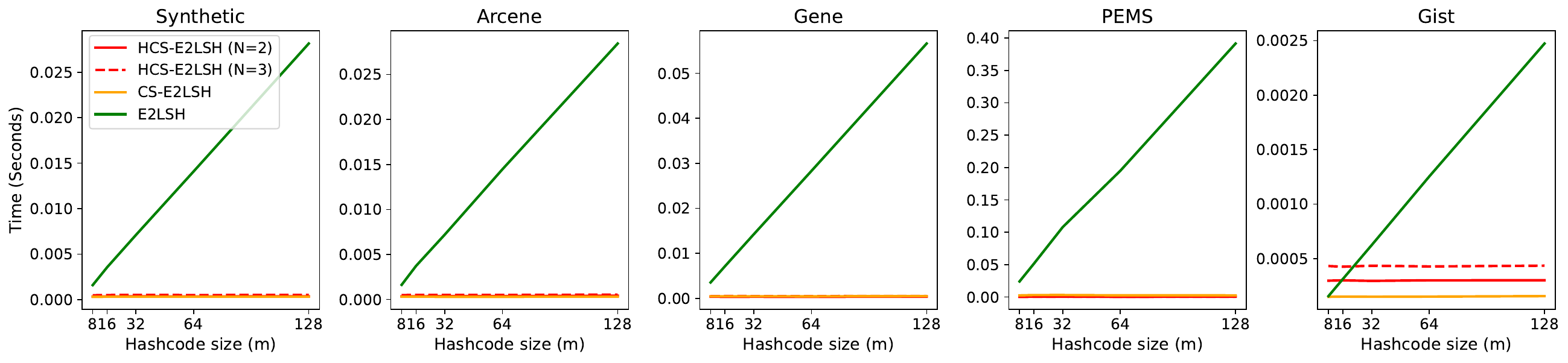}
\vspace{-0.4cm}
    \caption{Comparison on average running time for computing hashcode among \HCSELSH, \CSELSH~and \ELSH~for different values of hash code size $m$. $N$ is the order of the HCS tensor. The smaller value of time indicates a more time-efficient algorithm.}
    \label{fig:time}
\end{figure*}

\noindent\textbf{Insight:} From Figure~\ref{fig:time}, it is evident that the average running time of \ELSH~is the highest. The running time of \ELSH~increases with an increase in the hash code size $m$, whereas it remains constant for our proposals \HCSELSH~ and \CSELSH. These experimental findings align with our theoretical time complexity results in Table~\ref{tab:tab1}.  \textcolor{black}{For the Gist dataset, \HCSELSH~incurs higher runtime than \ELSH~and \CSELSH~at smaller hashcode sizes because for modest input and smaller hashcode sizes, the overhead of tensor reshaping and multi-mode projections outweighs the computational benefits of \HCSELSH. But for other datasets with higher input dimensions, these computational benefits are evident even at smaller hashcode sizes.} \\

We note that the projection step is common for locality-sensitive hash functions for cosine similarity and Euclidean distance. Our improved variants for \SRP ~and \ELSH~improve this projection step using the idea of replacing them with a sparse projection matrix (or tensor). Thus, the space and running time improvement hold for our improved variants of  \SRP ~ and \ELSH. \\


\noindent \textbf{Indexing quality:} 
We aim to show that our proposed improvements to \ELSH ~and \SRP ~exhibit similar behavior to their respective baselines in terms of hash collision properties and query retrieval performance. To do so, we randomly divide the datasets into $90\%$ and $10\%$ partitions, referring to the former as the training partition and the later as the query partition.  For the synthetic and Gist datasets, we use the last 100 points as the query set and the rest as a training set.
 For each point in the query partition, we record the top-$k$ similar points (under cosine similarity and Euclidean distance) in the original datasets. We refer to this set as $S$. We used standard ($m$, $L$) parameterized LSH~\cite{indyk1998approximate} algorithms to retrieve the top-$k$ elements. In the ($m$, $L$) parametrized LSH algorithm, we generate $L$ independent hash tables of $m$-sized hash codes and recover the top-$k$ elements for each query item using these tables (see~\cite{indyk1998approximate} for more detail). In case of \ELSH~(Definition~\ref{def:e2lsh}) and \SRP~(Definition~\ref{def:srp}), to generate an $m$-sized hash code we need to concatenate the $m$ hash functions; however, for our proposals \CSELSH~(Definition~\ref{def:CS_Eucl_LSH}), \HCSELSH~(Definition~\ref{def:HCS_Eucl_LSH}), \CSSRP~(Definition~\ref{def:cs_consine_lsh}), and \HCSSRP~(Definition~\ref{def:hcs_consine_lsh}), we can generate an $m$-sized hash code just using a single evaluation of hash functions.

In \HCSSRP ~and  \HCSELSH, we consider the input vector as $N$ order tensor, \textit{i.e.},  we reshape the $d$ dimensional input vector to $d_1 \times \cdots \times d_N$ dimensional array such that $d = \prod_{i=1}^N d_{i}$; $d_{i}$'s are positive integers. In case $d$ is not factorable as a product of $N$ integers, we pad the input vector with zero values to make it factorable. For \HCSSRP ~and  \HCSELSH, we experimented with $N$ equal to $2$ and $3$. In our experiments, we fix $k=50$, $m = 8$, and vary $L$ from 1 to 50 for both cosine similarity and Euclidean distance. For each value of $L$, we retrieve the top-50 nearest neighbors for each query point using the corresponding similarity measure (i.e., cosine or Euclidean) from the respective hash tables. The retrieved set is denoted as $S'$. We compute the \textit{recall} for each query point as: $\texttt{recall} = \frac{|S \cap S'|}{|S|}$ where $S$ is the ground-truth set of top-50 nearest neighbors. We then average the recall across all query points to obtain the recall for that setting. For each value of $L$, we also record the \textit{total query time}, defined as the cumulative time taken to process all query points. This entire process is repeated 20 times for each $L$, and we report the average recall and total query time across runs as the representative performance measure for each configuration. We start with $L = 1$ and increment $L$ by $1$ until the average recall reaches $0.99$ or $L = 50$, whichever comes first.

We run the above experiment on our datasets using Euclidean distance as the underlying similarity measure. We use \ELSH, \CSELSH, and \HCSELSH~as baselines, and present the recall vs. total query time and recall vs. $L$ trade-offs in Figures~\ref{fig:e2lsh_rec_time} and~\ref{fig:e2lsh_rec_l}, respectively. We repeat the same experiment for cosine similarity using \SRP, \CSSRP, and \HCSSRP, and present the corresponding results in Figures \ref{fig:srp_rec_time} and \ref{fig:srp_rec_l}

\textbf{Note:} In these experiments, total query time reflects the cost of retrieving the top-$k$ elements across $L$ hash tables. The recall vs. 
$L$ plots indicate how effectively each method maps query points and true neighbors to the same bucket, highlighting the proportion of actual top-$k$ neighbors retrieved. The recall vs. query time plots, in turn, reveal whether a method incurs more false negatives while maintaining similar true positive rates. Together, these trade-offs empirically evaluate the collision behavior and retrieval effectiveness of the proposed methods.
\begin{figure}
    \centering
    \includegraphics[width=\textwidth]{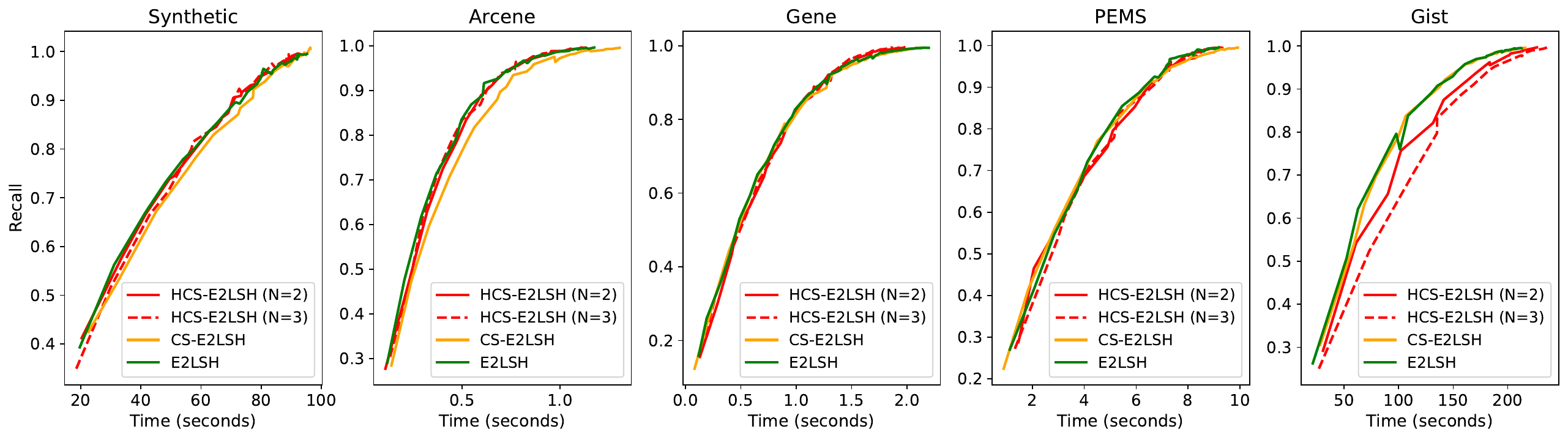}
    \caption{Trade-off between average recall and total query time over $20$ runs for \HCSELSH, \CSELSH, and \ELSH. $N$ denotes the order of the high-order count sketch tensor.}
    \label{fig:e2lsh_rec_time}
\end{figure}
\begin{figure}
    \centering
    \includegraphics[width=\textwidth]{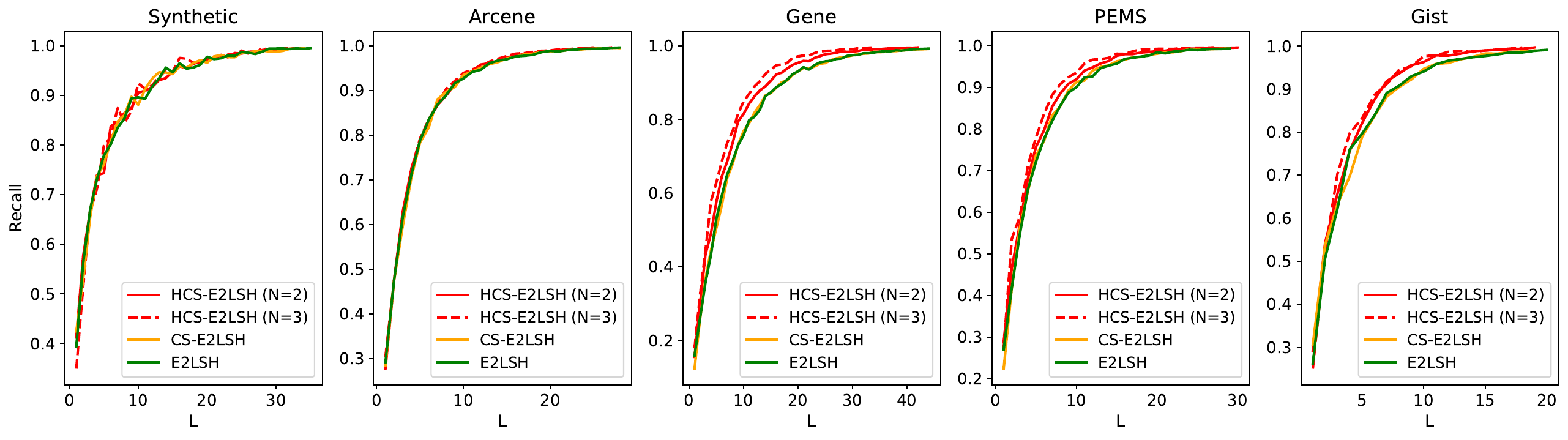}
    \caption{Trade-off between average recall and $L$ for \HCSELSH, \CSELSH, and \ELSH.}
    \label{fig:e2lsh_rec_l}
\end{figure}
\begin{figure}[H]
    \centering
    \includegraphics[width=\textwidth]{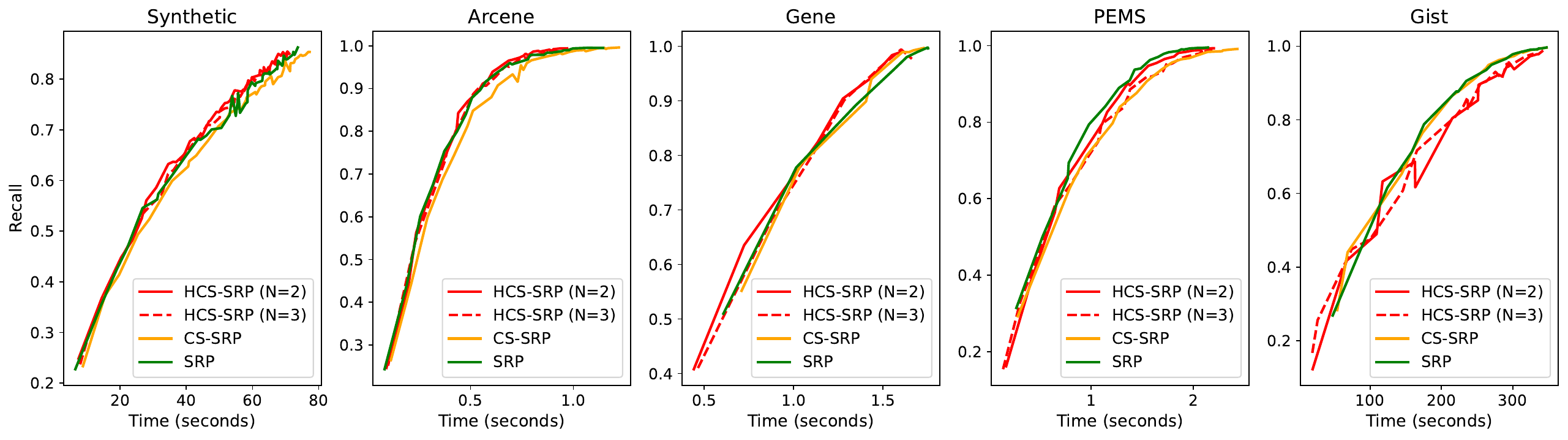}
    \caption{ Trade-off between average recall and total query time over $20$ runs for \HCSSRP, \CSSRP, and \SRP.}
    \label{fig:srp_rec_time}
\end{figure}

\begin{figure}[H]
    \centering
    \includegraphics[width=\textwidth]{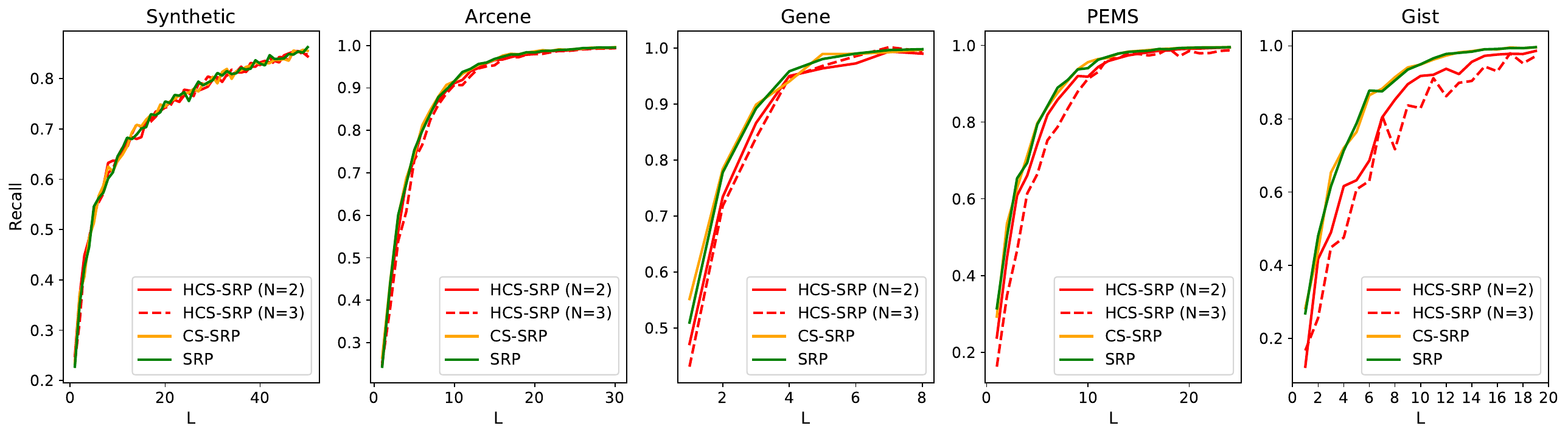}
    \caption{Trade-off between average recall and $L$ for \HCSSRP, \CSSRP, and \SRP.}
    \label{fig:srp_rec_l}
\end{figure}

\noindent \textbf{Insight:} From Figures~\ref{fig:e2lsh_rec_time} and~\ref{fig:e2lsh_rec_l}, we observe that across all datasets—except Gist—the average recall of \HCSELSH, \CSELSH, and \E2LSH ~is nearly identical, indicating that the proposed methods empirically preserve the same hash collision behavior as the baseline. For the Gist dataset, \HCSELSH ~shows slightly lower recall compared to \E2LSH ~and \CSELSH, although the results remain comparable. This deviation can be attributed to the Gist dataset’s moderate dimensionality, where the theoretical asymptotic behavior may not fully take effect (see Theorem~\ref{thm:HCS_Eucl_LSH_property}), causing \HCSELSH~to deviate slightly from the baseline performance.  Similar insights we can infer for our proposals \CSSRP ~and \HCSSRP, from the Figures~\ref{fig:srp_rec_time} and \ref{fig:srp_rec_l} in case of locality-sensitive hash functions for cosine similarity.

\section{Conclusion} \label{sec:conclusion}
In this work, we considered the locality-sensitive hash function for Euclidean distance and cosine similarity, namely, \ELSH ~and \SRP, respectively. The time and space complexity of these LSHs to generate an $m$ dimension hash code for a $d$-dimensional vector is $O(md)$, which becomes unaffordable for higher values of $m$ and $d$. We addressed this and suggested improved proposals for \ELSH~and \SRP. At the core, our idea is to use a sparse projection matrix from the count sketch~\cite{charikar2002similarity} and the higher-order count sketch~\cite{shi2019higher}. Our proposals for Euclidean distance and cosine similarity offer space and time-efficient algorithms as compared to the baselines  \ELSH ~and \SRP, respectively, while simultaneously achieving comparable accuracy for the top-$k$ nearest neighbour search task. Compared to \CSELSH ~and \CSSRP,  \HCSELSH ~and \HCSSRP ~are more space efficient; however, all the proposals enjoy the same time complexity, \textit{i.e.}, $O(d)$ for a $d$-dimensional input vector. We give a rigorous theoretical analysis of our proposals and complement them via empirical experiments on several datasets. Our proposals (\CSELSH, \HCSELSH, \CSSRP, and \HCSSRP) are simple, effective, easy to implement, and can be easily adopted in practice.

 \bibliographystyle{elsarticle-num-names} 
\bibliography{references}
\newpage

\end{document}